\begin{document}

\title{Exact Community Recovery in the Geometric SBM}
\author{Julia Gaudio\thanks{(\url{julia.gaudio@northwestern.edu}) Department of Industrial Engineering and Management Sciences, Northwestern
University} \and Xiaochun Niu\thanks{(\url{xiaochunniu2024@u.northwestern.edu}) Department of Industrial Engineering and Management Sciences, Northwestern
University} \and Ermin Wei\thanks{(\url{ermin.wei@northwestern.edu}) Department of Electrical and Computer Engineering and Department of Industrial Engineering and Management Sciences, Northwestern
University}}
\date{}
\maketitle

\begin{abstract} We study the problem of exact community recovery in the Geometric Stochastic Block Model (GSBM), where each vertex has an unknown community label as well as a known position, generated according to a Poisson point process in $\mathbb{R}^d$. Edges are formed independently conditioned on the community labels and positions, where vertices may only be connected by an edge if they are within a prescribed distance of each other. The GSBM thus favors the formation of dense local subgraphs, which commonly occur in real-world networks, a property that makes the GSBM qualitatively very different from the standard Stochastic Block Model (SBM). We propose a linear-time algorithm for exact community recovery, which succeeds down to the information-theoretic threshold, confirming a conjecture of Abbe, Baccelli, and Sankararaman. The algorithm involves two phases. The first phase exploits the density of local subgraphs to propagate estimated community labels among sufficiently occupied subregions, and produces an almost-exact vertex labeling. The second phase then refines the initial labels using a Poisson testing procedure. Thus, the GSBM enjoys \emph{local to global amplification} just as the SBM, with the advantage of admitting an information-theoretically optimal, linear-time algorithm.\end{abstract}

\section{Introduction}

  Community detection is the problem of identifying latent community structure in a network. In 1983, Holland, Laskey, and Leinhardt \cite{Holland1983} introduced the \emph{Stochastic Block Model} (SBM), a probabilistic model which generates graphs with community structure, where edges are generated independently conditioned on community labels.
  Since then, the SBM has been intensively studied in the probability, statistics, machine learning, and information theory communities. Many community recovery problems are now well-understood; for example, the fundamental limits of the exact recovery problem are known, and there is a corresponding efficient algorithm that achieves those limits \cite{Abbe2015}. For an overview of theoretical developments and open questions, please see the survey of Abbe \cite{Abbe2017}. 
  
  While the SBM is a powerful model, its simplicity fails to capture certain properties that occur in real-world networks. In particular, social networks typically contain many triangles; a given pair of people are more likely to be friends if they already have a friend in common \cite{rapoport1953spread}. The SBM by its very nature does not capture this transitive behavior, since edges are formed independently, conditioned on the community assignments. To address this shortcoming, Baccelli and Sankararaman \cite{Sankararaman2018} introduced a spatial random graph model, which we refer to as the Geometric Stochastic Block Model (GSBM). In the GSBM, vertices are generated according to a Poisson point process in a bounded region of $\mathbb{R}^d$. Each vertex is randomly assigned one of two community labels, with equal probability. A given pair of vertices $(u,v)$ is connected by an edge with a probability that depends on  both the community labels of $u$ and $v$ as well as their distance. Edges are formed independently, conditioned on the community assignments and locations. The geometric embedding thus governs the transitive edge behavior. The goal is to determine the communities of the vertices, observing the edges and the locations. In a follow-up work, Abbe, Sankararaman, and Baccelli \cite{Abbe2021} studied both partial recovery in sparse graphs, as well as exact recovery in logarithmic-degree graphs. Their work established a phase transition for both partial and exact recovery, in terms of the Poisson intensity parameter $\lambda$. The critical value of $\lambda$ was identified in some special cases of the sparse model, but a precise characterization of the information-theoretic threshold for exact recovery in the logarithmic regime was left open.
  
  Our work resolves this gap, by identifying the information-theoretic threshold for exact recovery in the logarithmic degree regime (and confirming a conjecture of Abbe et al \cite{Abbe2021}). Additionally, we propose a polynomial-time algorithm achieving the information-theoretic threshold. The algorithm consists of two phases: the first phase produces a preliminary almost-exact labeling through a local label propagation scheme, while the second phase refines the initial labels to achieve exact recovery. At a high level, the algorithm bears some similarity to prior works on the SBM using a two-phase approach \cite{Abbe2015,Mossel2015}. Our work therefore shows that just like the SBM, the GSBM exhibits the so-called \emph{local to global amplification} phenomenon \cite{Abbe2017}, meaning that exact recovery is achievable whenever the probability of misclassifying an individual vertex, given the labels of the remaining $n-1$ vertices, is $o(1/n)$. However, the GSBM is qualitatively very different from the SBM, and it is not apparent at the outset that it should exhibit local to global amplification. In particular, the GSBM is not a low-rank model, suggesting that approaches such as spectral methods \cite{Abbe2021} and semidefinite programming \cite{Hajek2016}, which exploit the low-rank structure of the SBM, may fail in the GSBM. In order to achieve almost exact recovery in the GSBM, we instead use the density of local subgraphs to propagate labels. Our propagation scheme allows us to achieve almost exact recovery, and also ensures that no local region has too many misclassified vertices. The dispersion of errors is crucial to showing that labels can be correctly refined in the second phase.
  
  Notably, our algorithm runs in linear time (where the input size is the number of edges). This is in contrast with the SBM, for which no statistically optimal linear-time algorithm for exact recovery has been proposed. To our knowledge, the best-known runtime for the SBM in the logarithmic degree regime is achieved by the spectral algorithm of Abbe et al \cite{Abbe2020}, which runs in $O(n \log^2 n)$ time, while the number of edges is $\Theta(n \log n)$. More recent work of Cohen--Addad et al \cite{Cohen2022} proposed a linear-time algorithm for the SBM, but the algorithm was not shown to achieve the information-theoretic threshold for exact recovery. Intuitively, the strong local interactions in the GSBM enable more efficient algorithms than what seems to be possible in the SBM. 
  
  \vskip6pt
  \paragraph*{Notation and organization.} We write $[n]=\{1,\cdots, n\}$. We use Bachmann--Landau notation with respect to the parameter $n$; i.e. $o(1)$ means $o_n(1)$. $\text{Bin}$ denotes the binomial distribution. For $\mu \in \mathbb{R}^m$, $\text{Poisson}(\mu)$ denotes the $m$-type Poisson distribution. 
  
  The rest of the paper is organized as follows. Section \ref{sec:results} describes the exact recovery problem as well as our main result (Theorem \ref{theorem:exact-recovery}). The exact recovery algorithm is given in Section \ref{sec:algorithm}, along with an outline of the proof of exact recovery. Sections \ref{sec:proof-phase-I} and \ref{sec:proof-phase-II} include the proofs of the two phases of the algorithm. Section \ref{sec:impossibility} contains the proof of impossibility (Theorem \ref{theorem:impossibility-general}) (a slight generalization of \cite[Theorem 3.7]{Abbe2021} to cover the disassortative case). Section \ref{sec:related-work} includes additional related work. We conclude with future directions in Section \ref{sec:future-directions}.
  
  \section{Model and main results}\label{sec:results}
  We now describe the GSBM in the logarithmic degree regime, where edges are formed only between sufficiently close vertices, as proposed in \cite{Sankararaman2018,Abbe2021}. 
  \begin{definition}\label{def:gsbm}
  Let $\lambda > 0$, $a,b\in[0,1]$, and $a\neq b$ be constants, and let $d \in \mathbb{N}$. A graph $G$ is sampled from $\text{GSBM}(\lambda, n, a, b, d)$ according to the following steps:
  \begin{enumerate}
      \item The locations of vertices are determined according to a homogeneous Poisson point process\footnote{The definition and construction of a homogeneous Poisson point process are provided in Definition \ref{def:PPP}.} with intensity $\lambda$ in the region $\cS_{d,n} := [-n^{1/d}/2, n^{1/d}/2 ]^d \subset \mathbb{R}^d$. Let $V\subset\cS_{d,n}$ denote the vertex set. \label{sample:step-1}
      \item Community labels are generated independently. The ground truth label of vertex $u \in V$ is given by $\sigma_0(u) \in \{-1, 1\}$, with $\mathbb{P}(\sigma_0(u) = 1) = \mathbb{P}(\sigma_0(u) = -1) = 1/2$.
      \item Conditioned on the locations and community labels, edges are formed independently. Letting $E$ denote the edge set, for $u, v \in V$ and $u \neq v$, we have
      \begin{align*}
      \mathbb{P}(\{u,v\} \in E) &= \begin{cases}
      a & \text{if } \sigma_0(u) = \sigma_0(v), \ \Vert u - v \Vert \leq (\log n)^{1/d}\\
      b & \text{if } \sigma_0(u) \neq \sigma_0(v), \ \Vert u - v \Vert \leq (\log n)^{1/d}\\
      0 & \text{if } \Vert u - v \Vert > (\log n)^{1/d}.
      \end{cases}
      \end{align*}
  \end{enumerate}
  The graph does not contain self-loops.
      Here $\Vert u - v \Vert$ denotes the toroidal metric:
      \[\Vert u - v \Vert = \big \Vert \min \big\{|u_i - v_i|, n^{1/d} - |u_i -v_i| \big\}, 
   \dots, \min \big\{|u_d - v_d|, n^{1/d} - |u_d -v_d| \big\}\big\Vert_2,\]
      where $\Vert \cdot \Vert_2$ is the standard Euclidean metric.
  \end{definition}
  In other words, a given pair of vertices can only be connected by an edge if they are within a distance of $(\log n)^{1/d}$; in that case, we say they are \emph{mutually visible}. When a pair of vertices are mutually visible, the probability of being connected by an edge depends on their community labels, as in the standard SBM. Observe that any unit volume region has $\text{Poisson}(\lambda)$ vertices (and hence $\lambda$ vertices in expectation). In particular, the expected number of vertices in the region $\mathcal{S}_{d,n}$ is $\lambda n$. 
  
  Given an estimator $\widetilde{\sigma} = \widetilde{\sigma}_n$, we define $A(\widetilde{\sigma}, \sigma_0) = \max_{s\in\{\pm 1\}}(\sum_{u\in V}\mathds{1}_{\widetilde\sigma(u) = s\sigma_0(u)})/|V|$ as the agreement of $\widetilde{\sigma}$ and $\sigma_0$. We define some recovery requirements including \emph{exact recovery} as follows.
  \begin{itemize}
      \item \emph{Exact recovery:} $\lim\limits_{n \to \infty} \pr(A(\widetilde{\sigma}, \sigma_0)=1) = 1$,
      \item \emph{Almost exact recovery:} $\lim\limits_{n \to \infty} \pr(A(\widetilde{\sigma}, \sigma_0)\ge1-\epsilon) = 1$, for all $\epsilon>0$,
      \item \emph{Partial recovery:} $\lim\limits_{n \to \infty} \pr(A(\widetilde{\sigma}, \sigma_0)\ge\alpha) = 1$, for some $\alpha >1/2$.
  \end{itemize}
  In other words, an exact recovery estimator must recover all labels (up to a global sign flip), with probability tending to $1$ as the graph size goes to infinity. 
  Abbe et al \cite{Abbe2021} identified an impossibility regime for the exact recovery problem. Here, $\nu_d$ is the volume of a unit Euclidean ball in $d$ dimensions.
  \begin{theorem}[Theorem 3.7 in \cite{Abbe2021}]\label{theorem:impossibility}
  Let $\lambda > 0$, $d \in \mathbb{N}$, and $0 \leq b < a \leq 1$ satisfy 
  \begin{equation}
  \lambda \nu_d (1-\sqrt{ab} - \sqrt{(1-a)(1-b)} ) < 1, \label{eq:IT-threshold}    
  \end{equation}
  and let $G \sim \text{GSBM}(\lambda, n, a, b, d)$. Then any estimator $\widetilde{\sigma}$ fails to achieve exact recovery.
  \end{theorem}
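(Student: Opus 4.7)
My plan is to reduce impossibility to a genie-aided per-vertex hypothesis test, following the classical SBM impossibility scheme of \cite{Abbe2015,Mossel2015} adapted to the Poisson geometric setting. Call a vertex $v \in V$ \emph{bad} if, given the graph and all positions, the posterior likelihood of $\sigma_0$ is strictly less than that of the labeling $\sigma_0^{\backslash v}$ obtained by flipping $\sigma_0(v)$. By the $\pm$ symmetry of the posterior, the existence of even one bad vertex forces the Bayes-optimal estimator (which is optimal for exact recovery) to output a labeling lying outside $\{\pm \sigma_0\}$, so no estimator can achieve exact recovery on that event. The task reduces to showing $\mathbb{P}(\exists\, v \text{ bad}) = 1 - o(1)$ under \eqref{eq:IT-threshold}.

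For the single-vertex bad probability, condition on $v$ and $\sigma_0(v) = +1$. By Poisson thinning, the counts $A_+, A_-, N_+, N_-$ of mutually visible vertices in each of the four (same/different-community, edge/non-edge) categories are independent Poissons with means $ma/2,\ mb/2,\ m(1-a)/2,\ m(1-b)/2$, where $m := \lambda \nu_d \log n$. Badness is equivalent to
\[
(A_+ - A_-)\log\tfrac{a}{b} + (N_+ - N_-)\log\tfrac{1-a}{1-b} < 0.
\]
Under $\sigma_0(v) = -1$ the roles of $(A_+, N_+)$ and $(A_-, N_-)$ swap, so this is a symmetric binary Poisson test with optimal Chernoff parameter $1/2$. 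The Bhattacharyya coefficient of the product of four Poisson pairs evaluates to $\exp\bigl(-\tfrac{m}{2}[(\sqrt a - \sqrt b)^2 + (\sqrt{1-a} - \sqrt{1-b})^2]\bigr) = n^{-\gamma}$ with $\gamma := \lambda \nu_d\bigl(1 - \sqrt{ab} - \sqrt{(1-a)(1-b)}\bigr) < 1$, and a standard local CLT for the Poisson log-likelihood ratio supplies a matching polynomial lower bound, yielding $\mathbb{P}(v \text{ bad}) = n^{-\gamma + o(1)}$.

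The final step exploits the strict locality of badness. I would partition $\cS_{d,n}$ into $\Theta(n/\log n)$ cells of side $3(\log n)^{1/d}$, and for each cell restrict attention to its central subcube of side $(\log n)^{1/d}$. The central subcubes together with their $(\log n)^{1/d}$-neighborhoods are pairwise disjoint, so by the independence property of the Poisson point process the indicators $X_i$ (that cell $i$ contains a bad vertex in its central subcube) are independent across $i$. Each has $\mathbb{E}[X_i] = \Theta(1) \cdot n^{-\gamma + o(1)}$, so $\sum_i \mathbb{E}[X_i] = n^{1-\gamma+o(1)}/\log n \to \infty$, and independence immediately gives $\mathbb{P}\bigl(\sum_i X_i = 0\bigr) \leq \exp\bigl(-\sum_i \mathbb{E}[X_i]\bigr) \to 0$.

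The main technical hurdle is the matching polynomial lower bound on $\mathbb{P}(v \text{ bad})$: the Bhattacharyya/Chernoff upper bound is immediate, but confirming that it is tight up to subpolynomial factors requires a sharp small-ball estimate for the Poisson LLR, typically obtained via a Cramér tilt followed by a local CLT for a sum of independent Poisson log-terms. Everything else, including the spatial decorrelation step, adapts essentially verbatim from the proof of \cite[Theorem 3.7]{Abbe2021}, since the Poisson point process has the strongest possible independence across disjoint regions.
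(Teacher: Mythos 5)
Your proposal is correct and follows essentially the same route as the source: this theorem is not reproved in the paper but cited from Abbe et al.\ \cite{Abbe2021}, whose argument (as the paper notes in the proof of Proposition~\ref{prop:impossible}) rests on a genie-aided per-vertex flip test, the tight error exponent for Poisson-vector hypothesis testing from Lemma~11 of \cite{Abbe2015} (which supplies exactly the matching $n^{-\gamma+o(1)}$ lower bound you flag as the main hurdle, via tilting at $t=1/2$ plus a local CLT), and spatial independence of the Poisson process over disjoint $(\log n)^{1/d}$-separated neighborhoods to boost the single-vertex failure probability to a high-probability global failure. Your reduction, the Bhattacharyya computation yielding the CH-divergence exponent $\gamma=\lambda\nu_d(1-\sqrt{ab}-\sqrt{(1-a)(1-b)})$, and the disjoint-cell decorrelation all match that scheme, so there is nothing genuinely different here beyond the (minor, handleable) degenerate case $b=0$ where $\log(a/b)$ is infinite.
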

  Abbe et al \cite{Abbe2021} conjectured that the above result is tight, but only established that exact recovery is achievable for $\lambda > \lambda(a,b,d)$ sufficiently large \cite[Theorem 3.9]{Abbe2021}. In this regime, \cite{Abbe2021} provided a polynomial-time algorithm based on the observation that the relative community labels of two nearby vertices can be determined with high accuracy by counting their common neighbors. By taking $\lambda > 0$ large enough to drive up the density of points, the failure probability of pairwise classification can be taken to be an arbitrarily small inverse polynomial in $n$. 
  
  Our main result is a positive resolution to \cite[Conjecture 3.8]{Abbe2021} (with a slight modification for the case $d=1$, noting that $\nu_1=2$). 
  \begin{theorem}[Achievability]\label{theorem:exact-recovery}
  There exists a polynomial-time algorithm achieving exact recovery in $G \sim \text{GSBM}(\lambda, n, a, b, d)$ whenever
  \begin{enumerate}
      \item $d = 1$, $\lambda>1$, $a,b \in [0,1]$, and $ 2\lambda (1-\sqrt{ab} - \sqrt{(1-a)(1-b)} ) > 1$; or
      \item $d \geq 2$, $a,b \in [0,1]$, and $\lambda \nu_d (1-\sqrt{ab} - \sqrt{(1-a)(1-b)} ) > 1$.
  \end{enumerate}
  \end{theorem}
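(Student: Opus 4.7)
The plan is to realize exact recovery via the local-to-global amplification strategy familiar from the SBM, executed as a two-phase algorithm. Phase~I produces an almost-exact labeling $\widetilde{\sigma}$ that not only misclassifies $o(n)$ vertices but is also \emph{locally dispersed}, in the sense that every visibility ball of radius $(\log n)^{1/d}$ contains $o(\log n)$ wrongly labeled vertices with high probability. Phase~II then freezes $\widetilde{\sigma}$ and, for each vertex $v$ independently, runs a maximum-likelihood test against the labels of vertices in the visibility ball around $v$, upgrading almost-exact to exact recovery. Matching the information-theoretic threshold of Theorem~\ref{theorem:impossibility} reduces to computing a single Chernoff--Hellinger exponent and showing that it is strictly smaller than $-1$.

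For Phase~I, I would tile $\cS_{d,n}$ with cubes of side length $\rho$, where $\rho$ is a small constant multiple of $(\log n)^{1/d}$ chosen so that every pair of vertices in any pair of adjacent cubes is mutually visible. Each cube contains $\Theta(\log n)$ vertices, so the induced subgraph on a pair of adjacent cubes is a dense two-community SBM; a routine edge-counting or spectral procedure labels this seed pair correctly, up to global sign, except with sub-polynomially small failure probability. I would then grow the labeling across the grid: for each new cube, fix its relative sign by majority vote across the dense bipartite cut with an already-labeled neighbor, and iterate. A union bound over the $O(n/\log n)$ cubes controls all local failures, and a Chernoff bound on the Poisson cube occupancies controls the positional noise. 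The dispersion guarantee follows by insisting that each individual cube incurs $o(\log n)$ misclassifications, which is what the local procedure naturally delivers.

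For Phase~II, condition on $\sigma_0(v)=+1$ and on the correct labels inside the visibility ball of $v$. Letting $N_+,N_-$ be the numbers of $\pm 1$-vertices in that ball, these are independent Poissons with mean $\tfrac12\lambda\nu_d\log n$, and given them the edges from $v$ yield independent $\mathrm{Bin}(N_+,a)$ and $\mathrm{Bin}(N_-,b)$ counts $E_+,E_-$. The maximum-likelihood test favors $-1$ when $(E_+-E_-)\log(a/b)+(N_+-N_--E_++E_-)\log((1-a)/(1-b))<0$, and the standard Hellinger-affinity computation gives
\[
\mathbb{P}(\text{ML error}) \;\le\; \bigl(\sqrt{ab}+\sqrt{(1-a)(1-b)}\bigr)^{\lambda\nu_d\log n\,(1+o(1))} \;=\; n^{-\lambda\nu_d(1-\sqrt{ab}-\sqrt{(1-a)(1-b)})+o(1)}.
\]
Under the hypothesis of the theorem this exponent is strictly less than $-1$, so a union bound over the $\Theta(n)$ vertices of $G$ concludes the argument. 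The extra assumption $\lambda>1$ in dimension $d=1$ is exactly what is needed to rule out a visibility gap of length $(\log n)^{1/d}$ in a Poisson process on a line of length $n$, which would otherwise disconnect the graph and trivially obstruct recovery.

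The main obstacle is replacing the genie in Phase~II with the output $\widetilde{\sigma}$ of Phase~I: the Hellinger computation above assumes every label in the visibility ball of $v$ is correct, but Phase~I inevitably mislabels some vertices there. The local dispersion guarantee lets me absorb this loss, because swapping $o(\log n)$ labels shifts the log-likelihood statistic by $o(\log n)$, which is absorbed in the slack of the exponent $\lambda\nu_d(1-\sqrt{ab}-\sqrt{(1-a)(1-b)})-1>0$ provided the Phase~I error rate is chosen small enough. Calibrating the two phases so that the Phase~II exponent survives the Phase~I noise is the central technical difficulty. A secondary subtlety is that both phases a priori use the same edge data; the standard workaround is to sub-sample the edges so that Phase~I and Phase~II draw on independent portions of the graph, at the cost of a harmless constant factor in the Phase~II exponent which is then corrected by running Phase~II on the full edge set.
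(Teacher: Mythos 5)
Your overall architecture (dispersed almost-exact labeling in Phase~I, robust Poisson/ML refinement in Phase~II, with the Chernoff--Hellinger exponent $\lambda\nu_d(1-\sqrt{ab}-\sqrt{(1-a)(1-b)})$ beating $1$) matches the paper, and your Phase~II computation and the observation that perturbing $o(\log n)$ labels in the visibility ball only shifts the log-likelihood statistic by $o(\log n)$ is exactly the paper's robustness argument (the paper in fact avoids your edge-subsampling workaround entirely by bounding the worst case over all labelings within Hamming distance $\eta\log n$ of the truth in each neighborhood, so no independence between phases is needed).

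However, there is a genuine gap in your Phase~I. You tile $\cS_{d,n}$ with cubes of side a small constant multiple of $(\log n)^{1/d}$ so that adjacent cubes are mutually visible, and you assert that every cube contains $\Theta(\log n)$ vertices with high probability. For adjacent (even diagonally adjacent) cubes to be mutually visible the side length must be at most $(\log n)^{1/d}/(2\sqrt{d})$, so each cube has volume at most $\log n/(2\sqrt{d})^d$ and is empty with probability at least $n^{-\lambda/(2\sqrt d)^d}$. Near the threshold the hypothesis only guarantees $\lambda\nu_d>1$ (e.g.\ for $d=2$ any $\lambda>1/\pi$ is admissible when $a=1,b=0$), so $\lambda/(2\sqrt d)^d<1$ and the expected number of empty cubes diverges: with high probability many cubes are empty or underpopulated, and your union bound fails. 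Consequently a grid-neighbor (lexicographic) propagation gets stuck. Handling this is the paper's main technical contribution in Phase~I: it only labels $\delta$-occupied blocks, builds a \emph{visibility graph} on them, proves this graph is connected w.h.p.\ whenever $\lambda\nu_d>1$ (for $d\ge 2$ via a combinatorial argument that an isolated component would have to be surrounded by a cluster of at least $K$ unoccupied blocks, which is ruled out by a volume/Chernoff estimate), and then propagates along a spanning tree of that graph --- a data-dependent schedule rather than a predefined scan. Without this (or an equivalent mechanism for routing around sparse regions), your Phase~I does not reach the claimed threshold, only a regime of sufficiently large $\lambda$, which is essentially the weaker result already in Abbe et al.
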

  We  drop the requirement that $a > b$ in Theorem \ref{theorem:impossibility}, thus covering the disassortative case. We additionally expand the impossible regime for $d=1$, compared to Theorem \ref{theorem:impossibility}.
  \begin{theorem}[Impossibility]\label{theorem:impossibility-general}
  Let $\lambda > 0$, $d \in \mathbb{N}$, and $a,b \in [0,1]$ satisfy \eqref{eq:IT-threshold} and let $G \sim \text{GSBM}(\lambda, n, a, b, d)$. Then any estimator $\widetilde{\sigma}$ fails to achieve exact recovery. Additionally, if $d = 1$ and $\lambda < 1$, then any estimator $\widetilde{\sigma}$ fails to achieve exact recovery.
  \end{theorem}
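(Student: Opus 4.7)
The theorem combines two impossibility statements: (i) condition \eqref{eq:IT-threshold} rules out exact recovery even in the disassortative regime $a < b$, and (ii) for $d=1$, any $\lambda < 1$ already prevents exact recovery. My plan is to prove the two parts separately.

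For part (i), the strategy is to reuse the genie argument from \cite[Theorem 3.7]{Abbe2021} and observe that every quantity involved is symmetric in $a$ and $b$. The genie reveals $\sigma_0$ on $V\setminus\{v\}$, so the estimator must recover $\sigma_0(v)$ from positions and edges at $v$ alone. Conditional on the positions, $v$ has independent $\text{Poisson}(\lambda\nu_d(\log n)/2)$ numbers $K_+, K_-$ of candidate $+$- and $-$-neighbors inside its visibility ball, and the edge indicators to these neighbors are Bernoulli with parameters $a$ or $b$ depending on $\sigma_0(v)$, with the roles of $a$ and $b$ swapped when $\sigma_0(v)$ flips. The Bhattacharyya coefficient of $\text{Ber}(a)$ versus $\text{Ber}(b)$ equals $\sqrt{ab}+\sqrt{(1-a)(1-b)}$, which is symmetric in $a,b$. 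Raising it to the Poisson-distributed total count $K_+ + K_-$ and taking expectation gives a single-vertex Bayes-error probability of order $n^{-\lambda \nu_d(1-\sqrt{ab}-\sqrt{(1-a)(1-b)})} = \omega(1/n)$ under \eqref{eq:IT-threshold}. To lift this to failure of any estimator (with no genie), I would select a polynomially large family of vertices whose visibility balls are pairwise disjoint, so that their genie-aided failure events are conditionally independent given positions; a second-moment bound then implies that with high probability at least one of them is misclassified, which forces any estimator to fail.

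Part (ii) is a pure disconnectedness argument. In $d=1$, the visibility radius is $\log n$ and $V$ is a rate-$\lambda$ Poisson process on the torus $\cS_{1,n}$, so interpoint gaps are essentially i.i.d.\ $\mathrm{Exp}(\lambda)$. Each gap exceeds $\log n$ with probability $n^{-\lambda}$, so the expected number of ``long'' gaps is of order $n^{1-\lambda}$, which diverges for $\lambda < 1$. A standard Chebyshev / second-moment bound on the number of long gaps shows that w.h.p.\ at least two of them occur, which partitions $V$ into $k \geq 2$ arcs with no mutual visibility and hence no edges between them. Conditional on positions and edges, the posterior on $\sigma_0$ is then uniform over the $2^k$ labelings obtained by independently flipping each arc, i.e., over $2^{k-1}\ge 2$ labelings modulo a global sign flip, so no estimator can match $\sigma_0$ with probability exceeding $1/2$.

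The main technical obstacle is not the Chernoff computation itself but the step in part (i) of promoting an $\omega(1/n)$ per-vertex misclassification probability into failure of an ordinary (non-genie-aided) estimator, since the genie-aided failure events are correlated through the shared labels of other vertices. The workaround I plan to use is to condition on positions and restrict attention to vertices whose visibility balls are pairwise disjoint, so that the relevant edge and label data become mutually independent and a second-moment bound on the number of misclassifications applies. The long-gap computation in part (ii) exhibits an analogous near-independence structure for exponential gaps on the torus and should be routine.
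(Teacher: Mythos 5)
Your overall decomposition matches the paper's exactly: Proposition \ref{prop:impossible} handles condition \eqref{eq:IT-threshold} by observing that the genie-aided argument of \cite[Theorem 3.7]{Abbe2021} is symmetric in $a$ and $b$, and Proposition \ref{prop:lambda<1} handles $d=1$, $\lambda<1$ via a disconnection argument. Your part (ii) is essentially the paper's proof in different clothing: the paper partitions $[-n/2,n/2]$ into $n/\log n$ blocks of length $\log n$, notes each is empty independently with probability $n^{-\lambda}$, and shows directly (via a binomial-series computation rather than Chebyshev) that with high probability at least two non-adjacent empty blocks occur, yielding $k\ge 2$ mutually invisible segments whose signs can be flipped independently without changing the posterior. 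Your exponential-gap formulation reaches the same conclusion; the only care needed is that on the torus the spacings are exchangeable rather than exactly i.i.d., which your second-moment bound tolerates.

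The genuine gap is in part (i), in the step where you assert that the single-vertex Bayes error is ``of order'' $n^{-\lambda\nu_d(1-\sqrt{ab}-\sqrt{(1-a)(1-b)})}$ by computing the Bhattacharyya coefficient. The Bhattacharyya coefficient $\mathrm{BC}$ of the two conditional laws indeed equals $n^{-\lambda\nu_d(1-\sqrt{ab}-\sqrt{(1-a)(1-b)})(1+o(1))}$, but as a \emph{lower} bound on the Bayes error it only yields $P_e \ge \tfrac14 \mathrm{BC}^2$, i.e.\ an exponent of $2\lambda\nu_d(1-\sqrt{ab}-\sqrt{(1-a)(1-b)})$. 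That is $\omega(1/n)$ only when the CH-divergence is below $1/2$, so this route proves impossibility only up to half the threshold. Closing the factor of two requires the sharp lower bound on the error of the multitype Poisson test with exponent exactly $D_+(x\|y)=D_{1/2}(x\|y)$, which is precisely Lemma 11 of \cite{Abbe2015} (invoked through Lemma 8.2 of \cite{Abbe2021}); this is the one ingredient the paper leans on and that your sketch does not supply. Your central observation---that this error exponent, and hence the entire impossibility argument, depends on $a,b$ only through the symmetric quantity $\sqrt{ab}+\sqrt{(1-a)(1-b)}$, so the assumption $a>b$ can be dropped---is exactly the content of the paper's Proposition \ref{prop:impossible} and is correct. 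The amplification step via vertices with pairwise disjoint visibility balls is standard and fine.
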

  Putting Theorems \ref{theorem:exact-recovery} and \ref{theorem:impossibility-general} together establishes the information-theoretic threshold for exact recovery in the GSBM, and shows that recovery is efficiently achievable above the threshold. We remark that the condition $\lambda \nu_d(1 - \sqrt{ab} - \sqrt{(1-a)(1-b)}) > 1$ in Theorem \ref{theorem:exact-recovery} is equivalent to $D_+(x \| y) > 1$, where $D_+(x \| y)$ is the Chernoff--Hellinger (CH) divergence \cite{Abbe2015} between the vectors $x = \lambda \nu_d [a, 1-a,b,1-b ]/2$ and $y = \lambda \nu_d [b, 1-b,a,1-a ]/2$. As we will show, the exact recovery problem can be reduced to a multitype Poisson hypothesis testing problem; the CH-divergence condition characterizes the parameters for which the hypothesis test is successful.
  
  Abbe et al \cite{Abbe2021} suggested that the threshold given by Theorem \ref{theorem:impossibility} might be achieved by a two-round procedure reminiscent of the exact recovery algorithm for the SBM developed by Abbe and Sandon \cite{Abbe2015}. Indeed, our algorithm is a two-round procedure, but the details of the first phase (achieving almost exact recovery) are qualitatively very different from the strategy employed in the standard SBM. At a high level, our algorithm spreads vertex label information locally by exploiting the density of local subgraphs. The information is spread by iteratively labeling ``blocks'', labeling a given block by using a previously labeled block as a reference. To ensure that the algorithm spreads label information to all (sufficiently dense) blocks, we establish a connectivity property of the dense blocks that holds with high probability whenever $\lambda \nu_d > 1$ ($\lambda > 1$ if $d = 1$). This is in contrast to the Sphere Comparison algorithm \cite{Abbe2015} for the SBM, where the relative labels of a pair of vertices $u,v$ are determined by comparing their neighborhoods.
  
  The algorithm in Phase I in fact achieves almost exact recovery for a wider range of parameters than what is required to achieve exact recovery. 
  \begin{theorem}\label{theorem:almost-exact-recovery}
  There is a polynomial-time algorithm achieving almost exact recovery in $G \sim \text{GSBM}(\lambda, n, a, b, d)$ whenever
  \begin{enumerate}
      \item $d = 1$, $\lambda > 1$, and $a, b \in [0,1]$ with $a \neq b$; or
      \item $d \geq 2$, $\lambda \nu_d > 1$, and $a, b \in [0,1]$ with $a \neq b$.
  \end{enumerate}
  \end{theorem}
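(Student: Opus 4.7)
The plan is to design a block-based propagation algorithm that locally labels dense regions and spreads a consistent sign across them. Partition $\cS_{d,n}$ into a regular grid of axis-aligned cubes whose side length is a small constant multiple of the visibility radius $(\log n)^{1/d}$; each block then contains a Poisson number of vertices with mean $\Theta(\log n)$. A Chernoff and union bound argument shows that all blocks are \emph{dense} --- i.e., contain within a constant factor of their expected count --- with high probability, and in particular any two \emph{visibility-adjacent} dense blocks (those within distance $(\log n)^{1/d}$) share $\Theta(\log^2 n)$ potential edges.

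Pick a reference block $B_0$ and compute a tentative labeling of its vertices. Since $|B_0| = \Theta(\log n)$ and $a \neq b$ are constants, brute-force maximum-likelihood estimation over the $2^{|B_0|} = \mathrm{poly}(n)$ candidate labelings runs in polynomial time and recovers the true labeling of $B_0$ (up to a global sign) with probability $1 - o(1/n)$, by standard dense two-community SBM concentration. Then propagate along a BFS spanning tree of the dense-block visibility graph rooted at $B_0$: for each newly visited block $B$ with already-labeled parent $B'$, label each vertex $v \in B$ by a likelihood-ratio test that counts the edges from $v$ to each of the two tentatively-labeled groups in $B'$. Because this is a Chernoff-separated two-sample test based on $\Theta(\log n)$ tentatively-labeled references in each group, the per-vertex misclassification probability decays polynomially in $n$, and a union bound yields $o(|B|)$ misclassifications in every block and thus $o(n)$ overall.

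The critical step is showing that the dense-block visibility graph is connected with high probability whenever $\lambda > 1$ (for $d=1$) or $\lambda \nu_d > 1$ (for $d \geq 2$). The threshold enters via a ``no-gap'' coverage property: the probability that a given region of volume $\nu_d \log n$ inside $\cS_{d,n}$ is empty of Poisson points equals $n^{-\lambda\nu_d}$, and union-bounding over an appropriate cover of $\cS_{d,n}$ gives $o(1)$ empty regions exactly under the threshold hypothesis. This rules out visibility-scale gaps and forces a chain of visibility-adjacent dense blocks between $B_0$ and every other dense block.

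The main obstacle I expect is this coverage/connectivity argument: one has to calibrate the block scale so that (i) blocks are dense with high probability, (ii) visibility-adjacent blocks share enough edges for reliable per-vertex classification, and (iii) the no-gap union bound is sharp precisely at $\lambda \nu_d = 1$ (and at $\lambda = 1$ when $d=1$, where $\nu_1 = 2$ forces a different choice of block side than in higher dimensions). A secondary subtlety is controlling error accumulation along the BFS: since each propagation step uses a slightly noisy parent labeling as its reference, one must verify that an $\epsilon$-corrupted reference still enables correct per-vertex classification, and that this property propagates along paths of length up to $O((n/\log n)^{1/d})$ without blow-up.
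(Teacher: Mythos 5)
Your overall architecture coincides with the paper's (blocks of volume a small constant times $\log n$, a locally labeled initial block, BFS propagation over a visibility graph of dense blocks), but the two steps that carry the actual difficulty of the theorem are, respectively, wrong and missing. The claim that \emph{all} blocks are dense with high probability fails precisely in the regime the theorem addresses: for two adjacent blocks to be mutually visible, the block volume must be $\chi\log n$ with $\chi$ bounded well below $\nu_d$ (for $d=1$, $\chi\le 1/2$), so a block is empty with probability $n^{-\lambda\chi}$, and near the threshold $\lambda\chi<1$, giving $n^{1-\lambda\chi}/(\chi\log n)\to\infty$ empty blocks in expectation. This is exactly why the naive scheme only works for $d=1$, $\lambda>2$, and why the paper introduces the occupied/unoccupied distinction. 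You partially recover by restricting to the ``dense-block visibility graph,'' but then its connectivity \emph{is} the content of Phase I, and your proposed mechanism (no region of volume $\nu_d\log n$ is empty of points, via a union bound over a cover) does not yield it. For $d=1$ that criterion is even miscalibrated: it would kick in at $2\lambda>1$, whereas the correct threshold is $\lambda>1$ (the paper proves impossibility for $\lambda<1$), because only the one-sided neighborhood of volume $\approx\log n$ is usable for propagation. For $d\ge 2$ one must rule out a dense block, or an entire cluster of dense blocks, being ringed by blocks that are individually below the occupancy threshold even though their union is well populated. The paper does this through three nontrivial steps: the union of blocks visible to any block contains $\Omega(\log n)$ points (hence at least one visible occupied block); every cluster of unoccupied blocks has size below a constant $K$ w.h.p.; and an induction showing that any isolated connected component of the visibility graph must be surrounded by at least $K$ unoccupied blocks. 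None of this is in your proposal.

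The second gap is that your error-accumulation control is not closed. The per-vertex test at a child block is Chernoff-separated only if the parent's labeling has a quantified error level, and that level must be an invariant that reproduces itself along BFS paths whose length grows with $n$. The paper's device is to show that, conditioned on the parent having at most a \emph{constant} number $M$ of mistakes, the child has at most $M$ mistakes with probability $1-O(n^{-9/8})$ (via stochastic domination by a binomial), so a union bound over all $\Theta(n/\log n)$ blocks gives no accumulation at all. Your ``$o(|B|)$ misclassifications per block'' is not a self-sustaining hypothesis unless you verify the per-vertex misclassification bound uniformly over all parent labelings with that many errors; you flag this as a subtlety but do not resolve it. By contrast, your brute-force MLE over the $2^{\Theta(\log n)}=\mathrm{poly}(n)$ labelings of the initial block is a legitimate, if heavier, alternative to the paper's common-neighbor pairwise classification.
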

  
  \section{Exact recovery algorithm}\label{sec:algorithm}

  This section presents our algorithm, which consists of two phases. In Phase I, our goal is to estimate an almost-exact labeling $\widehat{\sigma} \colon V \to \{-1, 0,1\}$, where the label $0$ indicates uncertainty. Phase I is based on the following observation: for any $\delta > 0$, if we know the true labels of some $\delta \log n$ vertices visible to a given vertex $v$, then by computing edge statistics, we can determine the label of $v$ with probability $1-n^{-c}$, for some $c(\delta)> 0$. In Phase I, we partition the region into hypercubes of volume $\Theta (\log n)$ (called \emph{blocks}), and show how to produce an almost exact labeling of all blocks that contain at least $\delta \log n$ vertices (called \emph{occupied blocks}), by an iterative label propagation scheme. Next, Phase II refines the labeling $\widehat{\sigma}$ to $\widetilde\sigma$ using Poisson testing. Phase II builds upon a well-established approach in the SBM literature \cite{Abbe2015, Mossel2015}, to refine an almost-exact labeling with dispersed errors into an exact labeling. The main novelty of our algorithm therefore lies in Phase I.
  
  Before describing the algorithm, we introduce the notion of a \emph{degree profile}.
  \begin{definition}[Degree profile]
      Given $G\sim \text{GSBM}(\lambda, n, a, b, d)$, the \emph{degree profile} of a vertex $u \in V$ with respect to a reference set $S \subset V$ and a labeling $\sigma \colon S \to \{-1, 1\}$ is given by the $4$-tuple, 
      \$ d(u, \sigma, S) = \big[d_1^+(u, \sigma, S), d_1^-(u, \sigma, S), d_{-1}^{+}(u, \sigma, S), d_{-1}^-(u, \sigma, S)\big], \$
      where
      \$
      &d_{1}^+(u, \sigma, S) = |\{v \in S \colon \sigma(v) = 1, (u,v) \in E \}|, \\ 
      &d_{1}^-(u, \sigma, S) = |\{v \in S \colon  \sigma(v) = 1, (u,v) \not \in E, \Vert u - v \Vert \leq (\log n)^{{1}/{d}}\} |,\\
      & d_{-1}^+(u, \sigma, S) = |\{v \in S\colon  \sigma(v) = -1, (u,v) \in E \}|,  \\
      &d_{-1}^-(u, \sigma, S) = |\{v \in S\colon \sigma(v) = -1, (u,v) \not \in E, \Vert u - v \Vert \leq (\log n)^{{1}/{d}}\} |.
      \$
      Note that we only consider $v$ such that $\Vert u - v \Vert \leq (\log n)^{1/d}$, since we only want to count non-edges to vertices that are visible to $u$. For convenience, when $V$ serves as the reference set, we write $d(u, \sigma) := d(u, \sigma, V)$ and $d(u, \sigma) := [d_1^+(u, \sigma), d_1^-(u, \sigma), d_{-1}^{+}(u, \sigma),d_{-1}^-(u, \sigma)]$. 
  \end{definition}
  
  \subsection{Exact recovery for \texorpdfstring{$d=1$}{}.}
  We first describe the algorithm specialized to the case $d=1$. Several additional ideas are required to move to the $d \geq 2$ case, to ensure uninterrupted propagation of label estimates over all occupied blocks. We first describe the simplest case where $d=1$, $\lambda >2$, and $a,b \in [0,1]$ with $a \neq b$.
  
  \vskip6pt
  \paragraph*{Algorithm for $\lambda > 2$.} 
  The algorithm is presented in Algorithm \ref{alg:almost-exact-eg}. 
  In Phase I, we first partition the interval into blocks of length ${\log n}/{2}$ and define $V_i$ as the set of vertices in the $i$th block for $i\in [2n/\log n]$. In this way, any pair of vertices in adjacent blocks are within a distance of $\log n$.
  The density $\lambda>2$ ensures a high probability that all blocks have $\Omega(\log n)$  
  vertices, as we later show in \eqref{eq:connected-H-lambda2}. 
  Next, we use the \texttt{Pairwise Classify} subroutine to label the first block (Line \ref{line:step2-eg}). Here, we select an arbitrary vertex $u_0 \in V_1$ and set $\widehat{\sigma}(u_0) = 1$. The labels of other vertices $u \in V_1$ are labeled by counting common neighbors with $u_0$, among the vertices in $V_1$.
  Next, the labeling of $V_1$ is propagated to other blocks $V_i$ for $i\ge 2$ utilizing the edges between $V_{i-1}$ and $V_i$ and the estimated labeling on $V_{i-1}$, by thresholding degree profiles with respect to $V_{i-1}$ according to Algorithm \ref{alg:propagation} (Lines \ref{line:step3-eg}-\ref{line:step3-end-eg}). The reference set $S$ in Algorithm \ref{alg:propagation} plays the role of $V_{i-1}$ and $S'$ plays the role of $V_i$. 
  Intuitively, if $a>b$, a vertex tends to exhibit more edges and fewer non-edges within its own community while having fewer edges and more non-edges with the other community. Conversely, if $a < b$, the opposite observation holds.
  In order to classify the vertices in $V_i$, we use edges from $V_i$ to the larger set of $\{u \in V_{i-1}: \widehat{\sigma}(u) = 1\}$ and $\{u \in V_{i-1}: \widehat{\sigma}(u) = -1\}$, rather than using all edges between $V_i$ and $V_{i-1}$, which simplifies the analysis. 
  In Theorem \ref{thm:phase1-summary}, we will demonstrate that Phase I
  achieves almost-exact recovery on $G$ under the conditions in Theorem \ref{theorem:almost-exact-recovery}.
  
  \begin{breakablealgorithm}
      \caption{Exact recovery for the GSBM ($d=1$ and $\lambda>2$)} \label{alg:almost-exact-eg}
      \begin{algorithmic}[1]
      \Require{$G \sim \text{GSBM}(\lambda, n,a,b,1)$ where $\lambda > 2$.}
      \Ensure{An estimated community labeling $\widetilde{\sigma}: V \to \{-1,1\}$.}
      \vspace{5pt}
      \State{{\bf Phase I:}} 
      \State{Partition the interval $[-n/2, n/2]$ into $2n/\log n$ blocks\footnotemark \ of volume $\log n/2$ each. Let $B_i$ be the $i$th block and $V_i$ be the set of vertices in $B_i$ for $i\in [2n/\log n]$.} \label{line:step1-eg}
      \State Apply \texttt{Pairwise Classify} (Algorithm \ref{alg:initial-block}) on input $G, V_1, a, b$ to obtain a labeling $\widehat{\sigma}$ of $V_1$. \label{line:step2-eg}
      \For{$i=2,\cdots, 2n/\log n$} \label{line:step3-eg}
      \State Apply \texttt{Propagate} (Algorithm \ref{alg:propagation}) on input $G, V_{i-1}, V_i$ to determine the labeling $\widehat{\sigma}$ on $V_i$. 
      \EndFor \label{line:step3-end-eg}
      \vspace{5pt}
      \State{{\bf Phase II:}}
      \For{$u\in V$}  
      \State Apply \texttt{Refine} (Algorithm \ref{alg:refine}) on input $G, \widehat{\sigma}, u$ to obtain $\widetilde{\sigma}(u)$.
      \EndFor
      \end{algorithmic}
  \end{breakablealgorithm}
  \footnotetext[1]{The number of blocks is $\ceil{2n/\log n}$ if $2n/\log n$ is not an integer.}
  
  \begin{breakablealgorithm}
      \caption{\texttt{Pairwise Classify}} 
      \label{alg:initial-block}
      \begin{algorithmic}[1]
      \Require{ Graph $G = (V,E)$, vertex set $S \subset V$, parameters $a, b \in [0,1]$ with $a \neq b$.}
      \State Choose an arbitrary vertex $u_0 \in S$, and set $\widehat{\sigma}(u_0) = 1$.
      \For{$u \in S \setminus\{u_0\}$}
      \If{$\left|\{v \in S \setminus \{u, u_0\}: \{u_0, v\}, \{u,v\} \in E\} \right| > (a+b)^2(|S| -2)/4$} \label{line:threshold}
      \State Set $\widehat{\sigma}(u) = 1$.
      \Else
          \State Set $\widehat{\sigma}(u) = -1$.
      \EndIf
      \EndFor
      \end{algorithmic}    
  \end{breakablealgorithm}
  
  \begin{breakablealgorithm}
      \caption{\texttt{Propagate}} \label{alg:propagation}
      \begin{algorithmic}[1]
      \Require{ Graph $G = (V,E)$, mutually visible sets of vertices $S, S' \subset V$ with $S \cap S' = \emptyset$, where $S$ is labeled according to $\widehat{\sigma}$.}
      \If{$|\{v \in S \colon \widehat{\sigma}(v) = 1\}| \geq |\{v \in S \colon \widehat{\sigma}(v) = -1\}|$}\label{line:step3-case1}
      \For{$u \in S'$}
      \If{$a > b \text{  and  }
      d_{1}^+(u, \widehat{\sigma}, S) \ge (a+b)\cdot|\{v \in S \colon \widehat{\sigma}(v) = 1\}|/2$}
      \State Set $\widehat{\sigma}(u) = 1$.
      \ElsIf{$a < b \text{  and  }
      d_{1}^+(u, \widehat{\sigma}, S) < (a+b)\cdot|\{v \in S \colon \widehat{\sigma}(v) = 1\}|/2$}
      \State Set $\widehat{\sigma}(u) = 1$.
      \Else
      \State Set $\widehat{\sigma}(u) = -1$.
      \EndIf
      \EndFor \label{line:step3-case1-end}
      \Else \label{line:step3-case2}
      \For{$u \in S'$}
      \If{$a > b \text{  and  }
      d_{-1}^+(u, \widehat{\sigma}, S) \ge (a+b)\cdot|\{v \in S \colon \widehat{\sigma}(v) = -1\}|/2$}
      \State Set $\widehat{\sigma}(u) = -1$.
      \ElsIf{$a < b \text{  and  }
      d_{-1}^+(u, \widehat{\sigma}, S) < (a+b)\cdot|\{v \in S \colon \widehat{\sigma}(v) = -1\}|/2$}
      \State Set $\widehat{\sigma}(u) = -1$.
      \Else
      \State Set $\widehat{\sigma}(u) = 1$.
      \EndIf
      \EndFor
      \EndIf\label{line:step3-case2-end}
      \end{algorithmic}
  \end{breakablealgorithm}

  \begin{breakablealgorithm}
      \caption{\texttt{Refine}} \label{alg:refine}
      \begin{algorithmic}[1]
      \Require{ Graph $G \sim \text{GSBM}(\lambda, n,a,b,d)$, vertex $u \in V$, labeling $\widehat{\sigma}: V \to \{-1,0,1\}$.}
      \Ensure{ An estimated labeling $\widetilde{\sigma}(u) \in \{-1,1\}$.}  
      \State{
      Set
      $
      \widetilde{\sigma}(u) = \text{sign}\Big[\log\big(\dfrac{a}{b} \big)\big(d_{1}^+(u,\widehat{\sigma}) - d_{-1}^+(u,\widehat{\sigma}) \big)+\log\big(\dfrac{1-a}{1-b} \big)\big(d_{1}^-(u,\widehat{\sigma}) - d_{-1}^-(u,\widehat{\sigma}) \big) \Big].
      $}
      \end{algorithmic}
  \end{breakablealgorithm}
  
  In Phase II, we refine the almost-exact labeling $\widehat{\sigma}$ obtained from Phase I. Our refinement procedure mimics the so-called \emph{genie-aided} estimator \cite{Abbe2017}, which labels a vertex $u$ knowing the labels of all other vertices (i.e., $\{\sigma_0(v)\colon v\in V \setminus\{u\}\}$).
  The degree profile relative to the ground-truth labeling, $d(u,\sigma_0)$, is random and depends on realizations of node locations and edges in $G$ and community assignment $\sigma_0$.
  We use $D\in \mathbb{R}^4$ to denote the vector representing the four random variables in $d(u,\sigma_0)$.
  Then  $D$  is characterized by a multi-type Poisson distribution such that conditioned on $\{\sigma_0(u)=1\}$, $D\sim\text{Poisson}(\lambda \nu_d \log n [a, 1-a,b,1-b ]/2 )$ and conditioned on $\{\sigma_0(u)=-1\}$,  $D\sim\text{Poisson}(\lambda \nu_d \log n [b,1-b,a, 1-a]/2)$.
  Given a realization $D=d(u,\sigma_0)$, we pick the most likely hypothesis to minimize the error probability; that is,
  \begin{align}
  \sigma_{\textsf{genie}}(u) &= \argmax_{s\in\{1, -1\}} \pr(D=d(u,\sigma_0)\given \sigma_0(u)=s) \nonumber\\
  &=\text{sign}\Big[\log\big(\frac{a}{b} \big)\big(d_{1}^+(u,\sigma_0) - d_{-1}^+(u,\sigma_0) \big)+\log\big(\frac{1-a}{1-b} \big)\big(d_{1}^-(u,\sigma_0) - d_{-1}^-(u,\sigma_0) \big)\Big]. \label{eq:genie}
  \end{align}
  For convenience, let 
  \begin{equation}
  \tau(u, \sigma) = \log\big(\frac{a}{b}\big)\big(d_{1}^+(u,\sigma) - d_{-1}^+(u,\sigma)\big)+\log\big(\frac{1-a}{1-b} \big)\big(d_{1}^-(u,\sigma) - d_{-1}^-(u,\sigma)\big). \label{eq:tau}   
  \end{equation} 
  In short, we have $\sigma_{\textsf{genie}}(u) = \text{sign}(\tau(u, \sigma_0))$. The genie-aided estimator motivates the \texttt{Refine} subroutine (Algorithm \ref{alg:refine}) in Phase II that assigns $\widetilde\sigma(u) = \text{sign}(\tau(u,\widehat\sigma))$ for any $u\in V$. 
  Since $\widehat{\sigma}$ makes few errors compared with $\sigma_0$, for any $u\in V$, its degree profile $d(u, \widehat\sigma)$ is close to $d(u,\sigma_0)$. Thus, $d(u,\widehat{\sigma})$ is well-approximated by the aforementioned multi-type Poisson distribution.  
  
  \vskip6pt
  \paragraph*{Modified algorithm for general $\lambda > 1$.} 
  If $1<\lambda<2$, partitioning the interval into blocks of length $\log n/2$, as done in Line \ref{line:step1-eg} of Algorithm \ref{alg:almost-exact-eg}, fails. This is because each of the $2n/\log n$ blocks is independently empty with probability $e^{-\lambda\log n/2} = n^{-\lambda/2}$ and $-\lambda/2>-1$, leading to a high probability of encountering empty blocks, and thus a failure of the propagation scheme. 
  To address this, we instead adopt smaller blocks of length $\chi\log n$, where $\chi < (1-1/\lambda)/2$, for any $\lambda>1$. We only attempt to label blocks with sufficiently many vertices, according to the following definition. For the rest of the paper, let $V(B)\subset V$ denote the set of vertices in a subregion $B\subset\cS_{d,n}$. 
  \begin{definition}[Occupied block]
      Given any $\delta>0$, a block $B \subset \cS_{d,n}$ is $\delta$-occupied if $|V(B)| > \delta\log n$. Otherwise, $B$ is $\delta$-unoccupied.
  \end{definition}
  
  We will show that for sufficiently small $\delta > 0$, all but a negligible fraction of blocks are $\delta$-occupied. As a result, achieving almost-exact recovery in Phase I only requires labeling the vertices within the occupied blocks. To ensure successful propagation, we introduce a notion of visibility. Two blocks $B_i, B_j \in \cS_{d,n}$ are \emph{mutually visible}, defined as $B_i \sim B_j$, if 
  \[
      \sup_{x \in B_i, y \in B_j} \Vert x - y \Vert \leq (\log n)^{1/d}.\]
  Thus, if $B_i \sim B_j$, then any pair of vertices $u \in B_i$ and $v \in B_j$ are at a distance at most $(\log n)^{1/d}$ of each other. In particular, if $B_j$ is labeled and $B_i \sim B_j$, then we can propagate labels to $B_i$.
  
  Similar to the case of $\lambda > 2$, we propagate labels from left to right. Despite the presence of unoccupied blocks, we establish that if $\lambda>1$ and $\chi$ is chosen as above, each block $B_i$ following the initial $B_1$ has a corresponding block $B_j$ ($j<i$) to its left that is occupied and satisfies $B_i\sim B_j$. We thus modify Lines \ref{line:step3-eg}-\ref{line:step3-end-eg} so that a given block $B_i$ is labeled by one of the visible, occupied blocks to its left (Figure \ref{fig:propagate}). The modification is formalized in the general algorithm (Algorithm \ref{alg:almost-exact}) given below.
  
  \subsection{Exact recovery for general \texorpdfstring{$d$}{}.} 
  The propagation scheme becomes more intricate for $d \geq 2$. For general $d$, we divide the region $\cS_{d,n}$ into hypercubes\footnote{For $d=1,2,3$, the hypercubes represent line segments, squares and cubes respectively.} with volume parametrized as $\chi \log n$. The underlying intuition for successful propagation stems from the condition $\lambda \nu_d>1$. This condition ensures that the graph formed by connecting all pairs of mutually visible vertices is connected with high probability, a necessary condition for exact recovery. Moreover, the condition ensures that every vertex has $\Omega(\log n)$ vertices within its visibility radius of $(\log n)^{1/d}$. It turns out that the condition $\lambda \nu_d>1$ also ensures that blocks of volume $\chi \log n$ for $\chi > 0$ sufficiently small satisfy the same connectivity properties.
  
  To propagate the labels, we need a schedule to visit all occupied blocks. However, the existence of unoccupied blocks precludes the use of a predefined schedule, such as a lexicographic order scan. Instead, we employ a data-dependent schedule. The schedule is determined by the set of occupied blocks, which in turn is determined in Step \ref{sample:step-1} of Definition \ref{def:gsbm}. Crucially, the schedule is thus independent of the community labels and edges, conditioned on the number of vertices in each block. We first introduce an auxiliary graph $H = (V^{\dagger}, E^{\dagger})$, which records the connectivity relation among occupied blocks.
  \begin{definition}[Visibility graph]
  Consider a Poisson point process $V\subset\cS_{d,n}$, the $(\chi\log n)$-block partition of $\cS_{d,n}$, $\{B_i\}_{i=1}^{n/(\chi\log n)}$, corresponding vertex sets $\{V_i\}_{i=1}^{n/(\chi\log n)}$, and a constant $\delta>0$. The $(\chi, \delta)$-visibility graph is denoted by $H = (V^{\dagger}, E^{\dagger})$, where the vertex set $V^{\dagger} = \{i \in [n/(\chi\log n)] : |V_i| \geq \delta \log n\}$ consists of all $\delta$-occupied blocks and the edge set is given by $E^{\dagger}=\{\{i,j\}\colon i,j \in V^{\dagger}, B_i \sim B_j\}$.
  \end{definition}
  We adopt the standard connectivity definition on the visibility graph. Lemma \ref{lemma:connectivity} shows that the visibility graph of the Poisson point process underlying the GSBM is connected with high probability. Based on this connectivity property, we establish a propagation schedule as follows. We construct a spanning tree of the visibility graph and designate a root block as the initial block. We specify an ordering of $V^{\dagger} = \{i_1, i_2, \dots \}$ according to a tree traversal (e.g., breadth-first search). Labels are propagated according to this ordering, thus labeling vertex sets $V_{i_1}, V_{i_2}, \cdots$ (see Figure \ref{fig:propagate}). Letting $p(i)$ denote the parent of vertex $i \in V^{\dagger}$ according to the rooted tree, we label $V_{i_j}$ using $V_{p(i_j)}$ as reference.
  Importantly, the visibility graph and thus the propagation schedule is determined only by the locations of vertices, independent of the labels and edges between mutually visible blocks.

  \begin{figure}[htbp]
  \centering
  \includegraphics[width=.91\linewidth]{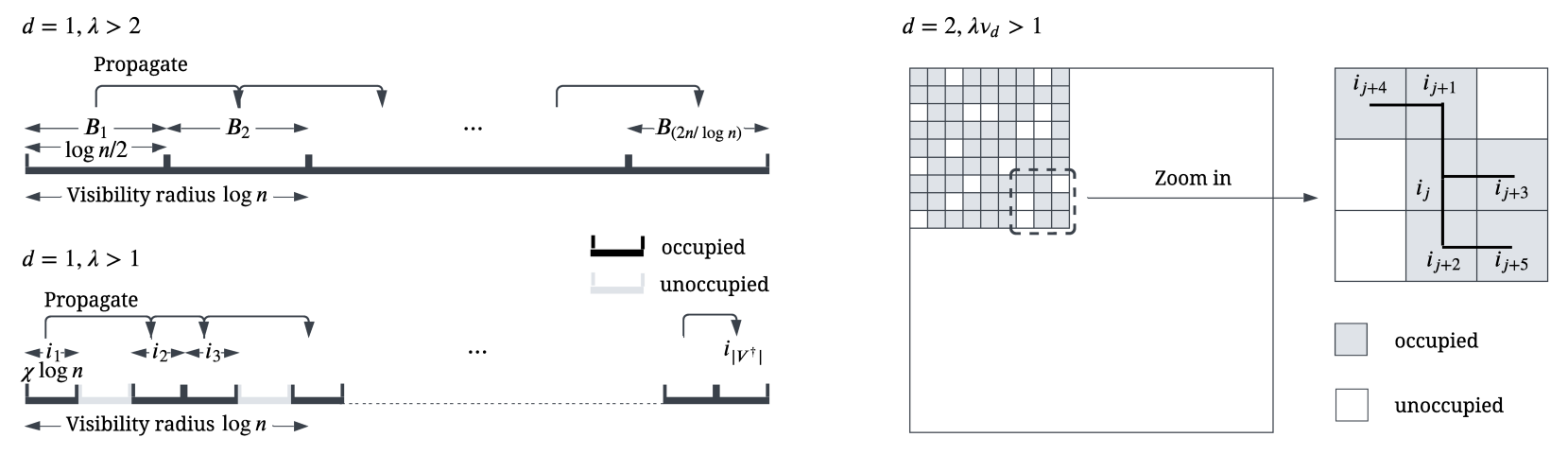}
    \caption{Propagation schedule for $d=1$ and $d=2$.}
    \label{fig:propagate}
  \end{figure}

  \begin{breakablealgorithm}
      \caption{Exact recovery for the GSBM} \label{alg:almost-exact}
      \begin{algorithmic}[1]
      \Require $G \sim \text{GSBM}(\lambda, n,a,b,d)$.
      \Ensure{ An estimated community labeling $\widetilde{\sigma}: V \to \{-1,1\}$.}
      \vspace{5pt}
      \State{{\bf Phase I:}} 
      \State Take small enough $\chi,\delta>0$, satisfying the conditions to be specified in \eqref{eq:chi-fomula} and \eqref{eq:delta-formula} respectively. 
      \State Partition the region $\cS_{d,n}$ into $n/(\chi\log n)$ 
   blocks of volume $\chi\log n$ each. Let $B_i$ be the $i$th block and $V_i$ be the set of vertices in $B_i$ for $i\in [n/(\chi\log n)]$. \label{line:partition}
   \State Form the associated visibility graph $H = (V^{\dagger}, E^{\dagger})$. \label{line:visibility-graph}
      \If{$H$ is disconnected}\label{line:H}
      \State Return \texttt{FAIL}.
      \EndIf
      \State Find a rooted spanning tree of $H$, ordering $V^{\dagger} = \{i_1, i_2, \cdots\}$ in breadth-first order. \label{line:tree-order}
  \State{Apply \texttt{Pairwise Classify} (Algorithm \ref{alg:initial-block}) on input $G, V_{i_1}, a, b$ to obtain a labeling $\widehat{\sigma}$ of $V_{i_1}$.}\label{line:step2}  
      \For{$j=2,\cdots, |V^\dagger|$} \label{line:step3}
      \State{Apply \texttt{Propagate} (Algorithm \ref{alg:propagation}) on input $G, V_{p(i_j)}, V_{i_j}$ to determine the labeling $\widehat{\sigma}$ on $V_i$. }
      \EndFor \label{line:step3-end}
      \For{$u \in V \setminus \left(\cup_{i \in V^{\dagger}} V_i\right)$}
      \State Set $\widehat{\sigma}(u) = 0$.
      \EndFor
      \vspace{5pt}
      \State{{\bf Phase II:}}
      \For{$u\in V$}  
      \State Apply \texttt{Refine} (Algorithm \ref{alg:refine}) on input $G, \widehat{\sigma}, u$ to determine $\widetilde{\sigma}(u)$.
      \EndFor
      \end{algorithmic}
  \end{breakablealgorithm}
  
  Algorithm \ref{alg:almost-exact} presents our algorithm for the general case. We partition the region $\cS_{d,n}$ into blocks with volume $\chi \log n$, for a suitably chosen $\chi > 0$. A threshold level of occupancy $\delta>0$ is specified. The value of $\chi$ is carefully chosen to ensure that the visibility graph $H$ is connected with high probability in Line \ref{line:H}. In Line \ref{line:step2}, we label an initial $\delta$-occupied block, corresponding to the root of $H$, using the \texttt{Pairwise Classify} subroutine. In Lines \ref{line:step3}-\ref{line:step3-end}, we label the occupied blocks in the tree order determined in Line \ref{line:tree-order}, using the \texttt{Propagate} subroutine. Those vertices appearing in unoccupied blocks are assigned a label of $0$. At the end of Phase I, we obtain a first-stage labeling $\widehat{\sigma}\colon V \to \{-1,0,1\}$, such that with high probability, all occupied blocks are labeled with few mistakes. Finally, Phase II refines the almost-exact labeling $\widehat{\sigma}$ to an exact one $\widetilde{\sigma}$ 
  according to Algorithm \ref{alg:refine}.
  
  To analyze the runtime, note that the number of edges (input size) is $\Theta(n \log n)$ with high probability.
  The visibility graph $H = (V^{\dagger}, E^{\dagger})$ can be formed in $O(n/\log n)$ time, since $|V^{\dagger}| = O(n/\log n)$ and each vertex has at most $\Theta(1)$ possible neighbors. If $H$ is connected, a spanning tree can be found in $O(|E^{\dagger}| \log(|E^{\dagger}|))$ time using Kruskal's algorithm, and $|E^{\dagger}| = O(n/\log n)$.
  The subsequent \texttt{Pairwise Classify} subroutine goes over all edges of the vertices in $V_1$ to count the common neighbors, with a runtime of $O(\log^2 n)$. 
  Next, the \texttt{Propagation} subroutine requires counting edges and non-edges from any given vertex in an occupied block to the vertices in its reference block, yielding a runtime of $O(n \log n)$.
  Finally, \texttt{Refine} runs in $O(n \log n)$ time, since each visible neighborhood contains $O(\log n)$ vertices.
  We conclude that Algorithm \ref{alg:almost-exact} runs in $O(n \log n)$ time, which is linear in the number of edges.

  \subsection{Proof outline.}
  We outline the analysis of Algorithm \ref{alg:almost-exact}. We begin with Phase I. Our goal is to show that in addition to achieving almost exact recovery stated in Theorem \ref{theorem:almost-exact-recovery}, Phase I also satisfies an error dispersion property. Let $\cN(u) = \{v\in V, \|u-v\|\le (\log n)^{1/d}\}$ for a vertex $u$. Namely, for any $\eta>0$, we can take suitable $\chi, \delta >0$ so that with high probability, every vertex has at most $\eta \log n$ incorrectly classified vertices in its local neighborhood $ \cN(u)$. Theorem \ref{thm:phase1-summary} will present the formal results.
  
  \vskip6pt
  \paragraph*{Phase I: Connectivity of the visibility graph.}
  We first establish that the block division specified in Algorithm \ref{alg:almost-exact} ensures that the resulting visibility graph $H = (V^{\dagger}, E^{\dagger})$ is connected.
  Elementary analysis shows that any fixed subregion of $\mathbb{R}^d$ with volume $\nu \log n$ contains $\Omega(\log n)$ vertices with probability $1-o(n^{-1})$, whenever $\nu > 1/{\lambda}$.
  A union bound over all vertices then implies that all vertices' neighborhoods have $\Omega(\log n)$ vertices. In the special case of $d=1$, the left neighborhood of a given vertex has volume $\log n$. The observation with $\nu = 1$ implies that when $\lambda > 1$, the left neighborhood of every vertex has $\Omega(\log n)$ points. In fact, we can make a stronger claim: if the block lengths are chosen to be sufficiently small (according to \eqref{eq:chi-fomula}), then we can ensure that for a given vertex $v \in V_i$, there are $\Omega(\log n)$ vertices among $\{V_j : B_j \sim B_i, j \neq i\}$. In turn, by an appropriate choice of $\delta$ (according to \eqref{eq:delta-formula}), for a given block $B_i$, there is at least one $\delta$-occupied, visible block to its left. Hence, the visibility graph is connected, as shown in Proposition \ref{lem:visibility-d1-small-lambda}.
  
  However, the analysis becomes more intricate when $d\geq 2$. In particular, while a lexicographic order propagation schedule succeeds for $d = 1$, it fails for $d \geq 2$. For example, when $d = 2$, we cannot say that every vertex has $\Omega(\log n)$ vertices in the top left quadrant of its neighborhood, since the volume of the quadrant is only $\nu_d \log n/4$.
  We therefore establish connectivity of $H$ using the fact that if $H$ is disconnected, then $H$ must contain an isolated connected component. The key idea is that if there is an isolated connected component in $H$, then the corresponding occupied blocks in $\mathbb{R}^d$ must be surrounded by sufficiently many unoccupied blocks. However, as Lemma \ref{lem:unoccupied-cluster} shows, there cannot be too many adjacent unoccupied blocks, which prevents the existence of isolated connected components. As a result, the visibility graph is connected, as shown in Lemma \ref{lemma:connectivity}.
  
  \vskip6pt
  \paragraph*{Phase I: Labeling the initial block.}
  We show that the \texttt{Pairwise Classify} (Line \ref{line:step2}) subroutine ensures the successful labeling for $V_{i_1}$. Since we only need to determine community labels up to a global flip, we are free to set $\widehat{\sigma}(u_0) = 1$ for an arbitrary $u_0 \in V_{i_1}$.
  For any $u\in V_{i_1}\setminus\{u_0\}$, where $|V_{i_1}|=m_1$, Lemma \ref{lem:N_u,v-Binomial} shows that the number of common neighbors of $u$ and $u_0$ follows a binomial distribution; in particular, Bin$(m_1-2, (a^2 + b^2)/2)$ if $\sigma_0(u)=\sigma_0(u_0)$ and Bin$(m_1-2, ab)$ otherwise. We thus threshold the number of common neighbors in order to classify $u$ relative to $u_0$. Lemma \ref{lem:step2} bounds the probability of misclassifying a given vertex $u \in V_{i_1}\setminus\{u_0\}$, using Hoeffding's inequality. A union bound then implies that all vertices are correctly classified with high probability. 
  
  \vskip6pt
  \paragraph*{Phase I: Propagating labels among occupied blocks.} 
  We show that the \texttt{Propagate} subroutine ensures that $\widehat{\sigma}$ makes at most $M$ mistakes in each occupied block, where $M$ is a suitable constant. Our analysis reduces to bounding the probability that for a given $i \in V^{\dagger}$, the estimator $\widehat{\sigma}$ makes more than $M$ mistakes on $V_i$, conditioned on making no more than $M$ mistakes on $V_{p(i)}$. In order to analyze the probability that a given vertex $v \in V_i$ is misclassified, we condition on the \emph{label configuration} of $V_{p(i)}$, meaning the number of vertices labeled $s$ according to $\sigma_0(\cdot)$ and $t$ according to $\sigma_0(u_0) \widehat{\sigma}(\cdot)$, for $s,t \in \{-1, +1\}$. We find a uniform upper bound on the probability of misclassifying an individual vertex $v \in V_i$ when applying the thresholding test given in Algorithm \ref{alg:propagation}, over all label configurations of $V_{p(i)}$ with at most $M$ mistakes. To bound the total number of mistakes in $V_i$, observe that the labels of all vertices in $V_i$ are decided based on disjoint subsets of edges between $V_i$ and $V_{p(i)}$. Therefore, conditioned on the label configuration of $V_{p(i)}$, the number of mistakes in $V_i$ can be stochastically dominated by a binomial random variable. It follows by elementary analysis that the number of mistakes in $V_i$ is at most $M$ with probability $1-o(n^{-1})$, as long as $M$ is a suitably large constant.
  
  \vskip6pt
  \paragraph*{Phase II: Refining the labels.} 
  Our final step is to refine the initial labeling $\widehat{\sigma}$ from Phase I into a final labeling $\widetilde{\sigma}$. Unfortunately, conditioning on a successful labeling $\widehat{\sigma}$ destroys the independence of edges, making it difficult to bound the error probability of $\widetilde{\sigma}$. This issue can be remedied using a technique called \emph{graph splitting}, used in the two-round procedure of \cite{Abbe2015}. Graph splitting is a procedure to form two graphs, $G_1$ and $G_2$, from the original input graph. A given edge in $G$ is independently assigned to $G_1$ with probability $p$, and $G_2$ with probability $1-p$, for $p$ chosen so that almost exact recovery can be achieved on $G_1$, while exact recovery can be achieved on $G_2$. Since the two graphs are nearly independent, conditioning on the success of almost exact recovery in $G_1$ essentially maintains the independence of edges in $G_2$.
  
  While we believe that our Phase I algorithm, along with graph splitting, would achieve the information-theoretic threshold in the GSBM, we instead directly analyze the robustness of Poisson testing. Specifically, we bound the error probability of labeling a given vertex $v \in V$ with respect to the worst-case labeling over all labelings that differ from $\sigma_0$ on at most $\eta \log n$ vertices in the neighborhood of $v$. Since $\widehat\sigma$ makes at most $\eta \log n$ errors with probability $1-o(1/n)$ (Theorem \ref{thm:phase1-summary}), we immediately obtain a bound on the error probability of $\widetilde{\sigma}(v)$.
  
  The proof in Section \ref{sec:proof-phase-II} bounds the worst-case error probability. We define $x = \lambda \nu_d [a, 1-a,b,1-b ]/2$ and $y = \lambda \nu_d [b, 1-b,a,1-a ]/2$, so that $D\given \{\sigma_0(u)=1\}\sim\text{Poisson}(x)$ and $D\given \{\sigma_0(u)=-1\}\sim\text{Poisson}(y)$. The condition $\lambda \nu_d(1 - \sqrt{ab} - \sqrt{(1-a)(1-b)}) > 1$ in Theorem \ref{theorem:exact-recovery} is equivalent to $D_+(x \| y) > 1$, where $D_+(x \| y)$ is the Chernoff--Hellinger divergence of $x$ and $y$ \cite{Abbe2015}. To provide intuition for bounding the error probability at a given vertex $u\in V$, consider 
  the genie-aided estimator $\sigma_{\textsf{genie}}(u)$, and assume $\sigma_0(u)=1$ without loss of generality. Recalling the definition of $\tau$ \eqref{eq:tau}, the estimator $\sigma_{\textsf{genie}}(u)$ makes a mistake when $\tau(u,\sigma_0)\le0$. It can be shown that this occurs with probability at most $n^{-D_+(x\|y)}$. Viewing the worst-case labeling $\sigma$ differing from $\sigma_0$ on at most $\eta\log n$ vertices as a perturbation of $\sigma_0$, we show that $\tau(u,\sigma)\le0$ implies $\tau(u,\sigma_0)\le\rho\eta\log n$ for a certain constant $\rho$. Similarly, the probability of such a mistake is at most $n^{-D_+(x\|y)+\rho\eta/2}$. 
  Thus, for small $\eta>0$, the condition $D_+(x \| y) > 1$ and a union bound over all vertices yields an error probability of $o(1)$.
  
  \section{Phase I: Proof of almost exact recovery}\label{sec:proof-phase-I}
  In this section, we prove Theorem \ref{theorem:almost-exact-recovery}. We begin by defining sufficiently small constants $\chi$ and $\delta$ used in Algorithm \ref{alg:almost-exact}. We define $\chi$ to satisfy the following condition, relying on $\lambda$ and $d$:
  \#\label{eq:chi-fomula}
  \nu_d \big(1 - 3\sqrt{d}\chi^{1/d}/2  \big)^d \ge (\nu_d + 1/{\lambda} )/2 \text{ and } 0<\chi <[(\mathds{1}_{d=1} + \nu_d\cdot\mathds{1}_{d\ge2})- 1/{\lambda} ]/2.
  \#
  The first condition is satisfiable since $\lim_{\chi \to 0} \nu_d \big(1 - 3\sqrt{d}\chi^{1/d}/2  \big)^d = \nu_d$
  and we have $\nu_d>(\nu_d + 1/{\lambda} )/2$ when $\lambda \nu_d>1$. The second one is also satisfiable since $\mathds{1}_{d=1} + \nu_d\cdot\mathds{1}_{d\ge2} = 1 > 1/\lambda$ if $d = 1$ and otherwise $\mathds{1}_{d=1} + \nu_d\cdot\mathds{1}_{d\ge2} = \nu_d > 1/\lambda$, under the conditions of Theorems \ref{theorem:exact-recovery} and \ref{theorem:almost-exact-recovery}.
  Associated with the choice of $\chi$, there is a constant $\delta'(\text{or }\widetilde\delta\text{ for }d\ge 2)>0$ such that for any block $B_i$, its visible blocks $\bigcup_{j\in V}\{V_j\colon B_j \sim B_i\}$ contain at least $\delta'\log n$ (or $\widetilde\delta\log n$) vertices with probability $1 - o(n^{-1})$. We define $R_d =1- \sqrt{d}\chi^{1/d}/2$. The first condition in \eqref{eq:chi-fomula} implies that $\sqrt{d}\chi^{1/d}/2<1/3$ and thus $R_d>0$. 
  With specific values of $\delta'$ and $\widetilde\delta$ to be determined in Proposition \ref{lem:visibility-d1-small-lambda} and Lemma \ref{lem:fN_i-size}, respectively, we define $\delta$ such that
  \#\label{eq:delta-formula}
  0<\delta<(\delta'\chi)\cdot \mathds{1}_{d=1} + [\widetilde{\delta}\chi/(\nu_dR^d)]\cdot\mathds{1}_{d\ge2}.
  \#
  Propositions \ref{lem:visibility-d1-small-lambda} and \ref{lemma:connectivity} will present the connectivity properties of $\delta$-occupied blocks of volume $\chi \log n$, for $\chi$ and $\delta$ satisfying the conditions in \eqref{eq:chi-fomula} and \eqref{eq:delta-formula}, respectively.
  
  We now record some preliminaries (see \cite{boucheron2013concentration}).
  \begin{lemma}[Chernoff bound, Poisson]\label{lem:Chernoff-poisson} Let $X\sim\text{Poisson}(\mu)$ with $\mu>0$. For any $t>0$,
  \$
  \pr(X\ge \mu + t) 
  \le \exp\Big(-\frac{t^2}{2(\mu+t)}\Big).
  \$
  For any $0<t<\mu$, we have 
  \$
  \pr(X\le \mu - t) \le \exp\Big(-(\mu - t) \log \Big(1 - \frac{t}{\mu} \Big)-t\Big).
  \$
  \end{lemma}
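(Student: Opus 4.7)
The plan is to apply the standard Chernoff method to the Poisson moment generating function, optimize the free parameter, and simplify the resulting Cram\'er transform to the stated Bennett-type form. The starting point is the identity $\mathbb{E}[e^{\theta X}] = \exp(\mu(e^\theta - 1))$ for every $\theta \in \mathbb{R}$ when $X \sim \text{Poisson}(\mu)$. For the upper tail, Markov's inequality applied to $e^{\theta X}$ with $\theta > 0$ gives
\[
\pr(X \ge \mu + t) \le \exp\bigl(\mu(e^\theta - 1) - \theta(\mu + t)\bigr),
\]
and differentiating the exponent in $\theta$ yields the optimizer $\theta^\ast = \log(1 + t/\mu)$. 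Substituting back produces the exact tail bound $\pr(X \ge \mu + t) \le \exp\bigl(t - (\mu + t)\log(1 + t/\mu)\bigr)$, and the claimed inequality then reduces to the elementary calibration $(1+u)\log(1+u) - u \ge u^2/(2(1+u))$ with $u = t/\mu \ge 0$. I would verify this calibration by multiplying through by $1+u > 0$, observing that both sides and their first derivatives vanish at $u = 0$, and computing that the second derivative of the difference equals $2\log(1+u) \ge 0$, so the difference is non-negative on $[0, \infty)$.

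For the lower tail, I would apply the same procedure with negative parameter, equivalently using Markov's inequality on $e^{-\theta X}$ for $\theta > 0$, to obtain
\[
\pr(X \le \mu - t) \le \exp\bigl(\mu(e^{-\theta} - 1) + \theta(\mu - t)\bigr).
\]
The hypothesis $0 < t < \mu$ ensures that the optimizer $\theta^\ast = -\log(1 - t/\mu)$ is well-defined and strictly positive, and substituting it directly yields the claimed bound $\exp\bigl(-(\mu-t)\log(1 - t/\mu) - t\bigr)$ with no further simplification needed --- which is why the lower tail is quoted in the ``exact Cram\'er transform'' form rather than a cleaner Bennett-type surrogate. The entire result is wholly standard, directly attributable to \cite{boucheron2013concentration}, and the only mild subtlety lies in the upper-tail calibration inequality, which absorbs the slack between the exact Cram\'er transform and the quadratic-denominator form $\exp(-t^2/(2(\mu+t)))$. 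There is no substantive obstacle.
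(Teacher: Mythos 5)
Your proof is correct, and the paper itself gives no argument for this lemma---it is stated as a preliminary with a citation to \cite{boucheron2013concentration}, whose derivation is exactly the Chernoff/Cram\'er computation you carry out (including the calibration $(1+u)\log(1+u)-u\ge u^2/(2(1+u))$, whose second-derivative verification checks out). Nothing further is needed.
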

  
  \begin{lemma}[Hoeffding's inequality]\label{lem:Hoeffding-bounded} 
          Let $X_1,\cdots, X_n$ be independent bounded random variables with values $X_i\in[0,1]$ for all $1\le i\le n$. Let $X=\sum_{i=1}^n X_i$ and $\mu = \mathbb{E}[X]$. Then for any $t\ge0$, it holds that 
          \$
          \pr(X \ge \mu+ t) \le \exp(-2t^2/n), \quad
          \pr(X \le \mu -t) \le \exp(-2t^2/n).
          \$
  \end{lemma}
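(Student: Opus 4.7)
The plan is to prove both tails via the standard Chernoff--Cram\'er exponential moment method, reducing the global concentration statement to a one-variable moment generating function (MGF) estimate (Hoeffding's lemma) and then optimizing over the tilt parameter.

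First I would handle the upper tail. Fix $s>0$ and apply Markov's inequality to the nonnegative variable $e^{s(X-\mu)}$:
\[
\pr(X \ge \mu + t) \;=\; \pr\!\big(e^{s(X-\mu)} \ge e^{st}\big) \;\le\; e^{-st}\,\mathbb{E}\big[e^{s(X-\mu)}\big].
\]
By independence, the MGF factorizes as $\mathbb{E}[e^{s(X-\mu)}] = \prod_{i=1}^n \mathbb{E}[e^{s(X_i - \mathbb{E}[X_i])}]$, where each factor concerns a zero-mean variable supported in an interval of length $1$ (since $X_i\in[0,1]$).

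Next I would invoke (or prove) Hoeffding's lemma: for any zero-mean $Y$ with $Y\in[a,b]$, $\mathbb{E}[e^{sY}]\le\exp(s^2(b-a)^2/8)$. The standard argument sets $\psi(s) = \log \mathbb{E}[e^{sY}]$, observes $\psi(0)=\psi'(0)=0$, and bounds $\psi''(s)$ as the variance of $Y$ under the exponentially tilted probability measure with density proportional to $e^{sY}$; the tilted variable still lies in $[a,b]$, so its variance is at most $(b-a)^2/4$, and Taylor's theorem gives $\psi(s)\le s^2(b-a)^2/8$. Applied to $Y_i = X_i - \mathbb{E}[X_i]$ (interval length $1$), this yields $\mathbb{E}[e^{sY_i}] \le e^{s^2/8}$, so combining with the Markov step,
\[
\pr(X \ge \mu + t) \;\le\; \exp\!\big({-st + ns^2/8}\big).
\]
Optimizing by choosing $s = 4t/n$ produces the claimed bound $\exp(-2t^2/n)$.

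For the lower tail, I would apply the identical argument to the independent bounded variables $X_i' := 1 - X_i \in [0,1]$, whose sum $X' = n - X$ has mean $n - \mu$; then $\pr(X \le \mu - t) = \pr(X' \ge (n-\mu) + t) \le \exp(-2t^2/n)$. The only genuine ingredient is Hoeffding's lemma---the quadratic bound on the cumulant generating function of a bounded zero-mean variable---so the main obstacle is this convexity/variance computation; once it is in hand, the remaining steps (Markov, independence-based MGF factorization, and quadratic optimization in $s$) are routine.
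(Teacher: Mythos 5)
Your proof is correct: it is the standard Chernoff--Cram\'er argument combined with Hoeffding's lemma (the quadratic bound on the cumulant generating function of a bounded zero-mean variable, proved via the variance of the tilted measure), followed by optimizing $s=4t/n$, and the reduction of the lower tail to the upper tail via $X_i'=1-X_i$ is also fine. The paper does not prove this lemma at all---it records it as a standard preliminary with a citation to Boucheron, Lugosi, and Massart---so your argument supplies the classical textbook proof that the paper implicitly relies on.
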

  
  \begin{lemma}[Chernoff upper bound]\label{lem:Chernoff-binomial} 
  Let $X_1,\cdots, X_n$ be independent Bernoulli random variables. Let $X=\sum_{i=1}^n X_i$ and $\mu = \E(X)$. Then for any $t>0$, we have
      \$
      \pr(X\ge (1+t)\mu) \le \Big(\frac{e^t}{(1+t)^{(1+t)}} \Big)^{\mu}.
      \$
  \end{lemma}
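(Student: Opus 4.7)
The plan is to use the standard Chernoff (exponential Markov) method, which is the canonical route for this bound and handles Bernoulli variables with possibly different success probabilities without any extra work. First I would fix an arbitrary $s>0$ and write
\[
\pr(X \ge (1+t)\mu) \;=\; \pr(e^{sX} \ge e^{s(1+t)\mu}) \;\le\; e^{-s(1+t)\mu}\,\E[e^{sX}],
\]
by Markov's inequality applied to the nonnegative random variable $e^{sX}$.

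Next I would use independence of the $X_i$ to factor the moment generating function: $\E[e^{sX}] = \prod_{i=1}^n \E[e^{sX_i}]$. For each Bernoulli summand with success probability $p_i$, a direct computation gives $\E[e^{sX_i}] = 1 + p_i(e^s - 1)$, and the elementary inequality $1 + x \le e^x$ yields $\E[e^{sX_i}] \le \exp(p_i(e^s - 1))$. Multiplying over $i$ and using $\mu = \sum_i p_i$ gives $\E[e^{sX}] \le \exp(\mu(e^s - 1))$, hence
\[
\pr(X \ge (1+t)\mu) \;\le\; \exp\bigl(\mu(e^s - 1) - s(1+t)\mu\bigr).
\]

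The final step is to optimize in $s$. Differentiating the exponent in $s$ and setting the derivative to zero gives $e^s = 1+t$, i.e.\ $s = \log(1+t)$, which is positive as required since $t>0$. Substituting back produces exponent $\mu t - (1+t)\mu \log(1+t)$, so that
\[
\pr(X \ge (1+t)\mu) \;\le\; \exp\bigl(\mu t - (1+t)\mu\log(1+t)\bigr) \;=\; \left(\frac{e^t}{(1+t)^{1+t}}\right)^{\mu},
\]
which is the claimed bound. There is essentially no obstacle here: the only point requiring mild care is noting that the MGF factorization together with the inequality $1+x \le e^x$ is what allows heterogeneous $p_i$ to be absorbed into a single parameter $\mu$, so that the bound depends on the $p_i$ only through their sum. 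Since this is a textbook result (see \cite{boucheron2013concentration}), the proof can reasonably be stated very briefly or even omitted.
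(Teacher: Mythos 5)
Your proof is correct and is the standard exponential-moment (Chernoff) argument; the paper itself gives no proof, simply citing this as a textbook preliminary from \cite{boucheron2013concentration}, and your derivation is exactly the canonical one underlying that reference.
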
 

We also define a homogeneous Poisson point process used to generate locations as described in Definition \ref{def:gsbm}.
  \begin{definition}[\cite{kingman1992poisson}]\label{def:PPP}
    A homogeneous Poisson point process with intensity $\lambda$ on $S \subseteq \R^{d}$ is a random countable set $\Phi := \{v_1, v_2,\cdots \} \subset S$ such that
    \begin{enumerate}
        \item For any bounded Borel set $B\subset \R^d$, the count $N_\Phi(B) := |\Phi\cap B| = |\{i\in \N\colon v_i\in B\}|$ has a Poisson distribution with mean $\lambda\text{vol}(B)$, where $\text{vol}(B)$ is the measure (volume) of $B$.
        \item For any $k\in \N$ and any disjoint Borel sets $B_1,\cdots, B_k \subset \R^d$, the random variables $N_\Phi(B_1),$ $\cdots,$ $N_\Phi(B_k)$ are mutually independent.
    \end{enumerate}
\end{definition}
In the GSBM, the set of locations $V=\{v_1,v_2,\cdots\}$ are generated by a homogeneous Poisson point process with intensity $\lambda$ on $\cS_{n,d}$. The established properties guarantee that $|V|$ follows $\text{Poisson}(\lambda n)$. Moreover, conditioned on $|V|$, the locations $\{v_i\}_{i\in[|V|]}$ are independently and uniformly distributed in $\cS_{n,d}$. This gives a simple construction of a Poisson point process as follows:
\begin{enumerate}
    \item Sample $N_V \sim \text{Poisson}(\lambda n)$;
    \item Sample $v_1, \cdots, v_{N_V}$ independently and uniformly in the region $\cS_{n,d}$.
\end{enumerate}
This procedure ensures that the resulting set $\{v_1, \cdots, v_{N_V}\}$ constitutes a Poisson point process as desired.

  \subsection{Connectivity of the visibility graph.}
  
  In this subsection, we establish the connectivity of the visibility graph $H = (V^{\dagger}, E^{\dagger})$ from Line \ref{line:visibility-graph} of Algorithm \ref{alg:almost-exact}. The following lemma shows that regions of appropriate volume have $\Omega(\log n)$ vertices with high probability.
  \begin{lemma}\label{lemma:volume-points}
      For any fixed subset $B\subset\cS_{d,n}$ with a volume $\nu\log n$ such that $\lambda \nu>1$, there exist constants $0 < \gamma < \lambda \nu$ and $\epsilon>0$ such that \$\pr(|V(B)| > \gamma \log n) \geq 1- n^{-1-\epsilon}.\$ 
  \end{lemma}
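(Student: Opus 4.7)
Since $V$ is sampled from a homogeneous Poisson point process of intensity $\lambda$ on $\cS_{d,n}$, Definition \ref{def:PPP} gives $|V(B)| \sim \text{Poisson}(\mu)$ with $\mu = \lambda \nu \log n$. The statement is therefore a pure lower-tail estimate for a Poisson random variable, and the plan is simply to apply the Chernoff bound in Lemma \ref{lem:Chernoff-poisson} and choose $\gamma$ carefully.

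Concretely, I would set $t = (\lambda\nu - \gamma)\log n$, so that $\mu - t = \gamma\log n$, and invoke Lemma \ref{lem:Chernoff-poisson} to write
\[
\pr\!\big(|V(B)| \le \gamma \log n\big) \;\le\; \exp\!\Big(-\gamma\log n\cdot\log(\gamma/(\lambda\nu)) \;-\; (\lambda\nu - \gamma)\log n\Big) \;=\; n^{-h(\gamma)},
\]
where I define
\[
h(\gamma) \;:=\; (\lambda\nu - \gamma) \;-\; \gamma\log\!\big(\lambda\nu/\gamma\big).
\]
The task then reduces to showing that we can choose $\gamma \in (0,\lambda\nu)$ so that $h(\gamma) > 1$ strictly.

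A quick calculus check gives $h'(\gamma) = \log(\gamma/(\lambda\nu)) < 0$ on $(0,\lambda\nu)$, so $h$ is strictly decreasing there, with $h(\lambda\nu) = 0$ and $\lim_{\gamma \to 0^+} h(\gamma) = \lambda\nu$ (using $\gamma\log\gamma \to 0$). By the hypothesis $\lambda\nu > 1$, there exists $\gamma_0 \in (0,\lambda\nu)$ with $h(\gamma_0) > 1$; taking any $\gamma \in (0,\gamma_0]$ and setting $\epsilon := h(\gamma) - 1 > 0$ yields $\pr(|V(B)| \le \gamma \log n) \le n^{-1-\epsilon}$, which is exactly the claim after complementation.

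There is no real obstacle here: the argument is a textbook Chernoff calculation, and the only content is recognizing that the exponent $h$ is precisely the Kullback--Leibler-type rate function for Poisson and that the condition $\lambda\nu > 1$ ensures $h$ exceeds $1$ near $0$. If a cleaner form is desired, one can equivalently use the elementary bound $\pr(\text{Poisson}(\mu) \le k) \le e^{-\mu}(e\mu/k)^k$ for $k < \mu$ and take $k = \gamma\log n$, which gives the same exponent $h(\gamma)$ up to rewriting.
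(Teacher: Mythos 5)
Your proof is correct and follows essentially the same route as the paper: the paper also applies the Poisson Chernoff lower-tail bound and analyzes the function $g(x) = x(\log x - \log(\lambda\nu)) + \lambda\nu - x$, which is identical to your $h$, using its monotonicity and the limits $g(\lambda\nu)=0$, $\lim_{x\to 0^+}g(x)=\lambda\nu>1$ to pick $\gamma$ (the paper fixes $g(\gamma) > (1+\lambda\nu)/2$ and $\epsilon = (\lambda\nu-1)/2$, a cosmetic difference from your choice of $\epsilon = h(\gamma)-1$).
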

  \begin{proof}
      For a subset $B$ with $\text{vol}(B)=\nu\log n$, we have $|V(B)|\sim \text{Poisson}(\lambda \nu\log n)$. To show the lower bound, we define a function $g:(0, \lambda \nu]\to\R$ as $g(x) = x(\log x - \log (\lambda \nu)) +\lambda \nu-x$. 
      It is easy to check that $g$ is continuous and decreases on $(0, \lambda \nu]$ with $\lim_{x\to0}g(x)=\lambda \nu$ and $g(\lambda \nu)=0$. When $\lambda \nu>1$, it holds that $\lim_{x\to0}g(x)=\lambda \nu>(1+\lambda \nu)/2$ and thus there exists a constant $\gamma \in (0, \lambda \nu)$ such that $g(\gamma) > (1+\lambda \nu)/2$. Thus, the Chernoff bound in Lemma \ref{lem:Chernoff-poisson} yields that 
      \$
      \pr(|V(B)|\le \gamma \log n) \le 
      \exp\Big(-[\gamma(\log \gamma - \log (\lambda \nu)) +\lambda \nu-\gamma]\log n\Big) = n^{-g(\gamma)} \leq n^{-(1+\lambda \nu)/2}.
      \$
      Taking $\epsilon = (\lambda \nu - 1)/2 >0$ concludes the proof.
  \end{proof}
  
  \subsubsection{The simple case when \texorpdfstring{\(d=1\)}{} \text{and} \texorpdfstring{\(\lambda >1\)}{}.}
  We start with the simple case when $d=1$.
  
  \vskip6pt
  \paragraph*{An example when $\lambda >2$.} We first study an example when $d=1$ and $\lambda>2$. If $\lambda>2$ and $\vol(B_i) = \log n/2$, we have $\lambda\vol(B_i)/\log n>1$, and thus Lemma \ref{lemma:volume-points} ensures the existence of positive constants $\gamma$ and $\epsilon$ such that $\pr(|V_i| > \gamma \log n) \geq 1 - n^{-1-\epsilon}$ for all $i\in[2n/\log n]$. Thus, the union bound gives that 
  \#\label{eq:connected-H-lambda2}
  \pr\Big(\bigcap_{i=1}^{2n/\log n}\big\{|V_i|> \gamma \log n\big\}\Big) = 1 - \pr\Big(\bigcup_{i=1}^{2n/\log n}\big\{|V_i|\le \gamma \log n\big\}\Big) \geq 1 - \frac{2n}{\log n}\cdot n^{-1-\epsilon} = 1-o(1).
  \#
  Since all blocks are $\gamma$-occupied, the $(1/2, \gamma)$-visibility graph $H = (V^{\dagger}, E^{\dagger})$ is trivially connected.

  \vskip6pt
  \paragraph*{General case when $\lambda >1$.} For small density $\lambda$, we partition the interval into small blocks and establish the existence of visible occupied blocks on the left side of each block.
  \begin{proposition}\label{lem:visibility-d1-small-lambda}
  If $d=1$ and $\lambda >1$, with $0<\chi <(1-1/\lambda)/2$, we consider the blocks $\{B_i\}_{i=1}^{n/(\chi\log n)}$ obtained from Line \ref{line:partition} in Algorithm \ref{alg:almost-exact}. Then there exists a constant $\delta'>0$ such that for any $0<\delta<\delta'\chi$, it holds that 
      \$
      \pr\Big(\bigcap_{i=1}^{n/(\chi\log n)}\big\{\exists j \colon j < i, B_j \sim B_i, \text{ and } B_j \text{ is $\delta$-occupied}\big\}\Big) = 1 - o(1).
      \$
  It follows that the $(\chi, \delta)$-visibility graph is connected with high probability.
  \end{proposition}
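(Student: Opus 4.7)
My plan is to identify, for each block $B_i$, the union $U_i$ of its left-visible predecessors as a fixed subregion of volume strictly exceeding $(1/\lambda)\log n$, apply Lemma \ref{lemma:volume-points} to extract $\Omega(\log n)$ vertices in $U_i$ with probability $1-n^{-1-\epsilon}$, and then use pigeonhole over the boundedly many constituent blocks to produce one containing $\Omega(\chi \log n)$ vertices. This block will be the desired $\delta$-occupied, left-visible predecessor of $B_i$.

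Concretely, two blocks $B_j$ and $B_i$ of length $\chi \log n$ with $j<i$ have extremal separation $(i-j+1)\chi\log n$, so $B_j \sim B_i$ precisely when $i-j \leq \lfloor 1/\chi \rfloor - 1 =: k_0$. The left-visible predecessor set $\mathcal{L}_i := \{B_{i-1},\dots,B_{i-k_0}\}$ therefore has cardinality $k_0$, and $U_i := \bigcup_{B_j \in \mathcal{L}_i} B_j$ has volume $k_0\chi\log n \geq (1-2\chi)\log n$. The hypothesis $\chi < (1-1/\lambda)/2$ gives $1-2\chi > 1/\lambda$, so Lemma \ref{lemma:volume-points} applies uniformly in $i$ with $\nu = k_0\chi$, yielding constants $\gamma,\epsilon>0$ (independent of $i$) such that $\pr(|V(U_i)| > \gamma\log n) \geq 1-n^{-1-\epsilon}$.

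Since $k_0 < 1/\chi$, whenever $|V(U_i)| > \gamma \log n$ the pigeonhole principle forces at least one $B_j \in \mathcal{L}_i$ to contain strictly more than $\gamma\chi\log n$ vertices. Setting $\delta' := \gamma$ then makes this $B_j$ a $\delta$-occupied left-visible predecessor of $B_i$ for every $0<\delta<\delta'\chi$. A union bound over the $n/(\chi\log n)$ indices gives a total failure probability of at most $n^{-\epsilon}/\chi = o(1)$, which establishes the intersection claim. Connectivity of the $(\chi,\delta)$-visibility graph is then immediate: each occupied block has an occupied visible neighbor with smaller index, so chaining these predecessors links every occupied block into a single component of $H$, with the leftmost few blocks handled via the toroidal metric on $\cS_{1,n}$ (or equivalently by rooting the spanning tree among them).

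The main obstacle is the delicate two-sided balance in the choice of $\chi$: the union $U_i$ must have volume strictly greater than $(1/\lambda)\log n$ in order to invoke Lemma \ref{lemma:volume-points}, while simultaneously $|\mathcal{L}_i| = O(1/\chi)$ must be small enough that pigeonhole yields a block of size $\Omega(\chi\log n)$, so that $\delta$ can scale linearly in $\chi$ as the proposition requires. The hypothesis $\chi < (1-1/\lambda)/2$ is precisely the sweet spot that secures both conditions simultaneously. A minor technical point is the boundary where linear indexing runs out for small $i$; the toroidal structure of $\cS_{1,n}$ disposes of this without affecting the asymptotics.
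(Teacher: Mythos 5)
Your proposal is correct and follows essentially the same route as the paper's proof: both define $U_i$ as the union of left-visible blocks, verify $\mathrm{vol}(U_i)\ge(1-2\chi)\log n$ with $\lambda(1-2\chi)>1$, invoke Lemma \ref{lemma:volume-points}, and convert the $\Omega(\log n)$ vertices in $U_i$ into a $\delta$-occupied visible block via pigeonhole over the at most $1/\chi$ constituent blocks, finishing with a union bound. The only cosmetic difference is that the paper states the pigeonhole step in contrapositive form ($\pr(\bigcap_j\{|V_j|\le\delta\log n\})\le\pr(|\bigcup_j V_j|\le\delta\log n/\chi)$), while you argue it directly.
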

  \begin{proof}
      For any $i\in[n/(\chi\log n)]$, we define $ U_i = \bigcup_{j\colon j < i, B_j \sim B_i}{B}_j$ as the union of visible blocks on the left-hand side of ${B}_i$. We have $\text{vol}( U_i)=(\floor{1/\chi}-1)\chi\log n\ge (1 -2\chi)\log n$ and $\lambda\text{vol}( U_i)/\log n\ge \lambda(1-2\chi)>1$ when $\lambda>1$ and $\chi<(1-1/\lambda)/2$. Thus, Lemma \ref{lemma:volume-points} ensures the existence of positive constants $\delta'$ and $\epsilon$ such that $\pr( |\bigcup_{j\colon j < i, B_i \sim B_j}V_j| \le \delta' \log n) \leq n^{-1-\epsilon}$. We note that $|\{j\colon j <i, B_j\sim B_i\}| \le (\ceil{1/\chi} - 1) \le 1/\chi$. Thus, we take $0<\delta < \delta'\chi$ and obtain that
      \$
  \pr\Big(\bigcap_{j\colon j< i, B_j\sim B_i}\big\{|V_j|\le \delta\log n \big\}\Big) & \le  \pr\Big(\big|\bigcup_{j\colon j< i, B_j\sim B_i}V_j\big|\le \delta\log n /\chi \Big) \\
  &\le \pr\Big(\big|\bigcup_{j\colon j < i, B_j\sim B_i}V_j\big|\le \delta'\log n \Big) \leq n^{-1-\epsilon}.
  \$
  Therefore, the union bound over all $i\in[n/(\chi\log n)]$ gives
\$
  &\pr\Big(\bigcap_{i=1}^{n/(\chi\log n)}\big\{\exists j \colon j < i, B_j \sim B_i, \text{ and } B_j \text{ is $\delta$-occupied}\big\}\Big) \\
  & \quad = 1 - \pr\Big(\bigcup_{i=1}^{n/(\chi\log n)}\bigcap_{j\colon j < i, B_j\sim B_i}\big\{|V_j|\le \delta\log n \big\}\Big) \\
  &\quad \ge 1 - \frac{n}{\chi\log n}\cdot n^{-1-\epsilon} = 1 - o(1). \qedhere
\$
  \end{proof}
  
  \subsubsection{General case when \texorpdfstring{\(d\ge 2\)}{} \text{and} \texorpdfstring{\(\lambda \nu_d>1\)}{}.}
  We now study general cases. 
  We first show that for any block $B$, the set of surrounding visible blocks $\{B'\colon B \sim B', B' \neq B\}$ contains $\Omega(\log n)$ vertices. For any block $B_i\subset\cS_{d,n}$ with $\text{vol}(B_i)=\chi\log n$, the length of its longest diagonal is given by $\sqrt{d}(\chi\log n)^{1/d}$. Recall the definition of $R_d = 1- \sqrt{d}\chi^{1/d}/2$, and let $C_i$ be the ball of radius $R_d (\log n)^{1/d}$ centered at the center of $B_i$. Observe that
  \[\sup_{x \in B_i, y \in C_i} \Vert x - y \Vert = \frac{1}{2}\sqrt{d}(\chi\log n)^{1/d} + R_d (\log n)^{1/d} = (\log n)^{1/d}.\]
  It follows that if $B_j \subseteq C_i$, then $B_i \sim B_j$. Also, $C_i$ contains all blocks $B_j \sim B_i$ (see Figure \ref{fig:geometry}).
  We define 
  \[U_i = \bigcup_{j\colon j\neq i, B_j\sim B_i} B_j =\bigcup_{j \neq i \colon B_j\subset C_i} B_j\]
  as the union of all visible blocks to $B_i$, excluding $B_i$ itself. Observe that as $\chi \to 0$, the volume of the blue region approaches the volume of $C_i$. The following lemma quantifies this observation, showing that our conditions on $\chi$ guarantee that $U_i$ (and any set with the same volume as $U_i$) will contain sufficiently many vertices.

  \begin{figure}[htbp]
\centering\includegraphics[width=0.7\textwidth]{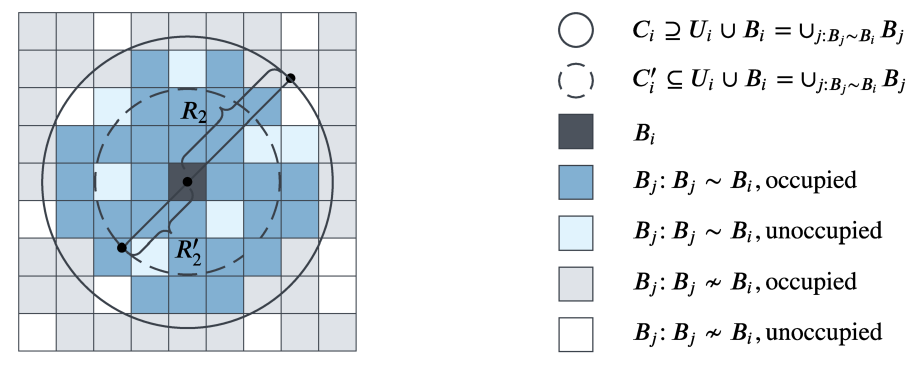}
      \caption{Geometry around block $B_i$, showing a portion of the region $\cS_{2,n}$. The set $U_i$ is comprised of dark and light blue blocks.}
      \label{fig:geometry}
  \end{figure}

  \begin{lemma}\label{lem:fN_i-size}
  If $\chi$ satisfies the condition in \eqref{eq:chi-fomula} and $\lambda \nu_d>1$, there exist positive constants $\widetilde\delta$ and $\epsilon$, depending on $\lambda$ and $d$, such that for any subset $S\in\cS_{d,n}$ with $\text{vol}(S) = \text{vol}( U_i)$, we have
  \$
  \pr\big(|V(S)|>\widetilde\delta\log n\big) \geq 1- n^{-1-\epsilon}.
  \$ 
  \end{lemma}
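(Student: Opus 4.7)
The plan is to reduce the lemma to Lemma~\ref{lemma:volume-points} via a geometric lower bound on $\text{vol}(U_i)$. Because the Poisson point process is homogeneous, $|V(S)|$ has distribution $\text{Poisson}(\lambda\,\text{vol}(S))$, so the conclusion for arbitrary $S$ with $\text{vol}(S) = \text{vol}(U_i)$ follows from the fixed-set version of Lemma~\ref{lemma:volume-points} applied with $\nu = \text{vol}(U_i)/\log n$, provided I can verify $\lambda \nu > 1$ by a positive margin.

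The geometric step is to show that $U_i$ contains the ball of radius $(1 - \tfrac{3}{2}\sqrt{d}\chi^{1/d})(\log n)^{1/d}$ centered at the center $c_i$ of $B_i$. Indeed, $C_i$ has radius $R_d(\log n)^{1/d}$ and each block has diameter $\sqrt{d}(\chi\log n)^{1/d}$, so a block $B_j$ is contained in $C_i$ whenever its center lies within $(R_d - \tfrac{1}{2}\sqrt{d}\chi^{1/d})(\log n)^{1/d} = (1 - \sqrt{d}\chi^{1/d})(\log n)^{1/d}$ of $c_i$. Any point within $(1 - \tfrac{3}{2}\sqrt{d}\chi^{1/d})(\log n)^{1/d}$ of $c_i$ lies in such a block by the triangle inequality, after accounting for the block's own circumradius $\tfrac{1}{2}\sqrt{d}(\chi\log n)^{1/d}$. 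Subtracting the volume of $B_i$ itself yields $\text{vol}(U_i)/\log n \geq \nu_d(1 - \tfrac{3}{2}\sqrt{d}\chi^{1/d})^d - \chi$.

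Finally I would invoke~\eqref{eq:chi-fomula}: the first condition gives $\nu_d(1 - \tfrac{3}{2}\sqrt{d}\chi^{1/d})^d \geq (\nu_d + 1/\lambda)/2$, while the second ensures $\chi < (\nu_d - 1/\lambda)/2$ for $d \geq 2$. Combining these yields $\text{vol}(U_i)/\log n > 1/\lambda$ with a strictly positive margin, so $\lambda \nu > 1$. Lemma~\ref{lemma:volume-points} then supplies the required constants $\widetilde\delta, \epsilon > 0$ such that $\Pr(|V(S)| > \widetilde\delta \log n) \geq 1 - n^{-1-\epsilon}$. There is no real obstacle here: the two conditions in~\eqref{eq:chi-fomula} were engineered exactly so that the geometric shrinkage from $C_i$ to its largest inscribed union-of-blocks, together with the loss from excluding $B_i$, still leaves a region whose Poisson intensity exceeds $1$.
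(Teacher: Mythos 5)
Your proposal is correct and follows essentially the same route as the paper: lower-bound $\mathrm{vol}(U_i)$ by inscribing the ball $C_i'$ of radius $(1-\tfrac{3}{2}\sqrt{d}\chi^{1/d})(\log n)^{1/d}$ into $U_i \cup B_i$, subtract $\mathrm{vol}(B_i)=\chi\log n$, use both conditions in \eqref{eq:chi-fomula} to conclude $\lambda\,\mathrm{vol}(U_i)/\log n>1$, and then invoke Lemma \ref{lemma:volume-points} (which applies to any fixed set of the given volume by homogeneity of the process). The only difference is cosmetic: you spell out the triangle-inequality/circumradius argument for the inclusion $C_i'\subset U_i\cup B_i$, which the paper simply asserts as a geometric observation.
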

  \begin{proof}
  We first evaluate the volume of $U_i\subset C_i$.\footnote{This is similar to the Gauss circle problem \citep{ivic2006lattice}.} We define $R_d' = R_d-\sqrt{d}\chi^{1/d}$ and $C_i^{\prime}$ as the ball centered at the center of $B_i$ with a radius $R_d'(\log n)^{1/d}$. The condition in \eqref{eq:chi-fomula} implies that $3\sqrt{d}\chi^{1/d}/2<1$ and thus $R_d'>0$. Based on geometric observations, we note that $C_i^{\prime} \subset U_i\cup B_i\subset C_i$. It follows that $\text{vol}( U_i\cup B_i) \ge \text{vol}(C_i^{\prime}) = \nu_d(R_d')^d\log n$, and thus $\text{vol}( U_i) \ge (\nu_d(R_d')^d-\chi)\log n$.
  
  We now show that when $\lambda \nu_d>1$, the conditions in \eqref{eq:chi-fomula} imply $\lambda (\nu_d (R_d')^d - \chi)>1$ by observing the following relations:
  \begin{align*}
  \nu_d (R_d')^d-\chi &= \nu_d \big(1 - 3\sqrt{d}\chi^{1/d}/2  \big)^d - \chi\\
  &\geq (\nu_d + 1/{\lambda} )/2 - \chi\\
  &\geq 1/\lambda.
  \end{align*}
  In summary, we have shown that $\text{vol}(S) = \text{vol}( U_i)\ge (\nu_d(R_d')^d-\chi)\log n$ and $\lambda (\nu_d (R_d')^d - \chi)>1$. Thus, Lemma \ref{lemma:volume-points} ensures the existence of positive constants $\widetilde\delta$ and $\epsilon$ such that $\pr(|V(S)|>\widetilde\delta\log n) > 1- n^{-1-\epsilon}$.
  \end{proof}
  
  Henceforth, we use the term ``occupied block'' to refer to $\delta$-occupied blocks, as well as ``unoccupied block'', with the constant threshold $\delta=\delta(\lambda, d)$ defined in \eqref{eq:delta-formula} in the rest of the section. We define $K=|\{j\colon B_j\subset U_i\}|$ as the number of blocks in $U_i$, a constant relying on $\lambda$ and $d$. We note that $K\le\nu_d(R_d)^d/\chi - 1 < \widetilde{\delta}/\delta$ since $U_i\cup B_i\subset C_i$. The key observation in establishing connectivity is that there cannot be a large \emph{cluster} of unoccupied blocks.
  \begin{definition}[Cluster of blocks]
      Two blocks are adjacent if they share an edge or a corner. We say that a set of blocks $\mathcal{B}$ is a cluster if for every $B, B' \in \mathcal{B}$, there is a path of blocks of the form $(B = B_{j_1}, B_{j_2}, \dots, B_{j_m} = B')$, where $B_{j_k} \in \mathcal{B}$ for $k \in [m]$ and $B_{j_k}, B_{j_{k+1}}$ are adjacent.
  \end{definition}
  
  The following lemma shows that all clusters of unoccupied blocks have fewer than $K$ blocks, with high probability. This also implies that $U_i$ contains at least one occupied block for each $i$.
  \begin{lemma}\label{lem:unoccupied-cluster}
  Suppose $d\ge2$ and $\lambda \nu_d >1$. Let
  $Y$ be the size of the largest cluster of unoccupied blocks produced in Line \ref{line:partition} in Algorithm \ref{alg:almost-exact}. Then $\pr(Y < K) = 1-o(1)$.
  \end{lemma}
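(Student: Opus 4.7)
The plan is to apply a union bound over connected sets of exactly $K$ blocks. Observe that if $Y \ge K$, then running breadth-first search within any unoccupied cluster of size $\ge K$ and truncating after $K$ blocks produces a connected set of exactly $K$ unoccupied blocks (each new BFS vertex is adjacent to one already enumerated). Hence it suffices to show that, with high probability, no $K$-block animal in the partition is entirely unoccupied.

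For the counting step, two blocks are adjacent iff they share a face or a corner, corresponding to king-move adjacency on the block lattice. The number of rooted $K$-animals in $\mathbb{Z}^d$ under this adjacency is a constant $C(K,d)$ depending only on $K$ and $d$, so the total number of $K$-block animals in $\cS_{d,n}$ is at most $n\, C(K,d)/(\chi \log n)$. For a fixed animal, the individual block counts are mutually independent, since the blocks are disjoint and $V$ is a Poisson point process, and each block has $\text{Poisson}(\lambda \chi \log n)$ vertices. By the Poisson Chernoff bound (Lemma \ref{lem:Chernoff-poisson}), the probability that a given block is $\delta$-unoccupied is at most $n^{-\phi(\delta)}$, where $\phi(\delta) = \lambda \chi - \delta - \delta \log(\delta/(\lambda\chi))$ satisfies $\phi(\delta) \to \lambda \chi$ as $\delta \to 0$. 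By independence across the $K$ blocks, the probability that a fixed animal is entirely unoccupied is at most $n^{-K\phi(\delta)}$.

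The critical quantitative step is to verify $K\phi(\delta) > 1$ for our parameter choices, which reduces to $\lambda K \chi > 1$. Since $U_i$ is the disjoint union of its $K$ constituent blocks, $\text{vol}(U_i) = K\chi \log n$, and the geometric estimate in the proof of Lemma \ref{lem:fN_i-size} (built on the first inequality of \eqref{eq:chi-fomula}) gives $\lambda K \chi = \lambda\, \text{vol}(U_i)/\log n > 1$. Taking $\delta$ sufficiently small in accordance with \eqref{eq:delta-formula} then forces $K\phi(\delta) > 1 + \epsilon$ for some constant $\epsilon > 0$, and the union bound closes:
\[
\pr(Y \ge K) \le \frac{n\, C(K,d)}{\chi \log n} \cdot n^{-(1+\epsilon)} = o(1).
\]
The main obstacle is this final bookkeeping: confirming that the constants from \eqref{eq:chi-fomula} and \eqref{eq:delta-formula} combine through Lemma \ref{lem:fN_i-size} so that $K\phi(\delta)$ strictly exceeds $1$; the geometric reduction from a cluster of size $\ge K$ to a connected subcluster of size exactly $K$, and the constant-factor bound on the number of lattice animals, are routine.
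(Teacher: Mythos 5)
Your proof is correct and follows essentially the same route as the paper: a union bound over the constantly many shapes of connected $K$-block sets times the $O(n/\log n)$ positions, with each fixed set of $K$ unoccupied blocks shown to have probability $n^{-(1+\epsilon)}$. The only (immaterial) difference is in that last estimate: the paper observes that $K$ individually unoccupied blocks force their union---a region of volume $\mathrm{vol}(U_i)$---to contain fewer than $\widetilde\delta\log n$ points and invokes Lemma \ref{lem:fN_i-size}, whereas you apply the Poisson lower tail to each block separately and multiply; since the single-block rate function satisfies $K\phi(\delta)=g(K\delta)$ for the aggregate rate function $g$, and \eqref{eq:delta-formula} guarantees $K\delta<\widetilde\delta$, your exponent $K\phi(\delta)>1$ is exactly the paper's bound, so no extra smallness of $\delta$ is actually needed.
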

  \begin{proof}
  We first bound the probability that all $K$ blocks in any given set are unoccupied. For any set of $K$ blocks $\{B_{j_k}\}_{k=1}^K$, we have 
  \#\label{eq:pr-all-unoccupied}
  \pr\Big(\bigcap_{k=1}^K\big\{|V_{j_k}|\le\delta\log n\big\}\Big) & \le \pr\Big(\big|\bigcup_{k=1}^K V_{j_k}\big|\le\delta K \log n\Big) \notag\\
  & \le \pr\Big(\big|\bigcup_{k=1}^K V_{j_k}\big|<\widetilde\delta \log n\Big) \notag\\
  & \le n^{-1-\epsilon},
  \#
  where the second inequality holds due to $K<\widetilde\delta/\delta$ and the last inequality follows from Lemma \ref{lem:fN_i-size} and the fact that $\vol(\bigcup_{k=1}^K B_{j_k})=\vol(U_i)$.
  
  Let $Z$ be the number of unoccupied block clusters with a size of $K$. Then we have $\pr(Y\ge K) = \pr(Z \ge 1)$. Let $\fS$ be the set of all possible shapes of clusters of blocks with a size of $K$. Clearly, $|\fS|$ is a constant depending on $K$ and $d$. For any $s\in\fS$, $i\in [n/(\chi\log n)]$, and $j\in[K]$, we define $\cZ_{s, i,j}$ as the event that there is a cluster of unoccupied blocks, characterized by shape $s$ with block $B_i$ occupying the $j$th position. Due to \eqref{eq:pr-all-unoccupied}, we have $\pr(\cZ_{s, i, j})\le n^{-1-\epsilon}$. Thus, the union bound gives
  \$
  \pr(Y\ge K) &= \pr(Z \ge 1) = \pr\Big(\bigcup_{s\in\fS, i\in[n/(\chi\log n)], j\in[K]}\cZ_{s, i,j}\Big) \\
  &\le |\fS| \cdot \frac{n}{\chi\log n}\cdot  K \cdot n^{-1-\epsilon} =  o(1). \qedhere
  \$
  \end{proof}
   
  Finally, we establish the connectivity of the visibility graph.
  \begin{proposition}\label{lemma:connectivity}
  Suppose that $d \ge 2$ and $\lambda \nu_d > 1$. Let $V \subset \cS_{d,n}$ be a Poisson point process on $\cS_{d,n}$ with intensity $\lambda$. Then for $\chi$ and $\delta$ given in \eqref{eq:chi-fomula} and \eqref{eq:delta-formula}, respectively, the $(\chi, \delta)$-visibility graph $H$ on $V$ is connected with probability $1-o(1)$.
  \end{proposition}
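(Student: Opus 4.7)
The plan is to argue by contrapositive via Lemma \ref{lem:unoccupied-cluster}, which gives $\pr(Y \geq K) = o(1)$. I will show that disconnectedness of $H$ forces the existence of an adjacency-cluster of unoccupied blocks of size at least $K$, yielding the event inclusion $\{H \text{ is disconnected}\} \subseteq \{Y \geq K\}$, and hence $\pr(H \text{ is connected}) \geq \pr(Y < K) = 1 - o(1)$.

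The first step is a simple deterministic observation: for $\chi$ satisfying \eqref{eq:chi-fomula}, block adjacency implies block visibility. Indeed, if $B_j$ and $B_i$ share an edge or corner, then every pair of points $x \in B_i$, $y \in B_j$ satisfies $\|x - y\| \leq 2\sqrt{d}(\chi \log n)^{1/d} \leq (\log n)^{1/d}$ because $\chi$ is chosen small. It follows that if $C \subsetneq V^\dagger$ is a connected component of $H$, then every block adjacent to some block of $C$ but not lying in $C$ must be unoccupied; otherwise, it would belong to $V^{\dagger}$ and (being visible to $C$) would extend $C$ in $H$, contradicting that $C$ is a component.

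The heart of the argument is the geometric claim that the visibility boundary $\partial_V C := \{B_j \notin C : B_j \sim B_i \text{ for some } B_i \in C\}$ contains an adjacency-cluster of size at least $K$. The cleanest instance is $C = \{B_i\}$ a singleton: using the equivalence $B_j \sim B_i \Leftrightarrow B_j \subset C_i$ established earlier, the $K$ visibility neighbors of $B_i$ are exactly the blocks entirely inside $C_i \setminus B_i$. All of these are forced to be unoccupied, and since $C_i$ is a convex ball with $B_i$ at its center, for $d \geq 2$ and $\chi$ small the tiling of $C_i \setminus B_i$ is a connected annulus, producing an adjacency-cluster of exactly $K$ unoccupied blocks. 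For a general component $C$, I would argue by considering the union $\bigcup_{B_i \in C} (C_i \setminus C)$ and showing that its outer shell contains an adjacency-cluster of at least $K$ blocks, using that the boundary of any finite region in $\mathbb{R}^d$ with $d \geq 2$ has spatial thickness proportional to $R_d (\log n)^{1/d}$, which translates to at least $K$ blocks' worth of volume.

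The main obstacle is this geometric claim for general $C$, since $C$ is only guaranteed to be visibility-connected (not adjacency-connected) and can have irregular shape. The condition $d \geq 2$ enters essentially at this step: the boundary surface of any finite region in $\mathbb{R}^d$ for $d \geq 2$ has enough volume and adjacency-connectivity to accommodate the required $K$-cluster, whereas in $d = 1$ the boundary consists only of two endpoints, which is why the $d = 1$ case required the separate argument in Proposition \ref{lem:visibility-d1-small-lambda}. Once the geometric claim is verified, the proposition follows immediately from Lemma \ref{lem:unoccupied-cluster}.
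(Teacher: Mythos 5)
Your overall strategy coincides with the paper's: reduce connectivity of $H$ to the absence of a size-$K$ adjacency-cluster of unoccupied blocks (Lemma \ref{lem:unoccupied-cluster}), by showing that an isolated component of $H$ must be surrounded by at least $K$ unoccupied blocks. Your singleton case is handled correctly, and your preliminary claim that adjacency implies visibility does follow from \eqref{eq:chi-fomula} --- the first condition there forces $(1-3\sqrt{d}\chi^{1/d}/2)^d > 1/2$, hence $\sqrt{d}\chi^{1/d} < 1/3$, so two adjacent blocks have diameter at most $2\sqrt{d}(\chi\log n)^{1/d} < (\log n)^{1/d}$ --- though this deserves to be spelled out rather than attributed to ``$\chi$ is chosen small.''

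The genuine gap is precisely the step you flag as ``the main obstacle'': the claim that for a general isolated component $C$, the unoccupied blocks visible to $C$ number at least $K$ (and contain a $K$-cluster). The heuristic that the boundary of a finite region in $\mathbb{R}^d$, $d\ge2$, has thickness proportional to $R_d(\log n)^{1/d}$ and hence ``$K$ blocks' worth of volume'' is not a proof, and the difficulty is real: for a large component $C$, any single $U_i\setminus\bigcup_{j\in C}B_j$ may contain far fewer than $K$ blocks, since most of $U_i$ can be occupied by other members of $C$; one must aggregate the leftover unoccupied blocks over all $i\in C$ and show the total is still at least $K$. The paper closes exactly this gap by induction on $|S|$ over isolated --- but \emph{not necessarily connected} --- sets $S$: writing $F=\bigcup_{i\in S}(B_i\cup U_i)$, it selects an extremal $j^{\star}\in S$ whose removal strictly shrinks the footprint (such $j^{\star}$ exists because the sets $B_i\cup U_i$ are translates of one another), applies the inductive hypothesis to $S\setminus\{j^{\star}\}$, and checks in two cases that restoring $B_{j^{\star}}$ cannot drop the count of surrounding unoccupied blocks below $K$. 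The induction must be phrased for isolated sets rather than isolated connected components precisely because deleting $j^{\star}$ can disconnect the component --- a point your outer-shell sketch would also have to confront, since visibility-connectivity of $C$ is all you are given. Until the general-$C$ claim is established by some such argument, the inclusion $\{H\text{ is disconnected}\}\subseteq\{Y\ge K\}$, and with it the proposition, remains unproven.
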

  \begin{proof}
      For a visibility graph $H = (V^{\dagger},E^{\dagger})$, we say that $S \subset V^{\dagger}$ is a \emph{connected component} if the subgraph of $H$ induced on $S$ is connected.
      Let $\cE$ be the event that $H$ contains an isolated connected component. Formally, $\cE$ is the event that there exists $S\subset V^{\dagger}$ \footnote{The notation $\subset$ denotes a strict subset.} such that (1) $S \neq \emptyset$ and $S\neq V^\dagger$; (2) $S$ is a connected component; (3) for all $i \in S, j \not \in S$ we have $\{i,j\} \not \in E^{\dagger}$. Observe that $\{H \text{ is disconnected}\} = \cE$.
  
      For any $S \neq \emptyset$ and $S\subset V^\dagger$ to be an isolated connected component, it must be completely surrounded by a cluster of unoccupied blocks. In other words, all blocks in the cluster $(\bigcup_{i\in S}U_i)\setminus(\bigcup_{i\in S}B_i)$ must be unoccupied. We 
   next show that for any isolated, connected component $S$, we have $|\{j\colon B_j\subset (\bigcup_{i\in S}U_i)\setminus(\bigcup_{i\in S}B_i)\}| \ge K$; that is, the number of unoccupied blocks visible to an isolated connected component is at least $K$.
  
      We prove the claim by induction on $|S|$. In fact, we prove it for $S$ that is isolated, but not necessarily connected.
      The claim holds true whenever $|S| = 1$ by the definition of $K$. Suppose that the claim holds for every isolated component with $k$ blocks. Consider an isolated component $S$, with $|S| = k +1$. Let $F = (\bigcup_{i \in S} B_i) \bigcup (\bigcup_{i \in S} U_i)$ be the collective ``footprint'' of all elements of $S$ along with the surrounding unoccupied blocks. For each $j \in S$, let $F_j = (\bigcup_{i \in S, i \neq j} B_i) \bigcup (\bigcup_{i \in S, i \neq j} U_i)$ be the footprint of all blocks in $S$ excluding $j$. Let $G_{j}$ be the graph formed from $G$ by removing all vertices from $V_{j}$, thus rendering $V_{j}$ unoccupied. Observe that there must exist some $j^{\star} \in S$ such that $F_{j^{\star}} \neq F$ and $F_{j^{\star}} \subset F$, as the regions $\{B_i \cup U_i\}_{i \in S}$ are translations of each other. Since $S \setminus \{j^{\star}\}$ is an isolated component in $G_{j^{\star}}$, the inductive hypothesis implies that $S \setminus \{j^{\star}\}$ has at least $K$ surrounding unoccupied blocks in $G_{j^{\star}}$. Comparing $G_{j^{\star}}$ to $G$, there are two cases (see Figure \ref{fig:isolated-component} for examples in $\cS_{2,n}$). \emph{Case I.} In the first case, $F \setminus F_{j^{\star}}$ contains at least one unoccupied block. In that case, the inclusion of $V_{j^{\star}}$ changes one block from unoccupied to occupied, and increases the number of surrounding unoccupied blocks by at least one. Thus, $S$ contains at least $K$ surrounding unoccupied blocks. \emph{Case II.} In the second case, $F \setminus F_{j^{\star}} $ contains only occupied blocks. Since there are $k+1$ total occupied blocks in $F$ and $k$ of them are in $F_{j^\star}$, we have $F \setminus F_{j^{\star}} = B_{j^{\star}}$, so that $B_{j^{\star}} \cap F_{j^{\star}} = \emptyset$. In this case, the set of $K$ surrounding unoccupied blocks in $F_{j^\star}$ remains unoccupied in $F$.
      \begin{figure}[htbp]
  \centering
  \includegraphics[width=.85\linewidth]{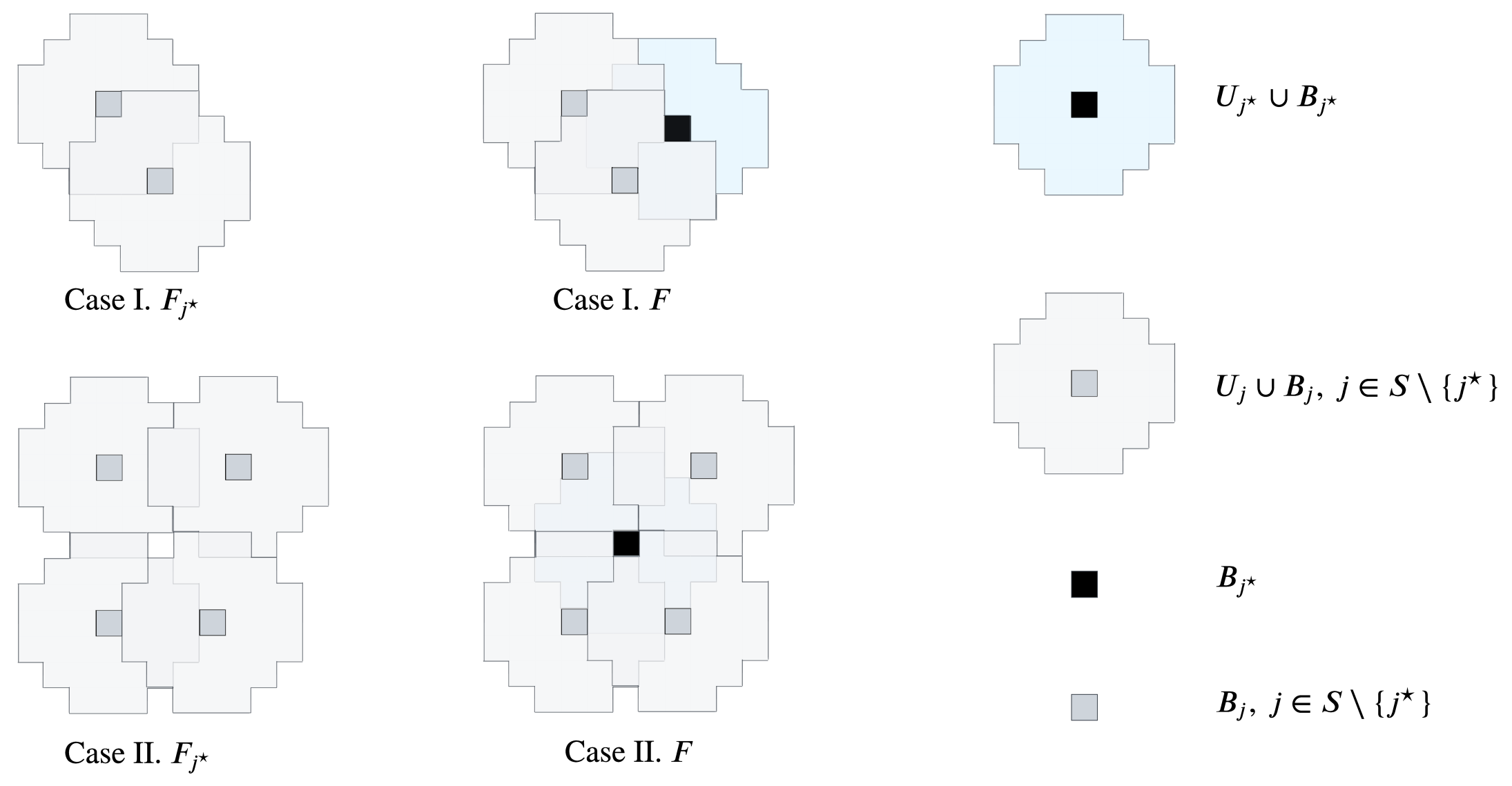}
    \caption{Possible isolated components in $\cS_{2,n}$ for Proposition \ref{lemma:connectivity}.}
    \label{fig:isolated-component}
  \end{figure}
  
      Thus, $\mathcal{E}$ implies $\{Y \geq K\}$. The result follows from Lemma \ref{lem:unoccupied-cluster}.
  \end{proof}

  In summary, Propositions \ref{lem:visibility-d1-small-lambda} and \ref{lemma:connectivity} establish the connectivity of visibility graphs for cases when $d=1$ and $\lambda>1$, or $d\ge 2$ and $\lambda \nu_d>1$, ensuring successful label propagation in the algorithm. For convenience, let $\cH = \{H \text{ is connected} \}.$ We conclude that $\pr(\cH) = 1- o(1)$.

  \subsection{Labeling the initial block.}
  We now prove that the \texttt{Pairwise Classify} subroutine (Line \ref{line:step2} of Algorithm \ref{alg:almost-exact}) ensures, with high probability, the correct labeling of all vertices in the initial block $V_{i_1}$. Let $N_{u_0,u} = |\{v\in V_{i_1}\colon \{v,u_0\}\in E,\{v,u\}\in E\}|$ be the number of common neighbors of $u_0$ and $u$ within $V_{i_1}$. 
  \begin{lemma}\label{lem:N_u,v-Binomial}
      For any vertex $u\in V_{i_1}\setminus\{u_0\}$, it holds that 
      \begin{enumerate}
          \item Conditioned on $\sigma_0(u) = \sigma_0(u_0)$ and $|V_{i_1}| = m_{i_1}$, we have $N_{u_0, u} \sim\mathrm{Bin}(m_{i_1}-2, (a^2+b^2)/2)$.
          \item  Conditioned on $\sigma_0(u) \neq \sigma_0(u_0)$ and $|V_{i_1}| = m_{i_1}$, we have $N_{u_0, u} \sim\mathrm{Bin}(m_{i_1}-2, ab)$.
      \end{enumerate}
  \end{lemma}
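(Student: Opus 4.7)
The plan is to unfold the definition of $N_{u_0,u}$, verify that every relevant pair of vertices in $V_{i_1}$ is mutually visible, and then compute the relevant probability by conditioning on the label $\sigma_0(v)$ of each potential common neighbor $v$.

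First, I will note that the block $B_{i_1}$ has volume $\chi \log n$ and hence its diameter (in the toroidal metric) is at most $\sqrt{d}\,\chi^{1/d}(\log n)^{1/d}$. The condition $\chi$ satisfies in \eqref{eq:chi-fomula} forces $\sqrt{d}\,\chi^{1/d} < 1$ (indeed $\sqrt{d}\chi^{1/d}/2 < 1/3$), so any two vertices in $V_{i_1}$ are within distance $(\log n)^{1/d}$ of each other and the edge probability between them is $a$ (same community) or $b$ (different communities).

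Next, I will condition on $|V_{i_1}| = m_{i_1}$, on $\sigma_0(u_0)$, and on $\sigma_0(u)$. Given $|V_{i_1}| = m_{i_1}$, the locations of vertices in $V_{i_1}$ are i.i.d.\ uniform over $B_{i_1}$ (by the standard construction of a Poisson point process from Definition \ref{def:PPP}), and the community labels $\{\sigma_0(v) : v \in V_{i_1} \setminus \{u_0,u\}\}$ are i.i.d.\ uniform on $\{-1,+1\}$, independent of everything else. For a single $v \in V_{i_1} \setminus \{u_0,u\}$, write $X_v := \mathds{1}\{\{v,u_0\} \in E,\ \{v,u\} \in E\}$. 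Since $v$ is visible to both $u_0$ and $u$, the edges $\{v,u_0\}$ and $\{v,u\}$ are conditionally independent given the labels, with probabilities determined by $\sigma_0(v)$ compared to $\sigma_0(u_0)$ and $\sigma_0(u)$. Averaging over the uniform $\sigma_0(v)$:
\begin{itemize}
\item If $\sigma_0(u) = \sigma_0(u_0)$, then $\mathbb{E}[X_v] = \tfrac{1}{2} a^2 + \tfrac{1}{2} b^2 = (a^2+b^2)/2$.
\item If $\sigma_0(u) \neq \sigma_0(u_0)$, then $\mathbb{E}[X_v] = \tfrac{1}{2} ab + \tfrac{1}{2} ba = ab$.
\end{itemize}

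Finally, I will argue that the collection $\{X_v : v \in V_{i_1} \setminus \{u_0,u\}\}$ is mutually independent. The labels $\{\sigma_0(v)\}$ are mutually independent, and conditioned on all community labels the edges in $G$ are mutually independent; since $X_v$ depends only on $\sigma_0(v)$ and on the two edges $\{v,u_0\}, \{v,u\}$, different $X_v$'s depend on disjoint sets of labels and edges. Hence $N_{u_0,u} = \sum_{v \in V_{i_1} \setminus \{u_0,u\}} X_v$ is a sum of $m_{i_1}-2$ i.i.d.\ Bernoullis with the means computed above, which is exactly the stated $\text{Bin}(m_{i_1}-2,(a^2+b^2)/2)$ or $\text{Bin}(m_{i_1}-2,ab)$ distribution.

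There is no real obstacle here; the only subtlety is to check that mutual visibility within $B_{i_1}$ holds (so that the edge probabilities really are $a$ or $b$, never $0$) and to be precise about the conditional independence structure after conditioning on $|V_{i_1}|$ and on $\sigma_0(u_0), \sigma_0(u)$.
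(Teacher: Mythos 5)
Your proof is correct and follows essentially the same route as the paper's: condition on the label of each candidate common neighbor $v$ to get per-vertex success probabilities $(a^2+b^2)/2$ or $ab$, then invoke mutual independence across $v$ to obtain the binomial distribution. The only addition is your explicit check that all vertices in $B_{i_1}$ are mutually visible (via the diameter bound from \eqref{eq:chi-fomula}), a point the paper leaves implicit but which is indeed needed for the edge probabilities to be $a$ or $b$ rather than $0$.
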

  \begin{proof}
  We first consider the case when $\sigma_0(u) = \sigma_0(u_0)$. For any vertex $v\in V_{i_1}\setminus \{u,u_0\}$, we have 
      \$
      & \pr\big((v, u)\in E, (v, u_0)\in E \given \sigma_0(u) = \sigma_0(u_0) \big) \\
      & \quad = \pr\big((v,u)\in E, (v, u_0)\in E\given \sigma_0(v)=\sigma_0(u), \sigma_0(u) = \sigma_0(u_0) \big)\pr\big( \sigma_0(v)=\sigma_0(u) \big) \\
      & \qquad + \pr\big((v, u)\in E, (v, u_0)\in E\given \sigma_0(v)\neq\sigma_0(u), \sigma_0(u) = \sigma_0(u_0) \big)\pr\big( \sigma_0(v)\neq\sigma_0(u) \big)  \\
      & \quad = (a^2 + b^2)/2.
      \$
      The first statement follows from mutual independence of the events $\{(v,u), (v,u_0) \in E\}$ over $v \in V_{i_1} \setminus \{u, u_0\}$, conditioned on $|V_{i_1}| = m_{i_1}$.
      
      Similarly, if $\sigma_0(u) \neq \sigma_0(u_0)$, for any $v\in V_{i_1}\setminus \{u,u_0\}$, we have 
      \$
      & \pr\big((v, u)\in E, (v, u_0)\in E \given \sigma_0(u) \neq \sigma_0(u_0) \big)  \\
      & \quad = \pr\big((v,u)\in E, (v, u_0)\in E\given \sigma_0(v)=\sigma_0(u), \sigma_0(u) \neq \sigma_0(u_0) \big)\pr\big( \sigma_0(v)=\sigma_0(u) \big) \\
      & \qquad + \pr\big((v, u)\in E, (v, u_0)\in E\given \sigma_0(v)\neq\sigma_0(u), \sigma_0(u) \neq\sigma_0(u_0) \big)\pr\big( \sigma_0(v)\neq\sigma_0(u) \big)  \\
      & \quad = ab,
      \$ 
      implying the second statement.
  \end{proof}
  
  The following lemma will be used to bound the misclassification probability of $u \in V_{i_1} \setminus \{u_0\}$ using the thresholding rule given in Algorithm \ref{alg:initial-block}, Line \ref{line:threshold}. Let $\cT_{u_0,u}=\{N_{u_0,u} > (a + b)^2(|V_{i_1}|-2)/4\}$. We define constants $\eta_1=\exp[(a-b)^4/4]$ and $c_1=\delta(a-b)^4/8$. 
  \begin{lemma}\label{lem:step2}
  For any vertex $u\in V_{i_1}\setminus\{u_0\}$ and any $m_{i_1}\ge\delta\log n$, we have
  \$
  \max\Big\{\pr\big(\cT_{u_0,u}^c \biggiven \sigma_0(u) = \sigma_0(u_0), |V_{i_1}|=m_{i_1}\big), \  \mathbb{P}\big(\cT_{u_0,u} \biggiven \sigma_0(u) \neq \sigma_0(u_0), |V_{i_1}|=m_{i_1}\big) \Big\} \le \eta_1n^{-c_1}.
  \$
  \end{lemma}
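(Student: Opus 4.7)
The plan is to bound both error probabilities by Hoeffding's inequality (Lemma \ref{lem:Hoeffding-bounded}) applied to $N_{u_0,u}$, whose conditional distributions are pinned down in Lemma \ref{lem:N_u,v-Binomial}.

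First, I would observe that the threshold $(a+b)^2(m_{i_1}-2)/4$ appearing in the test is exactly the midpoint of the two possible conditional means of $N_{u_0,u}$, since
\[
\frac{a^2+b^2}{2} - \frac{(a+b)^2}{4} = \frac{(a-b)^2}{4} = \frac{(a+b)^2}{4} - ab.
\]
Thus both error events correspond to a deviation of $N_{u_0,u}$ from its conditional mean by at least $(a-b)^2(m_{i_1}-2)/4$, which is the crucial symmetric gap that makes the same tail bound work in both cases.

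Next, conditional on $|V_{i_1}|=m_{i_1}$ and on the relation between $\sigma_0(u)$ and $\sigma_0(u_0)$, the variable $N_{u_0,u}$ is a sum of $m_{i_1}-2$ independent Bernoulli indicators (mutual independence holds because, conditioned on locations and labels, edges are independent). Applying the two-sided Hoeffding bound with $t=(a-b)^2(m_{i_1}-2)/4$ and $n=m_{i_1}-2$ yields
\[
\max\!\Big\{\mathbb{P}\big(\cT_{u_0,u}^c \,\big|\, \sigma_0(u)=\sigma_0(u_0),\, |V_{i_1}|=m_{i_1}\big),\ \mathbb{P}\big(\cT_{u_0,u} \,\big|\, \sigma_0(u)\neq\sigma_0(u_0),\, |V_{i_1}|=m_{i_1}\big)\Big\} \le \exp\!\Big(-\tfrac{(a-b)^4(m_{i_1}-2)}{8}\Big).
\]

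Finally, using the hypothesis $m_{i_1}\ge \delta\log n$, I would write $m_{i_1}-2\ge \delta\log n - 2$, so the exponential bound becomes
\[
\exp\!\Big(\tfrac{(a-b)^4}{4}\Big)\cdot \exp\!\Big(-\tfrac{\delta(a-b)^4}{8}\log n\Big) = \eta_1\, n^{-c_1},
\]
matching the definitions of $\eta_1$ and $c_1$ in the statement. No real obstacle is expected here; the only substantive observation is that the choice of threshold $(a+b)^2/4$ equalizes the gap on both sides, so one can handle the agreement and disagreement cases with a single application of Hoeffding's inequality. A union bound over $u\in V_{i_1}\setminus\{u_0\}$ (to be carried out in the subsequent step that invokes this lemma, not here) then gives correct classification of the initial block with probability $1-o(n^{-1})$ once $\delta(a-b)^4/8>1$, and conditioning on $\sigma_0(u_0)$ is free since $\widehat\sigma(u_0)$ is set arbitrarily.
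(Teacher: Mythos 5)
Your proposal is correct and follows essentially the same route as the paper: apply Hoeffding's inequality to the binomial distributions from Lemma \ref{lem:N_u,v-Binomial}, using the fact that the threshold $(a+b)^2(m_{i_1}-2)/4$ sits at distance $(a-b)^2(m_{i_1}-2)/4$ from both conditional means, and then convert $\exp(-(a-b)^4(\delta\log n-2)/8)$ into $\eta_1 n^{-c_1}$. The only (immaterial) quibble is your closing aside: the paper does not need $\delta(a-b)^4/8>1$, since only the single initial block must be classified correctly, so a $1-o(1)$ guarantee from the union bound suffices there.
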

  \begin{proof}
  Fix $m_{i_1} \geq \delta \log n$.
  Lemma \ref{lem:N_u,v-Binomial} along with Hoeffding's inequality (Lemma \ref{lem:Hoeffding-bounded}) gives that 
  \$
  & \pr \big(\cT_{u_0,u}^c \biggiven \sigma_0(u) = \sigma_0(u_0), |V_{i_1}| = m_{i_1} \big) \\
  & 
  = \pr\big(N_{u_0,u} - (a^2 + b^2)(m_{i_1}-2)/2 \le -(a-b)^2(m_{i_1}-2)/4 \biggiven \sigma_0(u) = \sigma_0(u_0), |V_{i_1}| = m_{i_1}\big) \\
  & \le \exp\big(-(a-b)^4(m_{i_1}-2)/8 \big)\\
  &\leq \exp(-(a-b)^4(\delta \log n-2)/8) = \eta_1n^{-c_1}.
  \$
  Similarly,
  \$
  & \pr\big(\cT_{u_0,u} \biggiven \sigma_0(u) \neq \sigma_0(u_0), |V_{i_1}| = m_{i_1}\big) \\
  & \quad = \pr\big(N_{u_0,u} - ab(m_{i_1}-2) > (a-b)^2(m_{i_1}-2)/4 \biggiven \sigma_0(u) \neq \sigma_0(u_0), |V_{i_1}| = m_{i_1}\big) \\
  & \quad \le \exp\big(-(a-b)^4(m_{i_1}-2)/8 \big) \le \eta_1n^{-c_1}. \qedhere
  \$
  \end{proof}
  
  The following proposition ensures the high probability of correct labeling for all vertices in $V_{i_1}$.
  \begin{proposition}\label{prop:V1}
  Suppose that $a,b\in[0,1]$ with $a\neq b$. Then Line \ref{line:step2} of Algorithm \ref{alg:almost-exact} ensures that for any $\Delta>\delta$, 
      \$
      \pr \Big(\bigcap_{u \in V_{i_1}}\big\{\widehat{\sigma}(u) = \sigma_0(u_0)\sigma_0(u)\big\} \biggiven \delta \log n \le |V_{i_1}| \le \Delta\log n\Big) \geq 1- \eta_1\Delta n^{-c_1} \log n.
      \$
  \end{proposition}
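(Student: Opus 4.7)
The plan is to combine the per-vertex bound in Lemma \ref{lem:step2} with a union bound over $V_{i_1}$. The vertex $u_0$ is trivially correctly labeled, since $\widehat{\sigma}(u_0) = 1$ by construction and $\sigma_0(u_0)\sigma_0(u_0) = 1$, so it suffices to control misclassification on the $|V_{i_1}| - 1$ remaining vertices.

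First, I would fix $m_{i_1} \in [\delta \log n, \Delta \log n]$ and condition on $|V_{i_1}| = m_{i_1}$. For each $u \in V_{i_1} \setminus \{u_0\}$, Algorithm \ref{alg:initial-block} assigns $\widehat{\sigma}(u) = 1$ precisely when $\cT_{u_0,u}$ holds, so the event $\{\widehat{\sigma}(u) \neq \sigma_0(u_0)\sigma_0(u)\}$ equals $\cT_{u_0,u}^c$ when $\sigma_0(u) = \sigma_0(u_0)$ and equals $\cT_{u_0,u}$ when $\sigma_0(u) \neq \sigma_0(u_0)$. Splitting on the value of $\sigma_0(u_0)\sigma_0(u) \in \{\pm 1\}$ and applying Lemma \ref{lem:step2} to each of the two resulting conditional probabilities yields
\[
\pr\big(\widehat{\sigma}(u) \neq \sigma_0(u_0)\sigma_0(u) \mid |V_{i_1}| = m_{i_1}\big) \le \eta_1 n^{-c_1},
\]
uniformly over $m_{i_1} \ge \delta \log n$.

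Then a union bound over the at most $\Delta \log n$ vertices in $V_{i_1} \setminus \{u_0\}$ gives the conditional error probability bound $\Delta \eta_1 n^{-c_1} \log n$. Since this estimate is uniform in $m_{i_1}$ throughout $[\delta \log n, \Delta \log n]$, it transfers to the conditioning event $\{\delta \log n \le |V_{i_1}| \le \Delta \log n\}$ by averaging, yielding the stated inequality.

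I do not expect a serious obstacle here: the nontrivial concentration already lives in Lemma \ref{lem:step2}, and the rest is a routine union bound over $O(\log n)$ vertices. The only small subtlety is observing that the thresholding rule in Line \ref{line:threshold} is assortative regardless of the sign of $a-b$, because $(a^2+b^2)/2 > ab$ whenever $a \neq b$; this is exactly what makes Lemma \ref{lem:step2} apply symmetrically to the two cases $\sigma_0(u) = \sigma_0(u_0)$ and $\sigma_0(u) \neq \sigma_0(u_0)$, so that the single bound $\eta_1 n^{-c_1}$ suffices before taking the union bound.
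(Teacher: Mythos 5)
Your proposal is correct and follows essentially the same route as the paper: a per-vertex application of Lemma \ref{lem:step2} after splitting on whether $\sigma_0(u)=\sigma_0(u_0)$, a union bound over the at most $\Delta\log n$ vertices of $V_{i_1}$, and an averaging step to pass from conditioning on $|V_{i_1}|=m_{i_1}$ to conditioning on the range $\delta\log n\le|V_{i_1}|\le\Delta\log n$. Your closing observation that the threshold $(a+b)^2/4$ sits strictly between $ab$ and $(a^2+b^2)/2$ (each at distance $(a-b)^2/4$) is exactly the symmetry that Lemma \ref{lem:step2} exploits.
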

  \begin{proof}
  For any $u\in V_{i_1}\setminus\{u_0\}$, when $m_{i_1}\ge\delta\log n$, Lemma \ref{lem:step2} implies
      \$
      & \pr\big(\widehat\sigma(u) \neq \sigma_0(u_0)\sigma_0(u) \biggiven |V_{i_1}| =m_{i_1}\big) \\
      & \quad = \pr\big(\widehat\sigma(u) = -1 \biggiven \sigma_0(u) = \sigma_0(u_0), |V_{i_1}| =m_{i_1}\big)\pr\big(\sigma_0(u) = \sigma_0(u_0)\big) \\
      & \qquad +  \pr\big(\widehat\sigma(u) = 1 \biggiven \sigma_0(u_0) \neq \sigma_0(u), |V_{i_1}| =m_{i_1}\big)\pr\big(\sigma_0(u_0) \neq \sigma_0(u)\big)  \\
      & \quad = \pr\big(\cT_{u,u_0}^c \biggiven \sigma_0(u) = \sigma_0(u_0), |V_{i_1}|=m_{i_1}\big)/2 + \mathbb{P}\big(\cT_{u, u_0} \biggiven \sigma_0(u) \neq \sigma_0(v), |V_{i_1}|=m_{i_1}\big)/2 \\
      & \quad \le \eta_1n^{-c_1}.
      \$
      Thus, for any $\delta\log n\le m_{i_1}\le\Delta\log n$, the union bound yields that
      \$
      \pr \Big(\bigcap_{u \in V_1}\big\{\widehat{\sigma}(u) = \sigma_0(u_0)\sigma_0(u)\big\} \biggiven |V_{i_1}| =m_{i_1} \Big) &= 1 - \pr\Big(\bigcup_{u \in B_1}\big\{\widehat{\sigma}(u) \neq \sigma_0(u_0)\sigma_0(u)\big\} \biggiven |V_{i_1}| =m_{i_1} \Big) \\
      &\ge 1-m_{i_1} \eta_1 n^{-c_1} \ge 1- \eta_1\Delta n^{-c_1} \log n.
      \$
      It follows that 
      \$
      \pr \Big(\bigcap_{u \in V_{i_1}}\big\{\widehat{\sigma}(u) = \sigma_0(u_0)\sigma_0(u)\big\} \biggiven \delta \log n \le |V_{i_1}| \le \Delta\log n\Big) \ge 1- \eta_1\Delta n^{-c_1} \log n. 
      \$ \qedhere
  \end{proof}
  
  \subsection{Propagating labels among occupied blocks.}
  We now demonstrate that the \texttt{Propagate} subroutine (Lines \ref{line:step3}-\ref{line:step3-end} of Algorithm \ref{alg:almost-exact}) ensures that all occupied blocks are classified with at most $M$ mistakes, for a suitable constant $M$.
  
  We introduce a vector $m=(m_1, \cdots, m_{(n/(\chi\log n))})\in \mathbb{Z}_{+}^{(n/(\chi\log n))}$ and define the event \[\cV(m)=\{|V_i| = m_i \text{ for } i\in[n/(\chi\log n)]\}.\] 
  Each $\cV(m)$ corresponds to a specific $(\chi,\delta)$-visibility graph $H$. Thus, conditioned on an event $\cV(m)$ that ensures the connectivity of $H$, the occupied block set $V^\dagger$ and the propagation ordering over $V^\dagger$ are uniquely determined. To simplify the analysis, we fix the vector $m$ in what follows, and condition on some event $\mathcal{V}(m) \subset \mathcal{H}$, recalling that $\mathcal{H} = \{H \text{ is connected}\}$. We write $\mathbb{P}_m(\cdot) = \mathbb{P}\left( \cdot \given \mathcal{V}(m)\right)$ as a reminder. Note that conditioned on $\mathcal{V}(m)$, the labels of vertices are independent, and the edges are independent conditioned on the vertex labels.

  We denote the \emph{configuration} of a block as a vector 
  $z = (z(1,1), z(1,-1), z(-1,-1), z(-1,1))\in Z_{+}^4$, where each entry represents the count of vertices labeled as $+1$ or $-1$ by $\sigma_0$ and $\widehat{\sigma}$. For $i\in V^\dagger$, the event $\cC_{i}(z)$ signifies that the occupied block $V_{i}$ possesses a configuration $z$ such that  
  \begin{align*}
  &|\{u\in V_{i}, \sigma_0(u) = \sigma_0(u_0), \widehat\sigma(u) = 1\}| = z(1,1)\\
  &|\{u\in V_{i}, \sigma_0(u) = \sigma_0(u_0), \widehat\sigma(u) = -1\}| = z(1,-1)\\
  &|\{u\in V_{i}, \sigma_0(u) \neq \sigma_0(u_0), \widehat\sigma(u) = -1\}| = z(-1,-1)\\
  &|\{u\in V_{i}, \sigma_0(u) \neq \sigma_0(u_0), \widehat\sigma(u) =1\}| = z(-1,1).
  \end{align*}
  Consider $i \in V^{\dagger} \setminus \{i_1\}$ and a configuration $z \in \mathbb{Z}_+^4$. The key observation is that because the labels $\{\widehat\sigma(u) : u \in V_i\}$ are determined using disjoint sets of edges, the labels $\{\widehat\sigma(u) : u \in V_i\}$ are independent conditioned on $\cC_{p(i)}$. Thus, the number of mistakes on $V_i$ can be dominated by a binomial random variable. To formalize this observation, we define constants $M=5/[(a-b)^2\delta]$, $c_2=(a-b)^2\delta/4$, and $\eta_2= \exp(2(a-b)^2M)$. Let $\cA_{i}$ be the event that $\widehat\sigma$ makes at most $M$ mistakes on $V_{i}$: 
  \$
  \cA_{i} = \big\{ |\{u \in V_{i} : \widehat{\sigma}(u) \neq \sigma_0(u_0) \sigma_0(u)\} | \leq M\big\}.
  \$
  The following lemma bounds the probability of misclassifying a given vertex using Algorithm \ref{alg:propagation}.
  \begin{lemma}\label{lem:degree-profile-condition}
  Suppose that $a, b \in [0,1]$ and $a \neq b$, and fix $i\in V^\dagger\setminus\{i_1\}$. Fix $z \in \mathbb{Z}_+^4$ such that $z(1,1) + z(1,-1) + z(-1,-1) + z(-1,1) = m_{p(i)}$ and $z(1,-1)  +z(-1,1) \leq M$ (so that $\cC_{p(i)}(z) \subset \cA_{p(i)}$). Then for any $u\in V_i$, we have \$
  \pr_m\big(\widehat\sigma(u) \neq \sigma_0(u_0)\sigma_0(u)\biggiven \cC_{p(i)}(z)\big) \le \eta_2n^{-c_2}.
  \$
  \end{lemma}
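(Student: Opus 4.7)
The plan is to condition on the configuration $\cC_{p(i)}(z)$ and exploit the fact that, in the conditional model, the relevant degree statistic of $u$ is a sum of independent Bernoulli random variables. Because the GSBM generates edges independently conditioned on positions and community labels, and the estimator $\widehat\sigma$ on $V_{p(i)}$ is constructed from edges in earlier iterations of \texttt{Propagate} (not from edges between $V_i$ and $V_{p(i)}$), each edge from $u$ to any $v \in V_{p(i)}$ is present independently with probability $a$ if $\sigma_0(u) = \sigma_0(v)$ and probability $b$ otherwise, even after conditioning on $\cV(m)$ and $\cC_{p(i)}(z)$.

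I would first handle the representative case in which the $\widehat\sigma$-majority in $V_{p(i)}$ is $+1$, $a > b$, and $\sigma_0(u) = \sigma_0(u_0)$, so that \texttt{Propagate} classifies $u$ by thresholding $d_1^+(u,\widehat\sigma, V_{p(i)})$ against $(a+b)N/2$, where $N := z(1,1) + z(-1,1)$ is the size of the $+1$-labeled reference. Of these $N$ reference vertices, $z(1,1)$ share $u$'s true community and $z(-1,1)$ do not, so the conditional mean of $d_1^+(u,\widehat\sigma, V_{p(i)})$ is $a\,z(1,1) + b\,z(-1,1)$. A direct calculation gives the gap between the mean and the threshold as $\tfrac{a-b}{2}(z(1,1) - z(-1,1)) = \tfrac{a-b}{2}(N - 2 z(-1,1)) \geq \tfrac{a-b}{2}(N - 2M)$, using the hypothesis $z(1,-1) + z(-1,1) \leq M$. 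Since $+1$ is the majority, $N \geq m_{p(i)}/2 \geq \delta \log n / 2$.

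Next I would apply Hoeffding's inequality (Lemma \ref{lem:Hoeffding-bounded}) to the sum of $N$ independent Bernoullis, yielding
\[\pr_m\bigl(\widehat\sigma(u) \neq \sigma_0(u_0)\sigma_0(u) \bigm| \cC_{p(i)}(z)\bigr) \leq \exp\!\Bigl(-\frac{(a-b)^2 (N - 2M)^2}{2 N}\Bigr).\]
Using $(N-2M)^2/N \geq N - 4M$ together with $N \geq \delta \log n / 2$ simplifies the right-hand side to $\exp(-(a-b)^2 \delta \log n / 4 + 2(a-b)^2 M) = \eta_2 n^{-c_2}$, which matches the claim. The remaining cases---$\widehat\sigma$-majority $-1$, $a < b$, or $\sigma_0(u) \neq \sigma_0(u_0)$---are entirely symmetric: swap $+1 \leftrightarrow -1$ in the reference set, exchange $a \leftrightarrow b$ in the mean calculation, or work with $d_{-1}^+$ in place of $d_1^+$ per the appropriate branch of Algorithm \ref{alg:propagation}. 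In every case the same gap $\tfrac{|a-b|}{2}(N-2M)$ emerges and the same Hoeffding bound applies.

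The main obstacle is bookkeeping rather than analysis: one must keep straight which of the four counts $z(\pm 1, \pm 1)$ corresponds to reference vertices sharing $u$'s true community versus the opposite community, and this depends on both the sign of $\sigma_0(u)\sigma_0(u_0)$ and on which $\widehat\sigma$-label is the majority. Once that accounting is consistent, the probabilistic content reduces to a single application of Hoeffding's inequality, and the constants $M$, $c_2$, $\eta_2$ in the statement are calibrated precisely so that the exponent $-(a-b)^2(N/2 - 2M)$ specializes to $-c_2 \log n + \log \eta_2$ at the worst case $N = \delta \log n / 2$.
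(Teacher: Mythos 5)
Your proposal is correct and follows essentially the same route as the paper's proof: condition on the configuration $\cC_{p(i)}(z)$, represent $d_1^+(u,\widehat\sigma,V_{p(i)})$ as a sum of $z(1,1)$ Bernoulli$(a)$ and $z(-1,1)$ Bernoulli$(b)$ variables, apply Hoeffding's inequality to the gap $\tfrac{a-b}{2}(z(1,1)-z(-1,1))$, and dispatch the remaining cases by symmetry. The only difference is cosmetic: the paper simplifies the exponent via $(z(1,1)-M)^2/(z(1,1)+M)\ge z(1,1)-3M$ while you use $(N-2M)^2/N\ge N-4M$, and both yield the same constants $\eta_2$, $c_2$.
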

  
  \begin{proof} 
  We consider the case $a > b$. Let $\cJ_{+} = \{|\{v \in V_{p(i)} : \widehat\sigma(v) = 1 \}| \geq |\{v \in V_{p(i)} : \widehat\sigma(v) = -1\}|\}$. We first study the case when $\cJ_{+}$ holds. In this context, Lines \ref{line:step3-case1}-\ref{line:step3-case1-end} of Algorithm \ref{alg:propagation} are executed. Conditioned on any $\cC_{p(i)}(z)$, we have $|\{v \in V_{p(i)} \colon\widehat\sigma(v) = 1 \}| = z(1,1) + z(-1,1)$. Among these vertices $v\in V_{p(i)}$ with $\widehat\sigma(v)=1$, $z(1,1)$ vertices have ground truth label $\sigma_0(u_0)$ and $z(-1,1)$ of them have label $-\sigma_0(u_0)$. We now bound the probability of making a mistake, meaning that $\widehat\sigma(u)\neq\sigma_0(u_0)\sigma_0(u)$.
  
  If $\sigma_0(u) = \sigma_0(u_0)$, let $\{X_i\}_{i=1}^{z(1,1)}$ and $\{Y_i\}_{i=1}^{z(-1,1)}$ be independent random variables with $X_i\sim\text{Bernoulli}(a)$ and $Y_i\sim\text{Bernoulli}(b)$, and $Z=\sum_{i=1}^{z(1,1)}X_i + \sum_{i=1}^{z(-1,1)}Y_i$ with mean $\mu_Z=z(1,1)a+z(-1,1) b$. For any $u\in V_i$, we recall that $d_1^+(u, \widehat\sigma, V_{p(i)}) = |\{v \in V_{p(i)} : \widehat\sigma(v) = 1, \{u,v\} \in E\}|$ and observe that conditioned on $\{\sigma_0(u) = \sigma_0(u_0), \cC_{p(i)}(z)\big\}$, the degree profile $d_1^+(u, \widehat\sigma, V_{p(i)})$ has the same distribution as $Z$.
  Thus, Hoeffding's inequality yields
  \$
  &\pr_m\big(\widehat\sigma(u) \neq 1\biggiven \sigma_0(u) = \sigma_0(u_0), \cC_{p(i)}(z)\big) \\
  & \quad = \pr_m\Big(d_{1}^+(u, \widehat{\sigma}, V_{p(i)})< (a+b)|\{v\in V_{p(i)}: \widehat\sigma(v) = 1\}|/2 \Biggiven  \sigma_0(v) = \sigma_0(u_0), \cC_{p(i)}(z)\Big)  \\
  & \quad = \pr_m\big(Z < (a +b)(z(1,1) + z(-1,1))/2 \big) \\
  & \quad = \pr_m\big(Z-\mu_Z < -(a-b)(z(1,1) - z(-1,1))/2\big) \\
  & \quad \le \exp\big(- \frac{(a-b)^2(z(1,1) - z(-1,1))^2}{2(z(1,1) + z(-1,1))}\big).
  \$
  We recall that $\cJ_{+}$ implies $|\{v \in V_{p(i)} : \widehat\sigma(v) = 1 \}|\ge|V_{p(i)}|/2\ge\delta\log n/2$, and $z(1,-1) + z(-1,1) \leq M$. It follows that $z(1,1) + z(-1,1)\ge\delta\log n/2$ and $z(1,1) \geq \delta\log n/2 - M$.
  Thus,
  \begin{align}
  &\pr_m\big(\widehat\sigma(u) \neq 1\biggiven \sigma_0(u) = \sigma_0(u_0), \cC_{p(i)}(z)\big) \le \exp\big(- \frac{(a-b)^2(z(1,1) - M)^2}{2(z(1,1) + M)}\big) \nonumber\\
  & \quad \le \exp\big(-(a-b)^2(z(1,1)-3M)/2 \big)  \le \eta_2\exp\big(-(a-b)^2\delta \log n/4 \big) = \eta_2 n^{-c_2},  \label{eq:propagation-mistake+}
  \end{align}
  where the last two inequalities hold since $(z(1,1)-M)^2/ (z(1,1) + M) \ge z(1,1) -3M$ and $z(1,1) \ge \delta \log n/2 - M$. 
  
  Similarly, when $\sigma_0(u) \neq \sigma_0(u_0)$, let $\{X_i\}_{i=1}^{z(-1,1)}$ and $\{Y_i\}_{i=1}^{z(1,1)}$ be independent random variables with $X_i\sim\text{Bernoulli}(a)$ and $Y_i\sim\text{Bernoulli}(b)$, and $\widetilde Z=\sum_{i=1}^{z(1,1)}Y_i + \sum_{i=1}^{z(-1,1)}X_i$ with mean $\mu_{\widetilde Z}=z(1,1)b+z(-1,1) a$. For any $u\in V_{i}$, we observe that $d_1^+(u, \widehat\sigma, V_{p(i)})$ has the same distribution as $\widetilde Z$, conditioned on $\big\{\sigma_0(u) \neq \sigma_0(u_0), \cC_{p(i)}(z)\big\}$.
  By similar steps as the case $\sigma_0(u) = \sigma_0(u_0)$, we obtain
  \begin{align}
  &\pr_m\big(\widehat\sigma(u) \neq -1 \biggiven \sigma_0(v) \neq \sigma_0(u_0), \cC_{p(i)}(z)\big) \nonumber\\
  & \quad = \pr_m\Big(d_{1}^+(u, \widehat{\sigma}, V_{p(i)}) \ge (a+b)|\{v\in V_{p(i)}: \widehat\sigma(v) = 1\}|/2 \Biggiven  \sigma_0(v) \neq \sigma_0(u_0), \cC_{p(i)}(z)\Big) \nonumber\\
  & \quad = \pr\big(\widetilde Z \ge (a +b)(z(1,1) + z(-1,1))/2\big) \nonumber\\
  & \quad = \pr\big(\widetilde Z-\mu_{\widetilde Z} \ge (a-b)(z(1,1) - z(-1,1))/2\big) \nonumber\\
  &\quad \le \exp\big(- \frac{(a-b)^2(z(1,1) - z(-1,1))^2}{2(z(1,1) + z(-1,1))}\big)\nonumber\\
  &\quad \leq \eta_2n^{-c_2}. \label{eq:propagation-mistake-}
  \end{align}
  The bounds \eqref{eq:propagation-mistake+} and \eqref{eq:propagation-mistake-} together imply
  \[\pr_m\big(\widehat\sigma(u) \neq \sigma_0(u_0)\sigma_0(u)\biggiven \cC_{p(i)}(z) \big) \le \eta_2 n^{-c_2}.\]
  
  We can derive symmetric analysis for $z$ such that $\cJ_{+}^c$ holds, in which case Algorithm \ref{alg:propagation} executes Lines \ref{line:step3-case2}-\ref{line:step3-case2-end}. The proof is complete for the case $a > b$. The analysis for the case $b > a$ is similar.
  \end{proof}
  
  Before proceeding further and showing the success of the propagation, we state a lemma that, with high probability, all blocks contain $O(\log n)$ vertices. 
  
  \begin{lemma}\label{lem:B-upper-bound}
  For the blocks obtained from Line \ref{line:partition} in Algorithm \ref{alg:almost-exact}, there exists a constant $\Delta>0$ such that 
  \$
  \pr\big(\bigcap_{i=1}^{n/(\chi\log n)} \big\{|V_i|< \Delta\log n\big\}\big) = 1- o(1).
  \$
  \end{lemma}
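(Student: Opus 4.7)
The plan is to apply the Poisson upper-tail Chernoff bound (Lemma \ref{lem:Chernoff-poisson}) blockwise and then take a union bound over all $n/(\chi \log n)$ blocks. Each block $B_i$ has volume $\chi \log n$, so by the definition of the Poisson point process (Definition \ref{def:PPP}), $|V_i| \sim \text{Poisson}(\lambda \chi \log n)$ and these counts are mutually independent across blocks.

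First I would choose a constant $\Delta$ large enough that $(\Delta - \lambda \chi)^2 / (2\Delta) > 2$ (any fixed $\Delta$ exceeding, say, $4\lambda\chi + 8$ works). With $\mu = \lambda \chi \log n$ and $t = (\Delta - \lambda\chi)\log n > 0$, Lemma \ref{lem:Chernoff-poisson} yields
\[
\pr(|V_i| \geq \Delta \log n) \;\leq\; \exp\!\Big(-\frac{(\Delta-\lambda\chi)^2 \log^2 n}{2\,\Delta \log n}\Big) \;=\; n^{-(\Delta-\lambda\chi)^2/(2\Delta)} \;\leq\; n^{-2}.
\]

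Then a union bound over the $n/(\chi \log n)$ blocks gives
\[
\pr\Big(\bigcup_{i=1}^{n/(\chi \log n)} \{|V_i| \geq \Delta \log n\}\Big) \;\leq\; \frac{n}{\chi \log n}\cdot n^{-2} \;=\; o(1),
\]
which is exactly the claim. No step here is substantively difficult; the only thing to watch is choosing $\Delta$ as a function of $\lambda$ and $\chi$ (both fixed constants depending on $\lambda$ and $d$ via \eqref{eq:chi-fomula}) so that the exponent beats the $n^{+1}$ union-bound loss with room to spare, which is why I take $(\Delta - \lambda\chi)^2/(2\Delta) > 2$ rather than just $>1$.
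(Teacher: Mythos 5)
Your proof is correct and follows essentially the same route as the paper: the Poisson Chernoff upper tail from Lemma \ref{lem:Chernoff-poisson} applied to each block with $\mu=\lambda\chi\log n$, followed by a union bound over the $n/(\chi\log n)$ blocks. The only cosmetic difference is that you choose $\Delta$ so the exponent exceeds $2$, whereas the paper takes $\Delta>\lambda\chi+1+\sqrt{2\lambda\chi+1}$ to get an exponent just above $1$, which already suffices since the union bound costs only $n/(\chi\log n)$.
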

  \begin{proof}
  For a block ${B}_i$ with $\text{vol}({B}_i)=\chi\log n$, we have $|V_i|\sim \text{Poisson}(\lambda \chi\log n)$. 
  Thus, the Chernoff bound in Lemma \ref{lem:Chernoff-poisson} implies that, for $\Delta > (\lambda \chi + 1+\sqrt{2\lambda \chi+1})$, we have 
  \$ \pr(|{V}_i|\ge \Delta\log n) \le \exp\Big(-\frac{(\Delta-\lambda \chi)^2\log n}{2\Delta}\Big) = n^{-\frac{(\Delta-\lambda \chi)^2}{2\Delta}} < n^{-1}, 
  \$ 
  where the last inequality holds by straightforward calculation. Thus, the union bound gives that
  \$
  \pr\Big(\bigcap_{i=1}^{n/(\chi\log n)} \big\{|V_i|< \Delta\log n\big\}\Big) = 1 - \pr\Big(\bigcup_{i=1}^{n/(\chi\log n)} \big\{|V_i| \ge \Delta\log n\big\}\Big) > 1 - \frac{n}{\chi\log n}\cdot n^{-1} = 1- o(1).
  \$
  \end{proof}
  
  For $\Delta>0$ given by Lemma \ref{lem:B-upper-bound}, we define $\cI$ as follows and have $\pr(\cI)=1-o(1)$. 
  \$\cI = \bigcap_{i=1}^{n/(\chi\log n)} \{|V_i|< \Delta\log n\}.\$ The following lemma concludes that Phase I makes few mistakes on occupied blocks during the propagation. 
  \begin{lemma}\label{prop:phase1-occupied} Let $G \sim \text{GSBM}(\lambda, n,a,b,d)$ with $\lambda \nu_d> 1$, $a, b \in [0,1]$, and $a \neq b$, and $\widehat{\sigma} : V \to \{-1, 0, 1\}$ be the output of Phase I in Algorithm \ref{alg:almost-exact} on input $G$. 
  Suppose $m$ is such that $\cV(m)\subset\cI\cap\cH$. Lines \ref{line:step3}-\ref{line:step3-end} of Algorithm \ref{alg:almost-exact} ensure that 
  \$\pr_m\Big(\bigcap_{i\in V^\dagger}\cA_{i}\Big) \geq \big(1- \eta_1\Delta n^{-c_1} \log n \big) \Big(1 - \frac{\eta_3 n^{-\frac{1}{8}}}{\chi \log n} \Big).\$
  \end{lemma}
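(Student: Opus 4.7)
The plan is to chain error bounds along the BFS order of Line \ref{line:tree-order}, handling the root block with Proposition \ref{prop:V1} and each subsequent block with a conditional bound given that its parent already satisfies $\mathcal{A}_{p(i_j)}$. Conditioning on $\mathcal{V}(m)\subset\mathcal{I}\cap\mathcal{H}$ pins down the block populations $\{m_i\}$, the visibility graph $H$, its spanning tree, and the processing order $V^{\dagger}=\{i_1,i_2,\ldots\}$, while leaving the joint law of $(\sigma_0,E)$ unchanged. Since $\delta\log n\le m_{i_1}\le\Delta\log n$, Proposition \ref{prop:V1} gives $\mathbb{P}_m(\mathcal{A}_{i_1})\ge 1-\eta_1\Delta n^{-c_1}\log n$, because correctly labeling every vertex of $V_{i_1}$ certifies zero mistakes and hence $\mathcal{A}_{i_1}$.

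For each $j\ge 2$ the crucial observation is a conditional independence: once $\widehat\sigma|_{V_{p(i_j)}}$ is exposed (equivalently the configuration $\mathcal{C}_{p(i_j)}(z)$ together with the ground-truth signs on the parent), the \texttt{Propagate} decisions on $V_{i_j}$ become a deterministic function of $\sigma_0|_{V_{i_j}}$ and the edges $E_{V_{i_j},V_{p(i_j)}}$ alone. Because the BFS tree consults each cross-block edge set exactly once, these new sources of randomness are independent of everything used to classify $V_{i_1},\ldots,V_{i_{j-1}}$, so $\mathcal{A}_{i_j}$ is conditionally independent of the earlier events given $\mathcal{C}_{p(i_j)}$; combining this with $\mathcal{A}_{i_1}\cap\cdots\cap\mathcal{A}_{i_{j-1}}\subset\mathcal{A}_{p(i_j)}$ yields
\begin{equation*}
\mathbb{P}_m\bigl(\mathcal{A}_{i_j}^c\bigm|\mathcal{A}_{i_1}\cap\cdots\cap\mathcal{A}_{i_{j-1}}\bigr)\le\max_{z:\,z(1,-1)+z(-1,1)\le M}\mathbb{P}_m\bigl(\mathcal{A}_{i_j}^c\bigm|\mathcal{C}_{p(i_j)}(z)\bigr).
\end{equation*}
To bound the right-hand side I would further expose $\sigma_0|_{V_{p(i_j)}}$ (which by vertex exchangeability leaves the per-vertex bound of Lemma \ref{lem:degree-profile-condition} intact) and observe that the indicators $\{\mathds{1}\{\widehat\sigma(u)\neq\sigma_0(u_0)\sigma_0(u)\}\}_{u\in V_{i_j}}$ are then mutually independent with mean at most $\eta_2 n^{-c_2}$. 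Hence the mistake count on $V_{i_j}$ is stochastically dominated by $\mathrm{Bin}(\Delta\log n,\eta_2 n^{-c_2})$, and the first-moment bound $\mathbb{P}(\mathrm{Bin}(N,p)>M)\le(Np)^{M+1}$ combined with $c_2(M+1)=5/4+(a-b)^2\delta/4>5/4$ absorbs the $(\log n)^{M+1}$ prefactor to give $\mathbb{P}_m(\mathcal{A}_{i_j}^c\mid\mathcal{A}_{p(i_j)})\le\eta_3 n^{-9/8}$ for all large $n$ and a suitable constant $\eta_3$.

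Finally, chaining $\mathbb{P}_m(\bigcap_{j}\mathcal{A}_{i_j})=\mathbb{P}_m(\mathcal{A}_{i_1})\prod_{j\ge 2}\mathbb{P}_m(\mathcal{A}_{i_j}\mid\bigcap_{k<j}\mathcal{A}_{i_k})$ and applying $\prod_j(1-q_j)\ge 1-\sum_j q_j$ with $|V^{\dagger}|\le n/(\chi\log n)$ produces the factor $1-\eta_3 n^{-1/8}/(\chi\log n)$, which multiplied by the root-block bound yields the claimed product inequality. The main obstacle is the conditional independence bookkeeping: one must verify that BFS processing ensures disjoint edge supports across steps, and that the Algorithm \ref{alg:propagation} decisions for $V_{i_j}$ depend on the history only through $\widehat\sigma|_{V_{p(i_j)}}$, so that the past genuinely collapses into the configuration $\mathcal{C}_{p(i_j)}$ and the maximum over $z$ really dominates the conditional tail.
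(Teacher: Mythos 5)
Your proposal is correct and follows essentially the same route as the paper: condition on the parent block's configuration $\mathcal{C}_{p(i_j)}(z)$, invoke the per-vertex bound of Lemma \ref{lem:degree-profile-condition} together with the conditional independence of the per-vertex mislabeling events to dominate the mistake count on $V_{i_j}$ by $\mathrm{Bin}(\Delta\log n,\eta_2 n^{-c_2})$, and chain along the BFS order using the uniformity over $z$ and Bernoulli's inequality. The only cosmetic difference is the binomial tail estimate: you use the first-moment bound $\binom{N}{M+1}p^{M+1}\le (Np)^{M+1}$ with exponent $c_2(M+1)=5/4+c_2$, whereas the paper applies the Chernoff bound of Lemma \ref{lem:Chernoff-binomial} with $c_2M=5/4$ and absorbs $(\log n)^M\le n^{1/8}$; both yield the required $\eta_3 n^{-9/8}$.
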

  \begin{proof}
  Consider $i_j \in V^{\dagger}$ for $2 \leq j \leq |V^{\dagger}|$, and fix $z \in \mathbb{Z}_+^4$ such that 
  \begin{equation}
  z(1,1) + z(1,-1) + z(-1,-1) + z(-1,1) = m_{p(i_j)} \text{ and } z(1,-1)  +z(-1,1) \leq M. \label{eq:z-conditions}
  \end{equation}
  Observe that the events that $u \in V_{i_j}$ is mislabeled by $\widehat\sigma$ are mutually independent conditioned on $\cC_{p(i_j)}(z)$. Lemma \ref{lem:degree-profile-condition} shows that each individual vertex in $V_{i_j}$ is misclassified with probability at most $\eta_2n^{-c_2}$, conditioned on $\cC_{p(i_j)}(z)$. It follows that conditioned on $\cC_{p(i_j)}(z)$,
  \[|\{u \in V_{i_j} : \widehat{\sigma}(u) \neq \sigma_0(u_0) \sigma_0(u) \}| \stleq \text{Bin}\left(\Delta\log n, \eta_2n^{-c_2} \right) =: \xi.\]
  Let $\mu_{\xi} = \mathbb{E}[\xi] = \eta_2\Delta n^{-c_2}\log n$.
  Using the Chernoff bound (Lemma \ref{lem:Chernoff-binomial}), we obtain
  \begin{align}
  \pr_m\big(\cA_{i_j}^c \biggiven \cC_{p(i_j)}(z) \big) &= \pr_m\big(|\{u \in V_{i_j} : \widehat{\sigma}(u) \neq \sigma_0(u_0) \sigma_0(u)\} | > M \biggiven \cC_{p(i_j)}(z)\big) \nonumber \\
      & \le \pr(\xi > M)\nonumber \\
      & = \pr\big(\xi - \mu_\xi> (M/\mu_\xi- 1)\mu_\xi \big)\nonumber\\
  & \le e^{M-\mu_\xi}(\mu_\xi/M)^{M} \nonumber\\
  &\le (e\eta_2\Delta/M)^M(\log n)^M n^{-c_2M}\nonumber \\
  & \le \eta_3 n^{-9/8}.\label{eq:mistake-bound}    
  \end{align}
  The last inequality holds since $c_2M=5/4$ by definition and $(\log n)^M\le n^{1/8}$ for large enough $n$. Since $\cA_{i_j}$ is independent of $\{\cA_{i_k}: k < j, k \neq p(i_j)\}$ conditioned on $\cC_{p(i_j)}$, \eqref{eq:mistake-bound} implies 
  \begin{equation*}
  \pr_m\Big(\cA_{i_j}^c \biggiven \cC_{p(i_j)}(z), \bigcap_{k < j: i_k \neq p(i_j)}\cA_{i_k} \Big) \leq \eta_3 n^{-9/8}. 
  \end{equation*}
  Furthermore, since \eqref{eq:mistake-bound} is a uniform bound over all $z$ satisfying \eqref{eq:z-conditions}, it follows that
  \begin{equation*}
  \pr_m\Big(\cA_{i_j}^c \biggiven  \bigcap_{k < j}\cA_{i_k} \Big) \leq \eta_3 n^{-9/8}. 
  \end{equation*}
  
  Thus, combining Proposition \ref{prop:V1} with the preceding bound, we have
  \$
  \pr_m\Big(\bigcap_{i\in V^\dagger}\cA_{i}\Big) &= \pr_m\big( \cA_{i_1}\big)\cdot\prod_{j=2}^{|V^{\dagger}|}\pr_m\big(\cA_{i_j} \biggiven \cA_{i_{j-1}}, \cdots, \cA_{i_1}\big) \\
  & \ge \big(1- \eta_1\Delta n^{-c_1} \log n \big) \Big(1 - \eta_3 n^{-\frac{9}{8}} \Big)^{|V^{\dagger}|-1}\\
  &\geq \big(1- \eta_1\Delta n^{-c_1} \log n \big) \Big(1 - \eta_3 n^{-\frac{9}{8}} \Big)^{\frac{n}{\chi \log n}}\\
  &\geq \big(1- \eta_1\Delta n^{-c_1} \log n \big) \Big(1 - \frac{\eta_3 n^{-\frac{1}{8}}}{\chi \log n} \Big), 
  \$
  where we use the fact that there are ${n}/{\chi \log n}$ blocks in total along with Bernoulli's inequality.
  \end{proof}
  
  Combining the aforementioned results, we now prove the success of Phase I in Theorem \ref{thm:phase1-summary}. We highlight that since $\eta > 0$ is arbitrary, the following equation \eqref{eq:almost-exact-recovery} implies Theorem \ref{theorem:almost-exact-recovery}.
  \begin{theorem}\label{thm:phase1-summary}
  Given GSBM$(\lambda, n, a, b, d)$ with $a, b \in [0,1]$, $a \neq b$, and $d=1$ and $\lambda>1$, or $d\ge 2$ and  $\lambda \nu_d>1$. Fix any $\eta > 0$. Let $\kappa = \nu_d(1+\sqrt{d}\chi^{1/d})^d/\chi$. Let $\widehat\sigma$ be the labeling obtained from Phase I with $\chi>0$ satisfying \eqref{eq:chi-fomula} and $\delta>0$ satisfying \eqref{eq:delta-formula} and $\delta<\eta/\kappa$, respectively. 
  Then there exists a constant $M$ such that $\widehat{\sigma}$ makes at most $M$ mistakes on every occupied block, with high probability, 
  \begin{equation}
  \mathbb{P}\Big(\bigcap_{i \in V^{\dagger}} \big\{|\{v \in V_i: \widehat{\sigma}(v) \neq \sigma_0(u_0) \sigma_0(v)\}| \leq M\big\} \Big) = 1-o(1). \label{eq:occupied-block-mistakes}
  \end{equation}
  Moreover, it follows that 
  \begin{equation}
  \mathbb{P}\big(|\{v \in V : \widehat{\sigma}(v) \neq \sigma_0(u_0) \sigma_0(v)\} | \leq \eta n/(\chi\kappa) \big) = 1-o(1)   \label{eq:almost-exact-recovery} 
  \end{equation}
  and
  \begin{equation}
  \pr\Big(\bigcap_{u\in V}\big\{|v\in\cN(u)\colon \widehat\sigma(v)\neq\sigma_0(u_0)\sigma_0(v)|\le \eta\log n\big\}\Big) = 1-o(1). \label{eq:dispersed-errors} 
  \end{equation}
  \end{theorem}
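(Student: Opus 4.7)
The plan is to aggregate the building blocks already developed in Section \ref{sec:proof-phase-I}. For \eqref{eq:occupied-block-mistakes}, I combine $\pr(\cH) = 1-o(1)$ from Propositions \ref{lem:visibility-d1-small-lambda} and \ref{lemma:connectivity}, $\pr(\cI) = 1-o(1)$ from Lemma \ref{lem:B-upper-bound}, and the conditional estimate from Lemma \ref{prop:phase1-occupied}. By the law of total probability, summing over $m$ with $\cV(m) \subset \cI \cap \cH$,
\begin{align*}
\pr\Big(\bigcap_{i \in V^\dagger} \cA_i\Big)
&\geq \sum_{m : \cV(m) \subset \cI \cap \cH} \pr_m\Big(\bigcap_{i \in V^\dagger} \cA_i\Big)\, \pr(\cV(m)) \\
&\geq (1-o(1)) \cdot \pr(\cI \cap \cH) = 1-o(1),
\end{align*}
where the conditional lower bound $1-o(1)$ is uniform in $m$ by Lemma \ref{prop:phase1-occupied}. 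Denoting this good event $\cM$, with $\pr(\cM) = 1-o(1)$, it remains to show that \eqref{eq:almost-exact-recovery} and \eqref{eq:dispersed-errors} hold deterministically on $\cM \cap \cI$.

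For \eqref{eq:almost-exact-recovery}, I decompose the total error count by block type. On $\cM$, occupied blocks contribute at most $M$ errors each, and there are at most $n/(\chi \log n)$ blocks in total, yielding an $O(n/\log n) = o(n)$ contribution. Each vertex $v$ in an unoccupied block is automatically misclassified since $\widehat{\sigma}(v) = 0 \neq \pm 1$, and each unoccupied block carries at most $\delta \log n$ vertices, so the contribution is at most $\delta n/\chi$. The hypothesis $\delta < \eta/\kappa$ gives a strict slack $\delta n/\chi < \eta n/(\chi \kappa)$ that absorbs the $o(n)$ term for $n$ large enough, yielding \eqref{eq:almost-exact-recovery}.

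For \eqref{eq:dispersed-errors}, the key geometric observation is that for any $u \in V$, the visibility ball $\cN(u)$ of radius $(\log n)^{1/d}$ intersects at most $\kappa$ blocks. Indeed, any block $B_i$ meeting $\cN(u)$ is contained in the ball of radius $(\log n)^{1/d} + \sqrt{d}(\chi \log n)^{1/d} = (1+\sqrt{d}\chi^{1/d})(\log n)^{1/d}$ around $u$, whose volume is exactly $\kappa \chi \log n$, so at most $\kappa$ such blocks can fit. On $\cM$, mistakes within $\cN(u)$ arise from either occupied blocks (at most $M$ per block, hence $\leq M\kappa$ in total) or unoccupied blocks (at most $\delta \log n$ vertices per block, hence $\leq \kappa \delta \log n$ in total). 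Since $\delta < \eta/\kappa$, we obtain $M\kappa + \kappa \delta \log n < M\kappa + \eta \log n \leq \eta \log n \cdot (1 + o(1))$, which is at most $\eta \log n$ (after slightly shrinking $\delta$ if needed) for $n$ large, uniformly in $u$.

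The main conceptual content is the tuning of the constants: $\kappa$ is defined precisely so that a visibility neighborhood is covered by $\kappa$ blocks of side $(\chi \log n)^{1/d}$, and the hypothesis $\delta < \eta/\kappa$ is then tuned to force both the global error fraction and the worst-case local error count into their respective targets. Everything else is routine aggregation of the conditional estimates from Lemma \ref{prop:phase1-occupied} and the high-probability events $\cI$ and $\cH$; no union bound over $u$ is required for \eqref{eq:dispersed-errors} because the per-vertex bound holds deterministically on $\cM$.
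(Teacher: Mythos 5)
Your proposal is correct and follows essentially the same route as the paper: aggregate Lemma \ref{prop:phase1-occupied} over all $m$ with $\cV(m)\subset\cI\cap\cH$ via the law of total probability to get \eqref{eq:occupied-block-mistakes}, then deduce \eqref{eq:almost-exact-recovery} and \eqref{eq:dispersed-errors} deterministically on that event using the facts that unoccupied blocks contribute at most $\delta\log n$ errors each and that $\cN(u)$ meets at most $\kappa$ blocks. The only cosmetic difference is bookkeeping: the paper folds the occupied-block contribution into the same $\delta\log n$ per-block bound (since $M\le\delta\log n$ for large $n$), whereas you carry the $M\kappa$ term separately and absorb it using the strict slack in $\delta<\eta/\kappa$ --- both are valid, and no shrinking of $\delta$ is actually needed since $(\eta-\kappa\delta)\log n\to\infty$ dominates the constant $M\kappa$.
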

  \begin{proof} Fixing any $\eta>0$, we consider $\chi>0$ satisfying \eqref{eq:chi-fomula} and $\delta>0$ satisfying \eqref{eq:delta-formula} and $\delta<\eta/\kappa$, respectively.
  Given any $m$ such that $\cV(m)\subset\cI\cap\cH$, for occupied blocks,
  Proposition \ref{prop:phase1-occupied} yields the existence of a constant $M>0$ such that \$\pr_m\Big(\bigcap_{i\in V^\dagger}\{ |\{v \in V_i : \widehat{\sigma}(v) \neq \sigma_0(u_0) \sigma_0(v)\} | \leq M \}\Big) \geq \big(1- \eta_1\Delta n^{-c_1} \log n \big) \Big(1 - \frac{\eta_3 n^{-\frac{1}{8}}}{\chi \log n} \Big).\$ 
  Since the above bound is uniform over all $m$ such that $\cV(m) \subset \cI \cap \mathcal{H}$, we have
  \$
  &\pr\Big(\bigcap_{i\in V^{\dagger}}\big\{|v\in V_i\colon \widehat\sigma(v)\neq\sigma_0(u_0)\sigma_0(v)|\le \delta\log n\big\}\Big) \\
  &\quad \ge \sum_{m\colon \cV(m) \subset \cI \cap \cH} \pr_m\Big(\bigcap_{i\in V^{\dagger}}\big\{|v\in V_i\colon \widehat\sigma(v)\neq\sigma_0(u_0)\sigma_0(v)|\le \delta\log n\big\}\Big) \cdot \pr\big(\cV(m)\big) \\
  &\quad \ge \big(1- \eta_1\Delta n^{-c_1} \log n \big) \Big(1 - \frac{\eta_3 n^{-\frac{1}{8}}}{\chi \log n} \Big)\cdot  \pr\big(\cI\cap\cH\big) = 1-o(1), 
  \$
  where the last step holds by Propositions \ref{lem:visibility-d1-small-lambda} and \ref{lemma:connectivity}, and Lemma \ref{lem:B-upper-bound}. Thus, we have proven \eqref{eq:occupied-block-mistakes}.
  
  Since $\delta \log n > M$ for $n$ large enough, it follows that
  \begin{equation}
  \pr\Big(\bigcap_{i \in [n/\chi\log n]}\{ |\{v \in V_i : \widehat{\sigma}(v) \neq \sigma_0(u_0) \sigma_0(v)\} | \leq \delta \log n \}\Big) = 1-o(1).   \label{eq:block-mistakes} 
  \end{equation}
  
  On the one hand, if $\widehat\sigma$ makes fewer than $\delta\log n$ mistakes on $V_i$ for all $i\in[n/(\chi\log n)]$, then $\widehat\sigma$ makes fewer than $\delta n/\chi \le \eta n/(\chi\kappa)$ mistakes in $\mathcal{S}_{d,n}$. Thus, \eqref{eq:almost-exact-recovery} follows from \eqref{eq:block-mistakes}. On the other hand, if $\widehat\sigma$ makes fewer than $\delta\log n$ mistakes on $V_i$ for all $i\in[n/(\chi\log n)]$, then there will be fewer than $\delta\kappa\log n\le \eta\log n$ mistakes in all vertices' neighborhood since each neighborhood $\cN(u)$ intersects at most $\kappa$ blocks. Thus, \eqref{eq:dispersed-errors} also follows from \eqref{eq:block-mistakes}.
  \end{proof}
  
  \section{Phase II: Proof of exact recovery}\label{sec:proof-phase-II}
  Before proving Theorem \ref{theorem:exact-recovery}, we first show a concentration bound. We define vectors in $\R^4$,
  \#\label{eq:def-x-y}
  x = \lambda \nu_d \log n [a, 1-a,b,1-b ]/2, \quad y = \lambda \nu_d \log n [b, 1-b, a, 1-a]/2,
  \# 
  and random variables $\widetilde D=[D_1^+, D_1^-, D_{-1}^+, D_{-1}^-] \sim \text{Poisson} (x)$, and $X$ as a linear function of $\widetilde D$,
  \#\label{eq:def_X}
  X = -\log\big(\frac{a}{b} \big)\big(D_{1}^+ - D_{-1}^+ \big)-\log\big(\frac{1-a}{1-b} \big)\big(D_{1}^- - D_{-1}^- \big).
  \# 
  For any $t\in[0,1]$, let $D_t(x\| y) = \sum_{i\in[4]}(tx_{i} + (1-t)y_{i} - x_{i}^ty_{i}^{1-t})$ be an $f$-divergence. Let $D_+(x\| y)= \max_{t\in[0,1]}D_t(x\| y) = \max_{t\in[0,1]}D_t(y\| x)$ be the Chernoff-Hellinger divergence, as introduced by \cite{Abbe2015}. In particular, when $x$ and $y$ are defined in \eqref{eq:def-x-y}, the maximum is achieved at $t = 1/2$ and we have $D_+(x\|y) = \lambda \nu_d(1-\sqrt{ab} - \sqrt{(1-a)(1-b)})\log n$.
  \begin{lemma}\label{lem:bound-X}
  For any constants $\rho>0$ and $\eta>0$, it holds for $X$ defined in \eqref{eq:def_X} that
  \$
  \pr\big(X\ge -\rho\eta\log n \big) \le n^{-\lambda \nu_d(1-\sqrt{ab} - \sqrt{(1-a)(1-b)})+\rho\eta/2}.
  \$
  \end{lemma}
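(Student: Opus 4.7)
The plan is to prove the bound via the standard exponential Markov (Chernoff) inequality with parameter $t=1/2$, which is the value of $t$ that attains $D_+(x\|y)$ for Poisson vectors of the symmetric form \eqref{eq:def-x-y}. Concretely, for any $t>0$ I would write
\[
\pr(X \geq -\rho\eta\log n) \;\leq\; e^{t\rho\eta\log n}\,\mathbb{E}\!\left[e^{tX}\right],
\]
and then compute the moment generating function of $X$ explicitly. Since $\widetilde D = [D_1^+,D_1^-,D_{-1}^+,D_{-1}^-]$ has independent Poisson coordinates, the MGF factors into four pieces, one for each component.

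Next I would use the Poisson MGF identity $\mathbb{E}[e^{sY}] = \exp(\mu(e^s-1))$ for $Y\sim \mathrm{Poisson}(\mu)$ applied to each of the four coordinates, with $s$ chosen according to the coefficient of that coordinate in the definition \eqref{eq:def_X} of $X$. For example, the $D_1^+$ factor gives $\exp\bigl(\tfrac{\lambda\nu_d a\log n}{2}((b/a)^t-1)\bigr)$ and the $D_{-1}^+$ factor gives $\exp\bigl(\tfrac{\lambda\nu_d b\log n}{2}((a/b)^t-1)\bigr)$, and analogous expressions hold for the $D_1^-$, $D_{-1}^-$ coordinates with $(1-a),(1-b)$ in place of $a,b$. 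Summing the logarithms yields
\[
\log \mathbb{E}[e^{tX}] \;=\; \frac{\lambda\nu_d\log n}{2}\Big(a^{1-t}b^t + a^tb^{1-t} + (1-a)^{1-t}(1-b)^t + (1-a)^t(1-b)^{1-t} - 2\Big).
\]

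Setting $t=1/2$ collapses each pair into $2\sqrt{ab}$ and $2\sqrt{(1-a)(1-b)}$, so that
\[
\log \mathbb{E}[e^{X/2}] \;=\; -\lambda\nu_d\log n\,\bigl(1-\sqrt{ab}-\sqrt{(1-a)(1-b)}\bigr).
\]
Plugging this into the Chernoff bound with $t=1/2$ gives
\[
\pr(X \geq -\rho\eta\log n) \;\leq\; e^{\rho\eta\log n/2}\cdot n^{-\lambda\nu_d(1-\sqrt{ab}-\sqrt{(1-a)(1-b)})} \;=\; n^{-\lambda\nu_d(1-\sqrt{ab}-\sqrt{(1-a)(1-b)}) + \rho\eta/2},
\]
which is the claim. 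There is no real obstacle here—the main thing is the bookkeeping in the MGF computation and noticing that the symmetric structure of $x$ makes $t=1/2$ both the natural and optimal choice (consistent with the equality $D_+(x\|y) = \lambda\nu_d(1-\sqrt{ab}-\sqrt{(1-a)(1-b)})\log n$ attained at $t=1/2$, already stated in the excerpt). One could alternatively invoke a general CH-divergence Chernoff bound for multitype Poissons, but writing out the four-term product directly is both short and self-contained.
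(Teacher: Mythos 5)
Your proposal is correct and follows essentially the same route as the paper: apply the exponential Markov inequality to $X$, compute the moment generating function coordinate-wise via the Poisson MGF identity (the paper phrases the result as $\mathbb{E}[e^{tX}] = \exp(-D_t(y\|x))$ before optimizing over $t$, which at $t=1/2$ is exactly your four-term expression), and evaluate at $t=1/2$ to obtain the CH-divergence exponent. The bookkeeping in your MGF computation checks out, so there is nothing further to add.
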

  \begin{proof}
  We will apply the Chernoff bound on $X$. First, we compute its moment-generating function.
  For $\widetilde D=[D_1^+, D_1^-, D_{-1}^+, D_{-1}^-]=(D_i)_{i=1}^4\sim\text{Poisson}(x)$, the definition of $X$ in \eqref{eq:def_X} can be written as 
  \$
  X = -\sum_{i=1}^4 [D_i\log (x_i/y_i) - (x_i - y_i)].\$ 
  We recall that for $\xi\sim \text{Poisson}(\mu)$ and $s\in\R$, we have $\E[\exp(s\xi)]= \exp[\mu(e^s-1)]$. Thus, we have
  \$
  \E(e^{tX}) & = \E\Big[\exp\big(-t\sum_{i=1}^4 [D_i\log (x_i/y_i) - (x_i - y_i)]\big)\Big] \\
  & = \prod_{i=1}^4 \exp(t(x_i - y_i)) \cdot \E[\exp(t\log(y_i/x_i)D_i)] \\
  & = \prod_{i=1}^4 \exp(t(x_i - y_i) + x_i(e^{t\log(y_i/x_i)}-1)) \\
  & = \prod_{i=1}^4 \exp((t-1)x_i - ty_i + x_i^{1-t}y_i^t) \\
  & = \exp(-\sum_{i=1}^4((1-t)x_i + ty_i - x_i^{1-t}y_i^t)) = \exp(-D_t(y\|x)).
  \$
  Therefore, the Chernoff bound ensures that for any $t>0$, we have
  \$
  &\pr\big(X\ge -\rho\eta\log n \big) \le \frac{\E(e^{tX})}{e^{-t\rho\eta\log n}} = n^{t\rho\eta}\cdot\exp(-D_t(y\|x)).
  \$
  It follows that
  \$
  \pr\big(X\ge -\rho\eta\log n \big)
  &\le \inf_{t>0} \big\{n^{t\rho\eta}\cdot\exp(-D_t(y\|x))\big\} \\
  & \le n^{\rho\eta/2}\cdot\exp(-D_+(x\|y)) \\
  &= n^{-\lambda \nu_d(1-\sqrt{ab} - \sqrt{(1-a)(1-b)})+\rho\eta/2}. \qedhere 
  \$
  \end{proof}
  
  Now we present the proof of Theorem \ref{theorem:exact-recovery},  which ensures that Algorithm \ref{alg:almost-exact} achieves exact recovery.
  \begin{proof}[Proof of Theorem \ref{theorem:exact-recovery}]
  We first fix a constant $c>\lambda$ and let $\cE_0= \{|V| < c n\}$. Since $|V|\sim\text{Poisson}(\lambda n)$, the Chernoff bound in Lemma \ref{lem:Chernoff-poisson} gives that
  \$
  \pr(\cE_0^c) = \pr(|V|> c n) \le \exp\big(-\frac{(c-\lambda)^2n}{2c}\big) = o(1).
  \$
  For $\eta > 0$ to be determined, let $\cE_1$ be the event that $\widehat\sigma$ makes at most $\eta \log n$ mistakes in the neighborhood for all vertices (Phase I succeeds); that is,
  \$
  \cE_1=\bigcap_{u\in V}\big\{|v\in\cN(u)\colon \widehat\sigma(v)\neq\sigma_0(u_0)\sigma_0(v)|\le \eta\log n\big\}.
  \$
  Theorem \ref{thm:phase1-summary} ensures that $\pr(\cE_1)=1-o(1)$.
  Let $\cE_2'$ be the event that Algorithm \ref{alg:almost-exact} achieves exact recovery and $\cE_2$ be the event that all vertices are labeled correctly relative to $\sigma_0(u_0)$; that is, 
  \[
  \mathcal{E}_2' = \Big\{\bigcap_{u \in V}\{\widetilde{\sigma}(u) = \sigma_0(u)\}\Big\}\bigcup\Big\{\bigcap_{u \in V}\{\widetilde{\sigma}(u) = -\sigma_0(u)\}\Big\},\quad\mathcal{E}_2 = \bigcap_{u \in V}\{\widetilde{\sigma}(u) = \sigma_0(u_0) \sigma_0(u)\}.
  \]
  Then we have $\pr(\cE_2')\ge\pr(\cE_2)$. Since $\mathbb{P}(\mathcal{E}_0), \mathbb{P}(\mathcal{E}_1) = 1-o(1)$, it follows that
  \begin{equation}
  \pr(\cE_2^c) \leq \pr(\cE_2^c \cap \cE_1 \cap \cE_0) + \pr(\cE_1^c) + \pr(\cE_0^c) = \pr(\cE_2^c \cap \cE_1 \cap \cE_0) + o(1). \label{eq:phase-II-failure}
  \end{equation}
  
  Note that we analyze $\pr(\cE_2^c \cap \cE_1 \cap \cE_0)$ rather than $\pr(\cE_2^c \given \cE_1, \cE_0)$, in order to preserve the data distribution. Next, we would like to show that the probability of misclassifying a vertex $v$ is $o(1/n)$, and conclude that the probability of misclassifying \emph{any} vertex is $o(1)$. To formalize such an argument, sample $N \sim \text{Poisson}(\lambda n)$, and generate $\max\{N, cn\}$ points in the region $\mathcal{S}_{d,n}$ uniformly at random. Note that on the event $\mathcal{E}_0$, we have $\max\{N, cn\} = cn$. Label the points in order, and set $\widehat\sigma(u_0) = 1$. In this way, the first $N$ points form a Poisson point process with intensity $\lambda$. We can simulate Algorithm \ref{alg:almost-exact} on the first $N$ points. To bound the failure probability of Phase II, we can assume that any $v\in \{N+1, \dots, cn\}$ must also be classified (by thresholding $\tau(v,\sigma)$, computed only using edge/non-edge observations between $v$ and $u\in [N]$)).
  For $v \in [cn]$, let
  \[\mathcal{E}_2(v) = \{\widetilde{\sigma}(v) = \sigma_0(u_0) \sigma_0(v)\}.\]
  Then
  \[\mathcal{E}_2^c \cap \mathcal{E}_1 \cap \mathcal{E}_0 \subseteq \bigcup_{v=1}^{cn} \{\mathcal{E}_2(v)^c \cap \mathcal{E}_1 \cap \mathcal{E}_0\} \subseteq \bigcup_{v=1}^{cn} \{\mathcal{E}_2(v)^c \cap \mathcal{E}_1\},\]
  so that a union bound yields
  \begin{equation}\mathbb{P}\left(\mathcal{E}_2^c \cap \mathcal{E}_1 \cap \mathcal{E}_0 \right) \leq \sum_{v=1}^{cn} \mathbb{P}\left(\mathcal{E}_2(v)^c \cap \mathcal{E}_1 \right). \label{eq:union-bound}
  \end{equation}
  
  Fix $v \in [cn]$. In order to bound $\mathbb{P}\left(\mathcal{E}_2(v)^c \cap \mathcal{E}_1 \right)$, we classify $v$ according to running the \texttt{Refine} algorithm with respect to edge/non-edge observations between $v$ and $u \in [N]$. Analyzing $\mathcal{E}_2(v)^c \cap \mathcal{E}_1$ now reduces to analyzing robust Poisson testing.
  Let $W(v) = \{\sigma : \mathcal{N}(v) \to \{-1,0, 1\}\}$ and $d_H$ be the Hamming distance. We define the set of all estimators that differ from $\sigma_0$ on at most $\eta \log n$ vertices in $\cN(v)$, relative to $\sigma_0(u_0)$, as
  \$
  W'(v; \eta) &= \{\sigma\in W(v)\colon d_H(\sigma(\cdot), \sigma_0(u_0) \sigma_0(\cdot)) \le \eta\log n \} \\
  &= \{\sigma\in W(v)\colon d_H(\sigma_0(u_0)\sigma(\cdot),  \sigma_0(\cdot)) \le \eta\log n \}.\$
  Let $\cE_v$ be the event that there exists $\sigma \in W'(v; \eta)$ such that Poisson testing with respect to $\sigma$ fails on vertex $v$ when $\cE_2$ holds; that is,
  \#\label{eq:def-cE_v}
  \mathcal{E}_v &= \Big[\big\{\sigma_0(v) = 1 \big\} \bigcap \Big(\bigcup_{\sigma \in W'(v;\eta)}\big\{\tau(v, \sigma_0(u_0)\sigma) \leq 0 \big\}\Big)\Big] \nonumber\\
  &\qquad \bigcup \Big[\big\{\sigma_0(v) = -1 \big\} \bigcap \Big(\bigcup_{\sigma \in W'(v;\eta)}\big\{\tau(v, \sigma_0(u_0)\sigma) \geq 0 \big\}\Big)\Big].
  \#
  We provide some insights into the definition of $\cE_v$. Recall that $\sigma_{\textsf{genie}}(v)=\text{sign}(\tau(v,\sigma_0))$ defined in \eqref{eq:genie} picks the event with the larger likelihood between $\{\sigma_0(v)=1\}$ and $\{\sigma_0(v)=-1\}$. Thus, for example, suppose that $\sigma_0(v)=1$, then $\sigma_{\textsf{genie}}(v)$ makes a mistake when $\tau(v,\sigma_0)\le0$.
  We consider any $\sigma\in W'(v;\eta)$. Since $\sigma_0(u_0)\sigma(u)=\sigma_0(u)$ for most $u\in\cN(v)$, $d(v,\sigma_0(u_0)\sigma)$ and $d(v,\sigma_0)$ and thus $\tau(v,\sigma_0(u_0)\sigma)$ and $\tau(v,\sigma_0)$ are close. Formalizing the intuition, suppose that $\sigma_0(v) = 1$. If $\sigma_0(u_0) = 1$, then for $\cE_2$ to hold, we must classify $v$ as $+1$ to be correct relative to $\sigma_0(u_0)$. Thus, $v$ is misclassified relative to $\sigma$ whenever $\tau(v, \sigma) \leq 0$. If $\sigma_0(v) = 1$ and $\sigma_0(u_0) = -1$, then we must classify $v$ as $-1$. Then $v$ is misclassified relative to $\sigma$ whenever $\tau(v, \sigma) \geq 0$. As a summary, failure in the case $\sigma_0(v) = 1$ means $\tau(v,  \sigma_0(u_0)\sigma) \leq 0$.
  
  It follows that 
  \begin{equation}\mathbb{P}\left(\mathcal{E}_2(v)^c \cap \mathcal{E}_1\right) \leq \mathbb{P}(\cE_v). \label{eq:v-failure}
  \end{equation}
  We aim to show that for $\eta > 0$ sufficiently small, $\mathbb{P}(\cE_v) = n^{-(1 + \Omega(1))}$. Due to the uniform prior on $\sigma_0(v)$, we have 
  \#\label{eq:prob_Ev}
  \mathbb{P}(\cE_v) &= \frac{1}{2} \big[\mathbb{P}(\cE_v \mid \sigma_0(v) = 1) + \mathbb{P}(\cE_v \mid \sigma_0(v) = -1)\big].    
  \#
  We now bound the first term in \eqref{eq:prob_Ev}. Let $D \in \mathbb{Z}_+^4$ represent the ground-truth degree profile of vertex $v$. We consider a realization $D=d(v,\sigma_0)$ and the induced $\tau(v,\sigma_0)$. Next, we bound the distance $|\tau(v,\sigma_0(u_0)\sigma) - \tau(v,\sigma_0 )|$ for any $\sigma\in W'(v;\eta)$. We note that the edges and non-edges are fixed in a given graph $G$; that is, for any $\sigma\in W(v)$, we have 
  \$
  &d_{1}^+(u,\sigma_0(u_0)\sigma) + d_{-1}^+(u,\sigma_0(u_0)\sigma) = d_{1}^+(u,\sigma_0) + d_{-1}^+(u,\sigma_0), \\
  & d_{1}^-(u,\sigma_0(u_0)\sigma) + d_{-1}^-(u,\sigma_0(u_0)\sigma) = d_{1}^-(u,\sigma_0) + d_{-1}^-(u,\sigma_0). \$ Let $\alpha = d_{1}^+(u,\sigma_0(u_0)\sigma) - d_{1}^+(u,\sigma_0)=-(d_{-1}^+(u,\sigma_0(u_0)\sigma) - d_{-1}^+(u,\sigma_0) )$ and $\beta=d_{1}^-(u,\sigma_0(u_0)\sigma) - d_{1}^-(u,\sigma_0) = - (d_{-1}^-(u,\sigma_0(u_0)\sigma) - d_{-1}^-(u,\sigma_0) )$. It follows that
  \$
  \tau(v,\sigma_0(u_0)\sigma) - \tau(v,\sigma_0 ) &=\log\big(\frac{1-a}{1-b} \big)[d_{1}^-(u,\sigma_0(u_0)\sigma) - d_{1}^-(u,\sigma_0)- (d_{-1}^-(u,\sigma_0(u_0)\sigma) - d_{-1}^-(u,\sigma_0) )] \\
  & \quad + \log\big(\frac{a}{b} \big)[d_{1}^+(u,\sigma_0(u_0)\sigma) - d_{1}^+(u,\sigma_0)- (d_{-1}^+(u,\sigma_0(u_0)\sigma) - d_{-1}^+(u,\sigma_0) )]\\
  & = 2\Big[\alpha\cdot\log\big(\frac{a}{b} \big) +\beta\cdot \log\big(\frac{1-a}{1-b} \big)\Big].
  \$
  For any $\sigma\in W'(v;\eta)$, recalling that $d_H(\sigma_0(u_0)\sigma(\cdot), \sigma_0(\cdot))\le\eta\log n$, we have $|\alpha|\le\eta\log n$ and $|\beta|\le\eta\log n$. Thus, we define $\rho = 2\cdot[|\log(a/b)| + |\log((1-a)/(1-b))|]$ and have
  \$
  \big|\tau(v,\sigma_0(u_0)\sigma) - \tau(v,\sigma_0 ) \big| \le 2\Big[|\alpha|\cdot\big|\log\big(\frac{a}{b} \big)\big| +\big|\beta\big|\cdot \big|\log\big(\frac{1-a}{1-b} \big)\big|\Big]\le \rho\eta\log n.
  \$  
  
  We define a set $Y\subset \Z_+^4$ as follows:
  \$
  Y = \Big\{d=(d_{1}^+, d_{1}^-, d_{-1}^+, d_{-1}^-)\in \mathbb{Z}_+^4 : \log\big(\frac{a}{b} \big)\big(d_{1}^+ - d_{-1}^+ \big)+\log\big(\frac{1-a}{1-b} \big)\big(d_{1}^- - d_{-1}^- \big) \leq \rho \eta \log n\Big\}. 
  \$
  Conditioned on $\{\sigma_0(v) = 1\}$, Poisson testing fails relative to $\sigma$ when $\tau(v,\sigma_0(u_0)\sigma) \leq 0$. Thus, 
  \begin{align*}
  &\mathbb{P}\big(\cE_v \biggiven \sigma_0(v) = 1\big) \\
  &= \sum_{d\in\Z_+^4}\pr\Big(\{D=d\}\bigcap\big\{\min_{\sigma\in W'(v;\eta)} \tau(v,\sigma_0(u_0)\sigma) \leq 0\big\} \biggiven \sigma_0(v) = 1 \Big) \\
  & \le \sum_{d\in\Z_+^4}\pr\Big(\{D=d\}\bigcap\big\{\tau(v,\sigma_0 ) \le \rho\eta\log n \big\}\biggiven \sigma_0(v) = 1 \Big) \\
  & = \sum_{d \in Y} \mathbb{P}(D = d \mid \sigma_0(v) = 1). 
  \end{align*}
  To bound the above summation, we consider random variables $\widetilde D \sim \text{Poisson} (x)$ with $x$ defined in \eqref{eq:def-x-y} and $X$ defined in \eqref{eq:def_X}.
  Recalling that $D\sim \widetilde D$ conditioned on $\sigma_0(v)=1$, Lemma \ref{lem:bound-X} gives that
  \$
  \mathbb{P}\big(\cE_v \biggiven \sigma_0(v) = 1\big) &\le \sum_{d \in Y} \mathbb{P}\big(D = d \biggiven \sigma_0(v) = 1\big) \\
  &= \mathbb{P}\big(\widetilde D \in Y\big) \\
  &= \pr\big(X\ge -\rho\eta\log n \big)\\
  &\le n^{-\lambda \nu_d(1-\sqrt{ab} - \sqrt{(1-a)(1-b)})+\rho\eta/2}.
  \$
  Since $\lambda \nu_d(1-\sqrt{ab} - \sqrt{(1-a)(1-b)}) > 1$, we take $\eta = (\lambda \nu_d(1-\sqrt{ab} - \sqrt{(1-a)(1-b)}) -1)/\rho>0$ and
  conclude that
  \[\mathbb{P}\big(\cE_v \biggiven \sigma_0(v) = 1\big) \leq n^{-\frac{1}{2}(\lambda \nu_d(1-\sqrt{ab} - \sqrt{(1-a)(1-b)}) +1 )} = o(1/n).\]
  
  Similarly, we study the case conditioned on $\{\sigma_0(v) = -1\}$. Let $Y' = \{d=(d_{1}^+, d_{1}^-, d_{-1}^+, d_{-1}^-)\in \mathbb{Z}_+^4 : \log(a/b)(d_{1}^+ - d_{-1}^+ )+\log((1-a)/(1-b) )(d_{1}^- - d_{-1}^- ) \ge -\rho \eta \log n\}$. The definition of $\cE_v $ in \eqref{eq:def-cE_v} gives that
  \$
  &\mathbb{P}\big(\cE_v \biggiven \sigma_0(v) = -1\big) \\
  &= \sum_{d\in\Z_+^4}\pr\Big(\{D=d\}\bigcap\big\{\max_{\sigma\in W'(v;\eta)} \tau(v,\sigma_0(u_0)\sigma) \ge 0\big\} \biggiven \sigma_0(v) = -1 \Big) \\
  & \le \sum_{d\in\Z_+^4}\pr\Big(\{D=d\}\bigcap\big\{\tau(v,\sigma_0 ) \ge -\rho\eta\log n \big\}\biggiven \sigma_0(v) = -1 \Big) \\
  & = \sum_{d \in Y'} \mathbb{P}(D = d \mid \sigma_0(v) = -1). 
  \$
  For the same $\widetilde D=[D_1^+, D_1^-, D_{-1}^+, D_{-1}^-] \sim \text{Poisson} (\lambda \nu_d \log n [a, 1-a,b,1-b ]/2 )$, note that condition on $\sigma_0(v) = -1$, we have $D\sim [D_{-1}^+, D_{-1}^-,D_1^+, D_1^-]$.
  Thus, with the same $X$ defined in \eqref{eq:def_X}, we have \$
  \sum_{d \in Y'} \mathbb{P}(D = d \mid \sigma_0(v) = -1) &= \pr\big([D_{-1}^+, D_{-1}^-,D_1^+, D_1^-]\in Y'\big) \\
  & = \pr\Big(\log\big(\frac{a}{b} \big)\big(D_{-1}^+ - D_{1}^+ \big) + \log\big(\frac{1-a}{1-b} \big)\big(D_{-1}^- - D_{1}^- \big) \ge -\rho\eta\log n\Big) \\
  & = \pr(X\ge -\rho\eta\log n).
  \$
  Thus, similarly, Lemma \ref{lem:bound-X} gives that $\pr(\cE_v \given \sigma_0(v) = -1) \leq n^{-\frac{1}{2}(\lambda \nu_d(1-\sqrt{ab} - \sqrt{(1-a)(1-b)}) +1 )}$. 
  Therefore, the above bound together with \eqref{eq:union-bound}, \eqref{eq:v-failure}, and \eqref{eq:prob_Ev} implies $\pr(\cE_2^c \cap \cE_1 \cap \cE_0) = o(1)$. 
  Finally, we have $\pr((\cE_2')^c)\le\pr(\cE_2^c)=o(1)$ due to \eqref{eq:phase-II-failure}.
  \end{proof}
  
  \section{Impossibility: Proof of Theorem \ref{theorem:impossibility-general}}\label{sec:impossibility}
  In this section, we prove the impossibility of exact recovery under the given conditions and complete the proof of Theorem \ref{theorem:impossibility-general}. Recalling that Theorem \ref{theorem:impossibility} (Theorem 3.7 in \cite{Abbe2021}) has already established the impossibility when $\lambda>0$, $d\in\N$, and $0\le b< a\le 1$ satisfying \eqref{eq:IT-threshold}. Here, we extend the same result to the case where the requirement $a>b$ is dropped.
  \begin{proposition}\label{prop:impossible}
  Let $\lambda > 0$, $d \in \mathbb{N}$, and $a,b \in [0,1]$ satisfy \eqref{eq:IT-threshold} and let $G_n \sim \text{GSBM}(\lambda, n, a, b, d)$. Then any estimator $\widetilde{\sigma}$ fails to achieve exact recovery.     
  \end{proposition}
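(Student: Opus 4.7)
The plan is to adapt the proof of \cite[Theorem 3.7]{Abbe2021}, which established impossibility only under the additional assumption $a > b$, to cover the disassortative regime $a < b$ (the case $a = b$ being excluded by Definition~\ref{def:gsbm}). The key structural observation is that the Chernoff--Hellinger divergence governing the problem, $D_+(x\|y)/\log n = \lambda \nu_d(1 - \sqrt{ab} - \sqrt{(1-a)(1-b)})$, is invariant under the swap $a \leftrightarrow b$, so the information-theoretic barrier is identical in the assortative and disassortative regimes.

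I would first mimic \cite{Abbe2021} and analyze the genie-aided estimator that knows the labels of all vertices except a fixed $v$. For a typical $v$ whose visibility ball lies in the interior of $\mathcal{S}_{d,n}$, the degree profile $d(v,\sigma_0)$ is multi-type Poisson with mean $x$ (respectively $y$) when $\sigma_0(v) = +1$ (respectively $-1$); the Bayes-optimal decision is a likelihood-ratio test on $\tau(v,\sigma_0)$. When $a < b$ the threshold merely flips sign, but the underlying binary hypothesis test between $\text{Poisson}(x)$ and $\text{Poisson}(y)$ is unchanged. A Chernoff converse for Poisson testing then yields
\[
\mathbb{P}\bigl(\sigma_{\textsf{genie}}(v) \neq \sigma_0(v)\bigr) \;\geq\; n^{-D_+(x\|y)/\log n - o(1)} \;=\; \omega(1/n),
\]
where the last equality uses hypothesis~\eqref{eq:IT-threshold}.

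To promote this per-vertex bound into a global impossibility statement, I would select $\Theta(n/\log n)$ vertices with pairwise disjoint visibility balls and observe that the corresponding genie events are mutually independent conditional on the Poisson point process. A second-moment calculation then guarantees that, with probability $1 - o(1)$, at least two of these vertices are simultaneously genie-misclassified. Following \cite{Abbe2021}, the presence of two independent single-vertex flips that strictly increase the posterior mass precludes any estimator from recovering $\sigma_0$ up to a global sign, so exact recovery fails.

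The main obstacle is verifying that the Chernoff converse attains the exponent $\lambda\nu_d(1 - \sqrt{ab} - \sqrt{(1-a)(1-b)})$ uniformly in the sign of $a - b$. This reduces to checking that the optimizer in $D_+(x\|y) = \max_{t\in[0,1]} D_t(x\|y)$ is $t = 1/2$, which follows from the pairing symmetry $(x_1,x_2,x_3,x_4) = (y_3,y_4,y_1,y_2)$ between same-community and cross-community coordinates. With this symmetry in hand, every remaining step of the argument in \cite[Theorem 3.7]{Abbe2021} goes through verbatim, and the original restriction $a > b$ can be dropped.
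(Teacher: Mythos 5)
Your proposal is correct and follows essentially the same route as the paper: the paper's proof of Proposition \ref{prop:impossible} likewise observes that the argument of \cite[Theorem 3.7]{Abbe2021} (via Lemma 8.2 there and Lemma 11 of \cite{Abbe2015}) depends only on the CH-divergence condition, which is symmetric under $a \leftrightarrow b$, so the restriction $a > b$ can be dropped. You spell out more of the underlying machinery (genie-aided testing, disjoint visibility balls, second-moment argument) than the paper does, but the substance is the same.
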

  \begin{proof}
  We note that the analysis of Theorem \ref{theorem:impossibility} builds upon Lemma 8.2 in \cite{Abbe2021}, which itself relies on Lemma 11 from \cite{Abbe2015}. Lemma 11 provides the error exponent for hypothesis testing between Poisson random vectors, forming the basis for the impossibility result. Notably, only the CH-divergence criterion $\lambda \nu_d(1-\sqrt{ab} - \sqrt{(1-a)(1-b)})<1$ is needed to ensure the indistinguishability of the two Poisson distributions. Therefore, the impossibility in Theorem \ref{theorem:impossibility} can be readily extended to the case where the condition $a>b$ is dropped.
  \end{proof}
  
  Moreover, we show the impossibility of exact recovery for $d = 1$ and $\lambda<1$.
  \begin{proposition} \label{prop:lambda<1}
      When $d=1$, let $0<\lambda < 1$ and $a,b \in [0,1]$ and let $G_n \sim \text{GSBM}(\lambda, n, a, b, d)$. Then any estimator $\widetilde{\sigma}$ fails to achieve exact recovery. 
  \end{proposition}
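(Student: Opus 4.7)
The plan is to show that when $d=1$ and $\lambda<1$, with high probability the set of mutually visible vertex pairs in $V$ fails to form a connected graph, and then to invoke a label-flipping symmetry argument to conclude impossibility of exact recovery.

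First, I would analyze the gap structure of the one-dimensional Poisson point process on the torus $\cS_{1,n}$ with intensity $\lambda < 1$. Partition $\cS_{1,n}$ into $K = n/\log n$ disjoint intervals $B_1,\dots,B_K$ of length $\log n$. By the independence property of the Poisson point process (Definition \ref{def:PPP}), the indicators $Y_i = \mathds{1}\{B_i \cap V = \emptyset\}$ are i.i.d.\ Bernoulli with parameter $p = e^{-\lambda \log n} = n^{-\lambda}$. Let $R$ denote the number of maximal runs of consecutive empty blocks in the cyclic order on the torus. Writing $R = \sum_{i=1}^{K} Y_i(1-Y_{i-1})$ with cyclic indexing, $\mathbb{E}[R] = Kp(1-p) = (1-o(1))\,n^{1-\lambda}/\log n \to \infty$ since $\lambda < 1$. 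The summands depend only on neighboring pairs $(Y_{i-1},Y_i)$, so $\mathrm{Var}(R) = O(\mathbb{E}[R])$, and Chebyshev's inequality yields $\pr(R \ge 2) = 1 - o(1)$.

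Next, each maximal run of empty blocks is almost surely contained in a gap of length strictly greater than $\log n$ between two consecutive Poisson points in the cyclic arrangement (strictness uses that the Poisson process has no atoms). Distinct runs are separated by at least one non-empty block and hence correspond to distinct gaps, so on the event $\{R \ge 2\}$ there exist at least two gaps of length $> \log n$ in the cyclic arrangement, which partitions $V$ into at least two arcs. For any two points $u,v$ in distinct arcs, going from $u$ to $v$ along either direction of the torus requires crossing a gap of length $> \log n$, so the toroidal distance between $u$ and $v$ exceeds $\log n$ and they are not mutually visible. Hence the graph on $V$ whose edges are the mutually visible pairs is disconnected with probability $1-o(1)$.

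Finally, I would conclude by a standard symmetry argument. Condition on the vertex locations and the observed graph $G$, on the event that the mutual-visibility graph has connected components $C_1,\dots,C_k$ with $k \ge 2$. Since edge probabilities depend on the labels only through agreement versus disagreement, and there are no visible pairs (and hence no observable edges or non-edges) across distinct components, flipping $\sigma_0$ on any union of components yields a labeling with the same likelihood of producing $G$. Thus the posterior over $\sigma_0$ is uniform on $2^{k-1}$ distinct labelings modulo a global sign flip, and any estimator $\widetilde{\sigma}$ satisfies $\pr(A(\widetilde{\sigma},\sigma_0)=1 \mid k \ge 2) \le 1/2$. Combined with the first step, $\pr(A(\widetilde{\sigma},\sigma_0)=1) \le 1/2 + o(1)$, so exact recovery is impossible.

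The main obstacle I anticipate is the careful treatment of the gap structure on the torus: in particular, verifying that $R \ge 2$ reliably produces at least two disjoint gaps of length $> \log n$, rather than merged runs that correspond to a single gap. The Chebyshev concentration for $R$ and the label-flipping symmetry are both routine once this bookkeeping is in place.
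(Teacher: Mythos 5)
Your proposal is correct and follows essentially the same route as the paper: partition the interval into blocks of length $\log n$, note that each is empty independently with probability $n^{-\lambda}$, show that with high probability there are at least two non-adjacent empty stretches so that the torus splits into mutually invisible segments, and conclude by the label-flipping symmetry across segments. The only (minor) difference is in the concentration step --- you bound the number of maximal empty runs via a second-moment/Chebyshev argument, whereas the paper lower-bounds the probability of two non-adjacent empty blocks by a direct binomial-series computation; both are valid, and your explicit treatment of the toroidal gap geometry is a careful touch.
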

  \begin{proof}
      When $d=1$, we partition the interval $[-n/2, n/2]$ into $n/\log n$ blocks of length $\log n$ each. Notably, if there are $k\ge 2$ mutually non-adjacent empty blocks, the interval gets divided into $k\ge 2$ disjoint segments that lack mutual visibility. In such scenarios, achieving exact recovery becomes impossible as we can randomly flip the signs of one segment. Formally, suppose that there are $k$ segments, where the $i$th segment contains blocks $\{B_{j}\colon j \in \text{seg}(i)\}$ for $\text{seg}(i) \subset [n/\log n]$.
      Then for any $s \in \{\pm 1\}^k$, the labeling $\sigma_0$ has the same posterior probability as $\sigma(\cdot; s)$, defined as
      \[\sigma(v;s) = \sigma_0(v) \sum_{i \in [k]}  s_i \sum_{j \in \text{seg}(i)} \mathds{1}_{\{v \in B_j\}}.\]
      It follows that the error probability of the genie-aided estimator is at least $1 - 2/{2^k} = 1 - 1/{2^{k-1}}$, conditioned on there being $k$ segments. Let $\cX$ be the event of having at least two non-adjacent empty blocks (and thus two segments). The aforementioned observation means that if $\cX$ holds, the error probability is at least $1/2$, and thus the exact recovery is unachievable. 
  
      We now prove that $\pr(\cX)=1-o(1)$ if $\lambda<1$. Let $\cY_k$ be the event of having exactly $k$ empty blocks, among which at least two of them are non-adjacent. Recalling that each block is independently empty with probability $\exp(-\lambda \log n) = n^{-\lambda}$, we have 
      \$
          \pr(\cX) &= \sum_{k=2}^{n/\log n} \pr(\cY_k) = \sum_{k=2}^{n/\log n - 1} \Big(\binom{{n}/{\log n}}{k} - {n}/{\log n}\Big) \big(n^{-\lambda}\big)^k\big(1 - n^{-\lambda}\big)^{n/\log n-k} \\
          & \ge \sum_{k=1}^{n/\log n} \binom{{n}/{\log n}}{k} (n^{-\lambda})^k(1 - n^{-\lambda})^{n/\log n-k} - \frac{n}{\log n}(1 - n^{-\lambda})^{n/\log n}\sum_{k=1}^{n/\log n} 
          \big[n^{-\lambda}/(1 - n^{-\lambda})\big]^{k} \\
          & \ge 1- \big(1 - n^{-\lambda}\big)^{n/\log n} - \big(1 - n^{-\lambda}\big)^{n/\log n} \cdot\frac{n}{\log n} \cdot\frac{n^{-\lambda}}{1-2n^{-\lambda}} \\
          & \ge 1 - \big(1 - n^{-\lambda}\big)^{n/\log n}\cdot\big(1 + 2n^{1-\lambda}/\log n\big) \\
          & = 1 - \big[(1 - n^{-\lambda})^{n^{\lambda}}\big]^{n^{1-\lambda}/\log n}\cdot\big(1 + 2n^{1-\lambda}/\log n\big) \\
          &= 1 - O\big(\exp(-{n^{1-\lambda}/\log n})\cdot(1 + 2n^{1-\lambda}/\log n)\big) =  1- o(1),
      \$
  where the second inequality follows by calculating the Binomial series and the geometric series, and the last inequality holds since $1-2n^{-\lambda}\ge1/2$ for large enough $n$.
  \end{proof}
  In summary, by combining Propositions \ref{prop:impossible} and \ref{prop:lambda<1}, we complete the proof of Theorem \ref{theorem:impossibility-general}.
  
  \section{Further related work}\label{sec:related-work}
  Our work contributes to the growing literature on community recovery in random geometric graphs, beginning with latent space models proposed in the network science and sociology literature (see for example \cite{Handcock2007,Hoff2002}). There have been several models for community detection in geometric graphs. The most similar to the one we study is the Soft Geometric Block Model (Soft GBM), proposed by Avrachenkov et al \cite{Avrachenkov2021}. The main difference between their model and the GSBM is that the positions of the vertices are unknown. Avrachenkov et al \cite{Avrachenkov2021} proposed a spectral algorithm for almost exact recovery, clustering communities using a higher-order eigenvector of the adjacency matrix. Using a refinement procedure similar to ours, \cite{Avrachenkov2021} also achieved exact recovery, though only in the denser linear average degree regime.  
  
  A special case of the Soft GBM is the Geometric Block Model (GBM), proposed by Galhotra et al \cite{Galhotra2018} with follow-up work including \cite{Galhotra2022,Chien2020}. In the GBM, community assignments are generated independently, and latent vertex positions are generated uniformly at random on the unit sphere. Edges are then formed according to parameters $\{\beta_{i,j}\}$, where pair of vertices $u,v$ in communities $i,j$ with locations $Z_u, Z_v$ are connected if $\langle Z_u, Z_v \rangle \leq \beta_{i,j}$.
  
  In the previously mentioned models, the vertex positions do not depend on the community assignments. In contrast, Abbe et al \cite{Abbe2020b} proposed the Gaussian-Mixture Block Model (GMBM), where (latent) vertex positions are determined according to a mixture of Gaussians, one for each community. Edges are formed between all pairs of vertices whose distance falls below a threshold. A similar model was recently studied by Li and Schramm \cite{Li2023} in the high-dimensional setting. Additionally, P\'ech\'e and Perchet \cite{peche2020robustness} studied a geometric perturbation of the SBM, where vertices are generated according to a mixture of Gaussians, and the probability of connecting a pair of vertices is given by the sum of the SBM parameter and a function of the latent positions. 
  
  In addition, some works \cite{araya2019latent,eldan2022community} consider the task of recovering the geometric representation (locations) of the vertices in random geometric graphs as a form of community detection. Their setting differs significantly from ours. We refer to the survey \cite{duchemin2023random} for an overview of the recent developments in non-parametric inference in random geometric graphs.

  \section{Conclusions and future directions}\label{sec:future-directions}
  Our work identifies the information-theoretic threshold for exact recovery in the two-community, balanced, symmetric GSBM. A natural direction for future work is to consider the case of multiple communities, with general community membership probabilities and general edge probabilities. We believe that the information-theoretic threshold will again be given by a CH-divergence criterion, and a variant of our two-phase approach will achieve the threshold. 
  
  It would also be interesting to study other spatial network inference problems. For example, consider $\mathbb{Z}_2$-synchronization \cite{Bandeira2017,Javanmard2016,Abbe2020}, a signal recovery problem motivated by applications to clock synchronization \cite{giridhar2006distributed}, robotics \cite{Rosen2020}, and cryogenic electron microscopy \cite{Singer2011}. In the standard version of the problem, each vertex is assigned an unknown label $x(v) \in \{\pm 1\}$. For each pair $(u,v)$, we observe $x(u) x(v) + \sigma W_{uv}$, where $\sigma > 0$ and $W_{uv} \sim \mathcal{N}(0,1)$. 
  Now suppose that the vertices are generated according to a Poisson point process, and we observe $x(u) x(v) + \sigma W_{uv}$ only for mutually visible vertices, which models a signal recovery problem with spatially limited observations. An open question is then whether our two-phase approach can be adapted to this synchronization problem.

\vskip6pt
\paragraph*{Acknowledgements.} J.G. was supported in part by NSF CCF-2154100. X.N. and E.W. were supported in part by NSF ECCS-2030251 and CMMI-2024774.

\bibliography{references}
\bibliographystyle{abbrv}
\end{document}